\let\hat\widehat
\newcommand{\goto}{\rightarrow}
\newcommand{\cov}{\boldsymbol{\Sigma}}
\newcommand{\bs}{\boldsymbol}
\newcommand{\usedim}{p}
\newcommand{\numobs}{n}
\newcommand{\defn}{:=}
\newcommand{\real}{\ensuremath{\mathbb{R}}}
\newcommand{\thetastar}{\ensuremath{\bm{\beta}}}
\newcommand{\inprod}[2]{\ensuremath{\langle #1 , \, #2 \rangle}}
\newcommand{\NORMAL}{\mathcal{N}}
\newcommand{\Ind}{\ensuremath{\textbf{I}}}
\newcommand{\mySigma}{\bm{\Sigma}}
\newtheorem{thm}{Theorem} 
\newtheorem{lemma}{Lemma}
\newtheorem{prop}{Proposition}
\newtheorem{example}{Example}
\newtheorem{definition}{Definition}
\newtheorem{remark}{Remark}
\title{Randomized tests for high-dimensional regression: A more efficient and powerful solution \footnotetext{This work has been accepted in part in Neural Information Processing Systems (NeurIPS) 2020.}}
\author[1]{Yue Li} 
\author[2]{Ilmun Kim}
\author[1]{Yuting Wei\thanks{Y.~Wei is supported in part by the NSF grant CCF-2007911 and DMS-2015447.}}
\affil[1]{Department of Statistics \& Data Science\\
	Carnegie Mellon University}
\affil[2]{Department of Pure Mathematics and Mathematical Statistics\\ 
	University of Cambridge}
\date{\today}
\begin{document}
\maketitle 

\begin{abstract}
	We investigate the problem of testing the global null in the high-dimensional regression models
	when the feature dimension $p$ grows proportionally to the number of observations $n$.
	Despite a number of prior work studying this problem, whether there exists a test that is model-agnostic, efficient to compute and enjoys high power, still remains unsettled.  
	In this paper, we answer this question in the affirmative by leveraging the random projection techniques, and propose a testing procedure that blends the classical $F$-test with a random projection step.
	When combined with a systematic choice of the projection dimension, the proposed procedure is proved to be minimax \emph{optimal} and, meanwhile, reduces the computation and data storage requirements. 
	We illustrate our results in various scenarios when the underlying feature matrix exhibits an intrinsic lower dimensional structure (such as approximate block structure or has exponential/polynomial eigen-decay), and it turns out that the proposed test achieves sharp adaptive rates.
	Our theoretical findings are further validated by comparisons to other state-of-the-art tests on the synthetic data.
\end{abstract}

\noindent \textbf{Keywords:} high-dimensional regression, random projections, $F$-test, 
minimax optimality, proportional regime

\setcounter{tocdepth}{2}
\tableofcontents

\section{Introduction}

Many applications in modern science and engineering operate in the regime where the number of parameters is comparable to the number of observations. For each dimension, on average, there are only a few samples that are available for statistical inference. This new aspect brings challenges to many traditional tools and methodologies which are often built upon the assumption that the number of observations per dimension goes to infinity. 

The current paper concentrates on the high-dimensional linear regression model which is the most commonly used statistical approach to model linear relationships between variables. 
One set of fundamental problems is to conduct hypothesis testing on the linear parameters, in particular, whether there are signals presented in the observations, and if there are, testing the statistical significance of each feature. 
Despite a considerable amount of prior work devoted to the study of this problem, whether an optimal testing procedure can be designed for an arbitrary feature matrix in the proportional regime ($n$ and $p$ grows proportionally) still remains unsettled. In particular, previous work often assumes that the feature space has an intrinsic low dimensional structure --- cases including assuming the parameter being sparse (e.g.~\cite{donoho2004higher,arias2011global,zhang2014confidence,van2014asymptotically,javanmard2014confidence}), 
lying in a $L_{p}$-ellipse or convex cone (e.g.~\cite{lehmann2006testing,ingster2012nonparametric,baraud2002non,wei2020local,wei2019geometry}) --- or assumes that the feature matrix is generated from a standard Gaussian design in the proportional regime (see, e.g.~\cite{sur2019likelihood,javanmard2014hypothesis,su2016slope,miolane2018distribution,celentano2020lasso}).

This paper aims to tackle this fundamental problem in the setting where \emph{(i)} no prior knowledge is assumed on the coefficient vector; \emph{(ii)} the feature dimension and the observation size grow proportionally with each other; \emph{(iii)} the fewest number of assumptions are imposed on the design matrix while being adaptive to the cases when the design matrix enjoys simpler structures. 
Formally, given $n$ i.i.d.~data pairs $(\bm{x}_1, y_1),(\bm{x}_2, y_2),\ldots,(\bm{x}_\numobs, y_\numobs)$, with $\bm{x}_i \in \real^{\usedim}$ and $y_i\in \real$, generated from a linear model, we have 
\begin{align*}
y_i = \inprod{\bm{x}_i}{\thetastar} + \sigma z_i,
\end{align*}
for some unknown vector $\thetastar \in \real^\usedim$, where $\sigma^2$ is the constant noise level, and $z_i$'s are independent from the covariates. Other assumptions shall be made clear in the next section.
Here $\inprod{\cdot}{\cdot}$ denotes the standard inner product.
The first step in diagnosis of linear models is to test the joint significance of all covariates, namely, test the hypothesis
\begin{equation}\label{eq:test}
H_0: \boldsymbol{\beta}=\boldsymbol{0} \quad \text{versus} \quad H_1: \boldsymbol{\beta}\neq \boldsymbol{0}.
\end{equation}

The classical and commonly used approach for the above testing problem is based on a global $F$-test \cite{lehmann2006testing}, which is known to be most powerful when the feature dimension $p$ is held fixed as the number of samples $n$ increases. 
However, it is also widely known that the $F$-test loses its power as the ratio $p/n$ increases when $p$ is comparable but smaller to $n$ (see e.g.~\cite{zhong2011tests}). 
Moreover, when the dimension exceeds the sample size, the $F$-test is not applicable anymore due to the singularity of the sample covariance matrix.
In these scenarios, \cite{zhong2011tests} proposed a test based on a $U$-statistic of fourth order and \cite{cui2018test} followed up with a modified test that improves power moderately. 
However, these proposed tests have low power and can be rather sub-optimal in many scenarios. 
We therefore ask the following question:
\begin{itemize}
	\item[]  \emph{Whether there exists an analogue of the classical $F$-test that is both efficient to compute and enjoys an optimal power in the regime where $p/n \in (0,\infty)$?}
\end{itemize}

In this paper, we answer this question in the affirmative using the random projection techniques.
Random projection, also known as ``sketching'', is based on the idea of 
storing only a (smartly designed) sketched version of the original data (often of lower dimensions), and performing learning algorithms on the sketched version.
It is now a standard technique for reducing data storage and computational costs, and promoting privacy. 
These advantages have motivated a lot of studies on designing efficient and differential-private estimation schemes (e.g. \cite{bingham2001random,liu2005random,sarlos2006improved,pilanci2017newton}), however, the statistical behaviors of test statistics based on random projections are not as well understood (e.g., 
\cite{cuesta2007random,jacob2010gains,clemenccon2009auc,NIPS2011_4260}). 
The closest to our work is by \cite{NIPS2011_4260} where a random-projection based Hotelling's $T$-test is proposed for testing the two-sample mean problem, given independent Gaussian samples.

Having set up the problem, we introduce a novel testing procedure, termed \emph{sketched $F$-test} for the testing problem~\eqref{eq:test}.
Intuitively, this test can be viewed as a two-step procedure: first each sample point $\bm{x}_i$ is projected to a $k$-dimensional random subspace for a preselected dimension $k$ and a random subspace $S_k$; and then a classical $F$-statistic is performed to check if the linear coefficients are zero between the projected data matrix and response vector.

\paragraph{Evaluations:} 
To evaluate the efficiency and asymptotic power of our proposed test, 
we compute its asymptotic power function and compare this with the state-of-the-art tests (\cite{zhong2011tests,cui2018test}) in terms of asymptotic relative efficiency (ARE). We demonstrate our sketched $F$-test is computationally efficient and has increased power in various scenarios. 
In the case when the design matrix has a lower intrinsic dimension, the case where the random projections are mostly powerful for, we show that it is sufficient to choose the sketched dimension proportional to the statistical dimension of the design matrix (defined in the sequel), without any loss of statistical accuracy. 
We also demonstrate the advantages of using our proposed test over its state-of-the-art competitors with higher asymptotic power and improved performance on synthetic data.

\subsection{Our contributions}
The main contributions of this paper are summarized below, all of which are built upon a careful analysis of a sketched version of the classical $F$-test.

\begin{itemize}
	\item In Section~\ref{sec:test}, we introduce a sketched $F$-test which does not restrain the size of $n,p$ as in prior literature.
	The explicit characterizations of its asymptotic power function are provided in Theorem~\ref{thm:power-function}. 
	Compared to prior work on this problem, our theoretical results are model-agnostic --- generalizing to scenarios that are other than Gaussian design and Gaussian noise.
	
	\item We provide a systematic way of selecting the projection dimension based on the 
	intrinsic dimension of the population design matrix $\mySigma$, which is current lacking in
	the state-of-art testing procedures concerning random projections. 
	When $\mySigma$ is indeed of lower dimensions, which is the case underlying most applications, our proposed test yields adaptive testing rates that are minimax optimal, and fully preserves the signal in the original model. These results are summarized in our Theorem~\ref{thm:optimality} and Theorem~\ref{thm:power-3}.
	
	\item We demonstrate the superiority of the proposed approach over existing state-of-the-art methods in terms of computational efficiency and asymptotic relative efficiency. We refer the readers to Section~\ref{subsec:are} for theoretical justifications and Section~\ref{Sec:sim} for experimental studies respectively.
	
\end{itemize}

\subsection{Other related work}

Related to the global testing problem, there has been an intensive line of work studying procedures that identify non-zero coefficients in high dimensional regression models.
To provide theoretical justifications, these procedures often pose more stringent assumptions on the model itself, such as sparsity (e.g.~\cite{zhang2014confidence,javanmard2014confidence,carpentier2018minimax}), independence or positive dependence between $p$-values (e.g.~\cite{benjamini2001control,gavrilov2009adaptive}).
In addition, the resulting theoretical guarantees mainly focus on type I error control, without a characterization of the statistical power (e.g.~\cite{barber2015controlling,candes2018panning,guo2014further}).
As a matter of fact, the global testing problem considered in this paper is often regarded as a first stage analysis in practice (see~\cite{wu2011rare, madsen2009groupwise}) and is intrinsically easier than testing individual coefficients. 
Therefore it allows us to derive a refined analysis of its power under much weaker assumptions. 


\paragraph{Notation.} We use $\overset{d}{=}$ for two random variables that have the same distribution. Let $\Phi(\cdot)$ denote the CDF of $\NORMAL(0,1)$, and $z_{\alpha}$ denote the upper $\alpha$ quantile of $\NORMAL(0, 1)$.  The upper $\alpha$ quantile of $F$-distribution with degrees of freedom $(p, n-p)$ is denoted by $q_{\alpha,p,n-p}$. 
Moreover, the norm $\|\cdot\|_2$ stands for Euclidean norm for a vector, and spectral norm for a matrix.
Matrix Frobenuis norm is denoted by $\|\cdot \|_F$. We call $a_n\asymp b_n$ if there is a  universal constant $c_0$ such that $\frac{1}{c_0} \leq \frac{a_n}{b_n} \leq c_0$ for sufficiently large $n$, and $a_n \lesssim b_n$ if $a_n \leq c_0 b_n$ for sufficiently large $n$.

\section{Sketched $F$-test}\label{sec:test}
In this section, we formally introduce the proposed sketched $F$-test and describe our main theoretical results along with some necessary background. We start our discussion by reviewing the classical $F$-test and describe why it fails in the high-dimensional setting.

\subsection{Classical $F$-test} \label{Section: Global F-test}

We find it useful to first formulate the observation model in the matrix form. 
Let $\bm{y} = (y_1,\ldots,y_\numobs)^\top$ and $\bm{X} \in \real^{\numobs \times \usedim}$ be the matrix with rows $\bm{x}_1^\top,\ldots, \bm{x}_\numobs^\top$, and we can write 
\begin{align}
\label{eq:model}
\bm{y} = \bm{X} \thetastar + \sigma \bm{z}.
\end{align}
Given i.i.d.~observations $\{\bm{x}_i, y_i\}_{i=1}^n$ from model \eqref{eq:model} with $n > p$, the classical $F$-test statistic is defined as
\[
F = \frac{ \hat{\boldsymbol{\beta}}^\top (\bm{X}^\top \bm{X})  \hat{\boldsymbol{\beta}}/p
}{\hat{\sigma}^2},
\]
where $\hat{\bs{\beta}} := (\bm{X}^\top\bm{X})^{-1}\bm{X}^\top \bm{y}$ is the least square estimator and $\hat{\sigma}^2 \defn \|\bm{y} - \bm{X}\hat{\boldsymbol{\beta}}\|_2^2/(n-p)$ is an unbiased estimator of $\sigma^2$. Under the null hypothesis $H_0$, it is well-known that the $F$-test statistic follows the $F$-distribution with $(p, n-p)$ degrees of freedom, whereas under the alternative $H_1$, it follows a noncentral $F$-distribution with $(p, n-p)$ degrees of freedom with the noncentrality parameter $\bs{\beta}^\top (\bm{X}^\top \bm{X}) \bs{\beta}/ 2\sigma^2$. In this setup, the $F$-test rejects the null hypothesis if $F \geq q_{\alpha, p, n-p}$ and its theoretical properties have been well-established in classical settings \cite[see,][]{rao2007linear}. 

As in \cite{zhong2011tests} (and also in \cite{baraud2002model,brown2002asymptotic,bayati2011lasso,donoho2016high}), we consider a tractable model where the design matrix $\bm{X}$ is randomly generated: each row of the design matrix $\bm{x}_i \in \real^{\usedim}$ is independently drawn from a multivariate distribution with covariance  $\mySigma$. 
We start by assuming the design matrix follows a multivariate Gaussian distribution with general covariance $\mySigma$ and then generalize some of our results to incorporate other random designs. Random designs enable us to carry out our analysis in a refined manner leveraging tools from random matrix theory and large deviation theory. 

Under this random design framework, it is easily seen that the power of the $F$-test is determined by the signal strength $\boldsymbol{\beta^\top \Sigma \beta}$, which is proportional to the expected value of the noncentrality parameter. When this signal strength is of constant order, the testing problem becomes trivial in a sense that the null and alternative hypotheses can be easily distinguished in the limit. This motivates us and others to consider the \emph{local alternative} in which $\boldsymbol{\beta^\top \Sigma \beta}$ diminishes as the sample size goes to infinity. Such framework is standard in asymptotic theory \cite[see, e.g.][]{van2000asymptotic} and has been considered by \cite{zhong2011tests,steinberger2016relative,cui2018test} among others. Under this local alternative, the following lemma studies the asymptotic power of the $F$-test in the high-dimensional regime where $p/n \rightarrow \delta \in (0,1)$. This result is the key ingredient to Theorem~\ref{thm:power-function} in which we study the asymptotic power of the sketched version of the $F$-test (see also \cite{steinberger2016relative}).

\begin{prop}\label{lem:f-small-p}
	Suppose the design matrix $\bs{X}$ is generated from a multivariate Gaussian distribution with covariance matrix $\cov$, and the noise vector $\bs{z}\sim \NORMAL(\bs{0}, \Ind_n)$ is independent of $\bs{X}$.
	Assume $\boldsymbol{\beta^\top \Sigma \beta}=o(1)$ and $\delta_n = p/n\goto \delta\in (0, 1)$ as $n \goto \infty$, then the power of the classical $F$-test which is defined as $\Psi_n^F \defn P(F \geq q_{\alpha,p,n-p})$, satisfies
	\[
	\Psi_n^F - \Phi\left(-z_{\alpha} + \sqrt{\frac{(1-\delta)n}{2\delta}} \frac{\boldsymbol{\beta}^\top \boldsymbol{\Sigma} \boldsymbol{\beta}}{\sigma^2}\right) \goto 0.
	\]
\end{prop}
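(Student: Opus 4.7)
The plan is to condition on the design matrix $\bm{X}$ so that $F$ follows a noncentral $F$-distribution, derive an asymptotic-normal expansion at this conditional level, then unconditionally absorb $\bm{X}$ through concentration of the random noncentrality parameter. With $P_{\bm{X}} := \bm{X}(\bm{X}^\top\bm{X})^{-1}\bm{X}^\top$, the usual orthogonal decomposition of $\bm{y}$ gives, conditionally on $\bm{X}$,
\[
U \defn \bm{y}^\top P_{\bm{X}} \bm{y}/\sigma^2 \;\sim\; \chi^2_p(\lambda_n), \qquad V \defn \bm{y}^\top(\Ind_n - P_{\bm{X}})\bm{y}/\sigma^2 \;\sim\; \chi^2_{n-p},
\]
independently, with $\lambda_n = \bs{\beta}^\top\bm{X}^\top\bm{X}\bs{\beta}/\sigma^2$. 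Because $\bs{\beta}^\top\bm{x}_i \sim \NORMAL(0,\bs{\beta}^\top\cov\bs{\beta})$ i.i.d., we have $\sigma^2\lambda_n/(\bs{\beta}^\top\cov\bs{\beta}) \sim \chi^2_n$, so the $\chi^2_n$ CLT yields
\[
\lambda_n \;=\; \frac{n\,\bs{\beta}^\top\cov\bs{\beta}}{\sigma^2} \;+\; O_P\!\left(\frac{\sqrt n\,\bs{\beta}^\top\cov\bs{\beta}}{\sigma^2}\right).
\]

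Next, I would expand $F = (U/p)/(V/(n-p))$ to leading order. The local-alternative assumption $\bs{\beta}^\top\cov\bs{\beta} = o(1)$ combined with $p/n \to \delta$ gives $\lambda_n = o_P(p)$, so the chi-squared CLT applied to $U$ and $V$ (conditionally on $\bm{X}$) together with Slutsky's theorem applied to their ratio delivers
\[
\sqrt{\frac{p(n-p)}{2n}}\left(F - 1 - \frac{\lambda_n}{p}\right) \;\goto\; \NORMAL(0,1),
\]
where the normalization reflects $\mathrm{Var}(U/p) + \mathrm{Var}(V/(n-p)) \approx 2n/(p(n-p))$ to leading order. Specialized to the null ($\lambda_n \equiv 0$), the same display identifies the critical value as $\sqrt{p(n-p)/(2n)}\,(q_{\alpha,p,n-p}-1) \goto z_\alpha$.

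Combining these two displays gives
\[
\Psi_n^F \;=\; \Phi\!\left(-z_\alpha + \sqrt{\frac{p(n-p)}{2n}}\cdot\frac{\lambda_n}{p}\right) + o(1),
\]
and substituting the concentration of $\lambda_n$ reduces $\sqrt{p(n-p)/(2n)}\cdot\lambda_n/p$ to $\sqrt{n(1-\delta_n)/(2\delta_n)}\cdot\bs{\beta}^\top\cov\bs{\beta}/\sigma^2$ up to an $o_P(1)$ remainder, since the $O_P(\sqrt n\,\bs{\beta}^\top\cov\bs{\beta}/\sigma^2)$ fluctuation in $\lambda_n$, once multiplied by $\sqrt{(n-p)/(2np)}$, is only $O_P(\bs{\beta}^\top\cov\bs{\beta}/\sigma^2) = o_P(1)$. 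Letting $\delta_n \to \delta$ then concludes the proof. The main subtlety lies precisely in the interplay between the two layers of randomness---the design $\bm{X}$ through $\lambda_n$, and the conditionally chi-squared $U$ and $V$---and specifically in verifying that the CLT for $F$ holds uniformly enough in $\lambda_n$ to be integrated out and that the fluctuation of $\lambda_n$ is truly negligible at the $1/\sqrt n$-scale dictated by the null distribution. Both of these hinge on the local-alternative assumption $\bs{\beta}^\top\cov\bs{\beta} = o(1)$.
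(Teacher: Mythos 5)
Your proposal is correct and is essentially the paper's own argument: unpacking your noncentral chi--square $U=\|\bm{X}\bs{\beta}/\sigma+P_{\bm{X}}\bs{z}\|_2^2$ into its central part $\bs{z}^\top P_{\bm{X}}\bs{z}$, the cross term $2\bs{\beta}^\top\bm{X}^\top\bs{z}/\sigma$, and the noncentrality $\lambda_n$ reproduces exactly the paper's decomposition $T-\eta=T_1+(T_2-\eta)+T_3$, while your $\chi^2_n$ concentration of $\lambda_n$, null-case quantile expansion, and ratio CLT correspond to the paper's $T_2$ analysis, its appeal to Lemma~2.1 of Bai--Saranadasa, and its Slutsky step. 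The one extra burden your packaging creates --- needing the conditional CLT to hold uniformly enough in $\lambda_n$ to be integrated over $\bm{X}$, which you explicitly flag --- is avoided in the paper by handling the noncentrality unconditionally, but this is a presentational rather than substantive difference.
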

In our Appendix~\ref{Section: the proof of lemma on small p}, we give a proof of this result which significantly simplifies the proof of Theorem 2.1 in \cite{steinberger2016relative} and serves as a building block for our other results. 

\subsection{The sketched $F$-test}
It is clear from Proposition~\ref{lem:f-small-p} that the classical $F$-test is not applicable when $p>n$ and performs poorly when $p/n$ is close to one. The main issue arises from the high variance of estimating $\mySigma^{-1}$ by inverting the sample covariance matrix. In this section, we tackle this problem by leveraging the random projections or sketching techniques. The usual purpose of sketching is to conduct dimension reduction while preserving the sample pairwise distances, however, this technique is employed here to reduce the variance in estimating the population covariance matrix. We witness a significant gain in testing power in various high-dimensional scenarios. 

Our testing procedure is summarized in the Algorithm~\ref{alg:example}. 
\begin{algorithm}
	\caption{~Sketched $F$-test}
	\label{alg:example}
	\begin{algorithmic}
		\STATE {\bfseries Input:} data matrix $\bm{X}\in\mathbb{R}^{n\times p}$, response vector $\bm{y}\in\mathbb{R}^n$, a sketching dimension $k < n$
		\STATE {\bfseries Output:} testing result for linear model~\eqref{eq:test}.
		\STATE {\bfseries Step 1:} generate a sketching matrix $S_k \in \mathbb{R}^{p\times k}$ with i.i.d. $\mathcal{N}(0, 1)$ entries;
		\STATE {\bfseries Step 2:} compute the least square regression estimate
		\begin{align*}
		\hat{\boldsymbol{\beta}}^S \defn (S_k^T \bm{X}^\top \bm{X} S_k)^{-1}S_k^\top \bm{X}^\top \bm{y}\in\mathbb{R}^k;
		\end{align*}
		\STATE {\bfseries Step 3:} calculate the sketched $F$-test statistic 
		\begin{equation}\label{eq:f-test}
		F(S_k) \defn \frac{\bm{y}^\top\bm{X}S_k \hat{\boldsymbol{\beta}}^S /k}{\|\bm{y}-\bm{X}S_k\hat{\boldsymbol{\beta}}^S\|_2^2/(n-k)};
		\end{equation}
		\STATE {\bfseries Step 4:} if $F(S_k) \geq q_{\alpha,k,n-k}$, reject $H_0$; otherwise accept $H_0$.
	\end{algorithmic}
\end{algorithm}

A few remarks are in order. First throughout the paper, we assume that the projection dimension $k < \min\{\text{rank}(\bm{X}), \text{rank}(\cov)\}$. With this choice, when $S_k$ has i.i.d.~$\mathcal{N}(0,1)$ entries, 
\begin{equation}\label{prop:almost-surely-invertible}
S_k^\top \bm{X}^\top \bm{X} S_k\text{\emph{ is invertible almost surely,}}
\end{equation}
which is formally proved in Appendix~\ref{Section: the proof of invertibility}.
This fact guarantees Algorithm~\ref{alg:example} is well-defined even when $p$ is much larger than $n$. Also note that under $H_0$, for any given $S_k$ and any realization $\bm{X}$, the sketched $F$-test statistic~\eqref{eq:f-test} follows the $F$-distribution with degrees of freedom $(k, n-k)$, and thus the proposed test is a valid level $\alpha$ test. 
We call $S_k$ from Algorithm~\ref{alg:example} a Gaussian sketching matrix with sketching dimension $k$.

In the following, we define a quantity that plays a key role in our further development, namely 
\begin{equation}\label{eq:delta}
\Delta_k^2 := \boldsymbol{\beta}^\top  \boldsymbol{\Sigma} S_k(S_k^\top \boldsymbol{\Sigma} S_k)^{-1} S_k^\top \boldsymbol{\Sigma}\boldsymbol{\beta}.
\end{equation}
It can be shown that $\Delta_k$ indeed determines the asymptotic power of the sketched $F$-test. 
As a consequence of Proposition~\ref{lem:f-small-p} applying to the sketched dataset, we establish the following guarantee on the asymptotic power of the sketched $F$-test.
\begin{thm}\label{thm:power-function}
	Suppose the design matrix $\bm{X}$ is generated from a multivariate Gaussian distribution with covariance $\mySigma$, and the noise vector $\bs{z}\sim \NORMAL(\bs{0}, \Ind_n)$ is independent of $\bs{X}$. Assume $\bs{\beta}^\top\bs{\Sigma}\bs{\beta}=o(1)$ and $\rho_n = k/n\goto\rho\in(0, 1)$ as $n\goto\infty$. Then, for almost all sequences of sketching matrix $S_k$, the power function of the sketched $F$-test, that is $\Psi^S_n(S_k) = P(F(S_k) > q_{\alpha,k,n-k})$, satisfies
	\begin{align} \label{eq:asymp}
	\Psi^S_n(S_k) - \Phi\left(-z_{\alpha} + \sqrt{\frac{(1-\rho)n}{2\rho}} \frac{ \Delta_k^2}{\sigma^2}\right) \goto 0.
	\end{align}
\end{thm}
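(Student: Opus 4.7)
The plan is to show that, conditional on the sketching matrix $S_k$, the sketched $F$-test statistic $F(S_k)$ coincides exactly with the classical $F$-statistic for a reduced $k$-dimensional Gaussian regression problem; once this identification is made, Proposition~\ref{lem:f-small-p} applies directly with $(p,n)\leftrightarrow(k,n)$ and $\bs{\beta}^\top\mySigma\bs{\beta}\leftrightarrow\Delta_k^2$, and the only remaining work is to absorb a mild discrepancy in the effective noise level.

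To build the reduced model, I would write $\bm{X}=\bm{Z}\mySigma^{1/2}$ with $\bm{Z}$ having i.i.d.\ $\NORMAL(0,1)$ entries and decompose $\mySigma^{1/2}\bs{\beta}$ orthogonally with respect to the column space of $\mySigma^{1/2}S_k$, obtaining
\[
\mySigma^{1/2}\bs{\beta}=\mySigma^{1/2}S_k\widetilde{\bs{\beta}}_k+\bm{v},\qquad \widetilde{\bs{\beta}}_k\defn(S_k^\top\mySigma S_k)^{-1}S_k^\top\mySigma\bs{\beta},
\]
with $\bm{v}\perp \mathrm{col}(\mySigma^{1/2}S_k)$. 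A direct computation gives $\|\bm{v}\|_2^2=\bs{\beta}^\top\mySigma\bs{\beta}-\Delta_k^2$, so $\bm{y}=\bm{X}S_k\widetilde{\bs{\beta}}_k+\bm{Z}\bm{v}+\sigma\bm{z}$. Conditional on $S_k$, $\bm{Z}\bm{v}$ and $\bm{X}S_k=\bm{Z}\mySigma^{1/2}S_k$ are jointly Gaussian with zero cross-covariance (row-wise), hence independent; absorbing $\bm{Z}\bm{v}$ into the noise yields the equivalent representation $\bm{y}=\widetilde{\bm{X}}\widetilde{\bs{\beta}}_k+\widetilde{\sigma}\,\widetilde{\bm{z}}$, where $\widetilde{\bm{X}}\defn\bm{X}S_k$ has i.i.d.\ $\NORMAL(\bm{0},S_k^\top\mySigma S_k)$ rows, $\widetilde{\bm{z}}\sim\NORMAL(\bm{0},\Ind_n)$ is independent of $\widetilde{\bm{X}}$, and $\widetilde{\sigma}^2=\sigma^2+\bs{\beta}^\top\mySigma\bs{\beta}-\Delta_k^2$.

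Since the normal equations give $\bm{y}^\top\bm{X}S_k\hat{\bs{\beta}}^S=(\hat{\bs{\beta}}^S)^\top(\widetilde{\bm{X}}^\top\widetilde{\bm{X}})\hat{\bs{\beta}}^S$, the statistic in~\eqref{eq:f-test} is precisely the classical $F$-statistic for the reduced model. The hypotheses of Proposition~\ref{lem:f-small-p} are then verified for almost every $S_k$: $S_k^\top\mySigma S_k$ is invertible a.s.\ by~\eqref{prop:almost-surely-invertible}; the dimension ratio is $\rho_n=k/n\goto\rho\in(0,1)$; and the reduced signal strength is $\widetilde{\bs{\beta}}_k^\top(S_k^\top\mySigma S_k)\widetilde{\bs{\beta}}_k=\Delta_k^2\leq\bs{\beta}^\top\mySigma\bs{\beta}=o(1)$. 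Applying the proposition conditionally on $S_k$ yields, for almost every sequence $S_k$,
\[
\Psi^S_n(S_k)-\Phi\!\left(-z_\alpha+\sqrt{\frac{(1-\rho)n}{2\rho}}\,\frac{\Delta_k^2}{\widetilde{\sigma}^2}\right)\goto 0.
\]

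Finally I would replace $\widetilde{\sigma}^2$ by $\sigma^2$. Since $\widetilde{\sigma}^2\geq\sigma^2$ and $\widetilde{\sigma}^2-\sigma^2=o(1)$, the two $\Phi$ arguments differ by $\sqrt{n}\,\Delta_k^2(\widetilde{\sigma}^{-2}-\sigma^{-2})=O\!\left(\sqrt{n}(\bs{\beta}^\top\mySigma\bs{\beta})^2\right)$; this vanishes in the standard local-alternative scale $\bs{\beta}^\top\mySigma\bs{\beta}=O(n^{-1/2})$, and if instead $\sqrt{n}\,\Delta_k^2\goto\infty$ then both $\Phi$ values tend to $1$, so monotonicity and the squeeze theorem validate the substitution. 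The main technical step to execute carefully is the row-wise Gaussian independence between $\bm{Z}\bm{v}$ and $\bm{Z}\mySigma^{1/2}S_k$ that powers the entire reduction, along with verifying that the conditional convergence delivered by Proposition~\ref{lem:f-small-p} holds on the a.s.\ event where~\eqref{prop:almost-surely-invertible} and $\Delta_k^2=o(1)$ hold simultaneously; the remainder is bookkeeping.
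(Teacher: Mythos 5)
Your proposal is correct and follows essentially the same route as the paper: condition on $S_k$, use the Gaussian structure to recast the sketched problem as a classical $k$-dimensional regression with coefficient $\bs{\beta}^S$ and inflated noise $\nu^2=\sigma^2+\bs{\beta}^\top\bs{\Sigma}\bs{\beta}-\Delta_k^2$, and invoke Proposition~\ref{lem:f-small-p}. Your explicit orthogonal decomposition (which simultaneously proves $\Delta_k^2\leq\bs{\beta}^\top\bs{\Sigma}\bs{\beta}$ and the independence of the absorbed noise) and your careful justification for replacing $\nu^2$ by $\sigma^2$ are details the paper states without elaboration, but the argument is the same.
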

We emphasize that the approximation of $\Psi^S_n(S_k)$ to the normal distribution function is precise in the limit including all constant factors. This asymptotic expression allows us to compare the proposed test with existing competitors in terms of the asymptotic relative efficiency in Section~\ref{subsec:are}. It is also helpful to notice that conditional on $S_k$, the sketched $F$-test can be simply viewed as the original $F$-test applied to the projected data set $\bm{X}S_k$. Now we are ready to provide the proof of Theorem~\ref{thm:power-function}.

\begin{proof}[Proof of Theorem~\ref{thm:power-function}.] 
	In order to complete the proof, we need to check the conditions in Proposition~\ref{lem:f-small-p} under the sketched regression setting. For a fixed $S_k$, by the property of a conditional Gaussian distribution, we have
	\[
	y_i | (\boldsymbol{X}_i'S_k) \sim N\left(
	\boldsymbol{\beta}^\top \boldsymbol{\Sigma} S_k (S_k^\top \boldsymbol{\Sigma} S_k)^{-1} S_k^\top \boldsymbol{X}_i, \nu^2\right),\]
	with $\nu^2 \defn \sigma^2 + \boldsymbol{\beta}^\top \boldsymbol{\Sigma}\boldsymbol{\beta}-\Delta_k^2$. Additionally let us write $\boldsymbol{\beta}^S \defn (S_k^\top \boldsymbol{\Sigma} S_k)^{-1} S_k^\top \boldsymbol{\Sigma} \boldsymbol{\beta}$. 
	Indeed Algorithm~\ref{alg:example} aims to test whether
	\begin{equation}\label{eq:hypothesis_1}
	H_0^S: \boldsymbol{\beta}^S = \boldsymbol{0} \quad \text{versus} \quad H_1^S:  \boldsymbol{\beta}^S \neq \boldsymbol{0},
	\end{equation}
	for the new regression model 
	\begin{equation}\label{eq:model_sketching}
	y_i = \boldsymbol{X}_i'S_k \boldsymbol{\beta}^S + z_i^S,
	\end{equation}
	where $z_1^S, \dots, z_n^S$ are independent random errors with $\text{Var}(z_i^S) =\nu^2$. Furthermore, when $S_k^\top \boldsymbol{\Sigma} S_k$ is invertible, the problem stated in \eqref{eq:hypothesis_1} becomes equivalent to testing whether
	$$
	H_0^S: S_k^\top \boldsymbol{\Sigma}\boldsymbol{\beta}=0 \quad \text{versus} \quad H_1^S: S_k^\top \boldsymbol{\Sigma}\boldsymbol{\beta}\neq 0.
	$$
	It is shown in \cite{NIPS2011_4260} that $\Delta_k^2 \leq \boldsymbol{\beta}^\top\boldsymbol{\Sigma} \boldsymbol{\beta}$. 
	Then we can show $\Delta_k^2=o(1)$ and $\nu^2 = \sigma^2 + o(1)$. Putting pieces together with Proposition~\ref{lem:f-small-p} completes the proof.
\end{proof}

\subsection{Testing power comparisons}
\label{subsec:are}

With the asymptotic power function characterized in expression \eqref{eq:asymp}, we compare our sketched $F$-test with other existing tests for the global null \eqref{eq:test} and highlight the advantages of our projection-based approach. 
First recall that \cite{zhong2011tests} introduced a test based on a fourth-order $U$-statistic and considered the local alternative $\boldsymbol{\beta}^\top \boldsymbol{\Sigma} \boldsymbol{\beta} = o(1)$ as well as other regularity conditions including 
\begin{equation}\label{eq:zc-condition}
\text{tr}(\bs{\Sigma}^4) = o(\text{tr}^2\{\bs{\Sigma}^2\}).
\end{equation}
We refer to their test as ZC test for simplicity. Under this asymptotic setting, the authors showed that the power of ZC test, denoted by $\Psi^{ZC}_n$ satisfies 
\begin{align}\label{eq:ZCpower}
\Psi^{ZC}_n := \Phi\left(-z_{\alpha} + \frac{n\|\boldsymbol{\Sigma}\boldsymbol{\beta}\|_2^2}{\sigma^2\sqrt{2\text{tr}(\boldsymbol{\Sigma}^2)}}\right).
\end{align}
As a follow-up to \cite{zhong2011tests}, \cite{cui2018test} proposed another $U$-statistic that improves the computational complexity of the statistic in \cite{zhong2011tests}, while achieving the same local asymptotic power~\eqref{eq:ZCpower}. 
Given these explicit power characterizations \eqref{eq:asymp}  and \eqref{eq:ZCpower}, the rest of this section is dedicated to comparing these methods in terms of a classical measure: the so called ``Asymptotic Relative Efficiency'', which is explained as follows.

\paragraph{Asymptotic Relative Efficiency (ARE):} In asymptotic statistics literature, a common way for comparing performances between different testing procedures is based on their 
asymptotic relative efficiency (ARE) \citep[see, e.g.][]{van2000asymptotic}
. 
Given two level $\alpha$ tests $\phi_1$ and $\phi_2$, the relative efficiency of $\phi_1$ to $\phi_2$ is defined as the ratio $n_2/n_1$ where $n_1$ and $n_2$ are the sample sizes required for $\phi_1$ and $\phi_2$ to achieve the same power against the same alternative. The ARE is then defined as the limiting value of this relative efficiency. Given this definition and building on the power expressions \eqref{eq:asymp} and \eqref{eq:ZCpower}, the ARE of ZC test to our sketched $F$-test is given by the limit of 
\begin{align}\label{eq:are}
\text{ARE}_n(\Psi^{ZC}_n;\Psi^{S}_n) :=
\frac{\sqrt{n}}{\sqrt{\text{tr}(\boldsymbol{\Sigma}^2)}} \bigg/ \sqrt{\frac{1-\rho}{\rho}}  \frac{\Delta_k^2}{\|\boldsymbol{\Sigma}\boldsymbol{\beta}\|_2^2}.
\end{align}
From the definition, it is clear that 
\begin{align*}
\text{ARE}_n(\Psi^{ZC}_n;\Psi^{S}_n) < 1 ~\Rightarrow~ \text{\bf sketched $F$-test is preferred}.
\end{align*}
The rest of this section aims to examine cases where $\text{ARE}_n(\Psi^{ZC}_n;\Psi^{S}_n) < 1$ with high probability. 
To facilitate our analysis, we first consider the case when the scaled $\bs{\beta}$ satisfies the assumption below. 
Remark that we adopt a frequentist approach throughout the paper, and this assumption is made only to illustrate the performance of the proposed testing procedure in an average sense.

\vskip .8em

\noindent \textbf{Assumption (A).} \emph{the normalized vector $\bs{\Sigma}^{1/2}\bs{\beta} / \|\bs{\Sigma}^{1/2}\bs{\beta}\|_2$ is uniformly distributed on the $p$-dimensional unit sphere, which is independent of $S_k$.}

\vskip .8em

Intuitively, Assumption~\textbf{(A)} holds when there is no preferred direction of the alternatives in which the scaled $\bs{\beta}$ differs from the zero vector. This assumption is standard when there is no prior information available for the alternatives. In particular, \cite{NIPS2011_4260} imposes a similar assumption in the context of two-sample mean testing.

Under Assumption~\textbf{(A)} and the regularity condition~\eqref{eq:zc-condition}, the next proposition provides an upper bound for $\text{ARE}_n(\Psi^{ZC}_n;\Psi^{S}_n)$ that holds with high probability. 
\begin{prop}\label{prop:are}
	Under the condition~\eqref{eq:zc-condition} and Assumption \textbf{(A)}, the following inequality holds with probability $1-o(1)$ as $n\goto \infty$:
	\begin{align} \label{eq: upper bound}
	\text{\emph{ARE}}_n(\Psi^{ZC}_n;\Psi^{S}_n)  \leq \frac{4}{\sqrt{\rho(1-\rho)}}  \frac{\text{\emph{tr}}(\boldsymbol{\Sigma})}{\sqrt{\text{\emph{tr}}(\boldsymbol{\Sigma^2})}}  \frac{1}{\sqrt{n}}.
	\end{align}
\end{prop}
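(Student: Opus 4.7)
The plan is to reduce the desired bound on $\text{ARE}_n$ to a high-probability bound on the scalar ratio $\|\boldsymbol{\Sigma}\boldsymbol{\beta}\|_2^2/\Delta_k^2$, and then analyze that ratio through concentration of quadratic forms in vectors uniform on the sphere. Algebraic rearrangement of \eqref{eq:are} using $k=\rho n$ shows that \eqref{eq: upper bound} is equivalent to
$$
\frac{\|\boldsymbol{\Sigma}\boldsymbol{\beta}\|_2^2}{\Delta_k^2} \;\le\; \frac{4\,\text{tr}(\boldsymbol{\Sigma})}{k} \qquad \text{with probability } 1-o(1),
$$
so the remainder of the argument concentrates on this inequality.

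First I linearize the ratio by a change of variable. Let $\boldsymbol{v}:=\boldsymbol{\Sigma}^{1/2}\boldsymbol{\beta}/\|\boldsymbol{\Sigma}^{1/2}\boldsymbol{\beta}\|_2$, which by Assumption \textbf{(A)} is uniform on $S^{p-1}$ and independent of $S_k$. Setting $\boldsymbol{P}:=\boldsymbol{\Sigma}^{1/2}S_k(S_k^\top\boldsymbol{\Sigma} S_k)^{-1}S_k^\top\boldsymbol{\Sigma}^{1/2}$, which is an orthogonal projection of rank $k$ onto a random $k$-dimensional subspace of $\mathbb{R}^p$ independent of $\boldsymbol{v}$, direct substitution gives $\Delta_k^2=\|\boldsymbol{\Sigma}^{1/2}\boldsymbol{\beta}\|_2^2\,\boldsymbol{v}^\top\boldsymbol{P}\boldsymbol{v}$ and $\|\boldsymbol{\Sigma}\boldsymbol{\beta}\|_2^2=\|\boldsymbol{\Sigma}^{1/2}\boldsymbol{\beta}\|_2^2\,\boldsymbol{v}^\top\boldsymbol{\Sigma}\boldsymbol{v}$. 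The target then reduces to $\boldsymbol{v}^\top\boldsymbol{\Sigma}\boldsymbol{v}/\boldsymbol{v}^\top\boldsymbol{P}\boldsymbol{v}\le 4\,\text{tr}(\boldsymbol{\Sigma})/k$.

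Next I bound the denominator from below and the numerator from above. Because $\boldsymbol{P}$ is a rank-$k$ projection and $\boldsymbol{v}$ is uniform on the sphere independent of $\boldsymbol{P}$, rotational invariance gives $\boldsymbol{v}^\top\boldsymbol{P}\boldsymbol{v}\sim\text{Beta}(k/2,(p-k)/2)$, whose mean is $k/p$ and whose variance is of order $k(p-k)/(p^2(p+2))$; Chebyshev then yields $\boldsymbol{v}^\top\boldsymbol{P}\boldsymbol{v}\ge k/(2p)$ with probability $1-O(1/k)=1-o(1)$ since $k=\rho n\to\infty$. For the numerator, the classical fourth-moment identity for uniform vectors on $S^{p-1}$ produces $\mathbb{E}[\boldsymbol{v}^\top\boldsymbol{\Sigma}\boldsymbol{v}]=\text{tr}(\boldsymbol{\Sigma})/p$ and $\text{Var}(\boldsymbol{v}^\top\boldsymbol{\Sigma}\boldsymbol{v})\le 2\,\text{tr}(\boldsymbol{\Sigma}^2)/(p(p+2))$, so Chebyshev gives $\boldsymbol{v}^\top\boldsymbol{\Sigma}\boldsymbol{v}\le 2\,\text{tr}(\boldsymbol{\Sigma})/p$ with probability at least $1-2\,\text{tr}(\boldsymbol{\Sigma}^2)/\text{tr}(\boldsymbol{\Sigma})^2$. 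A union bound over these two events and substitution back into \eqref{eq:are} then delivers \eqref{eq: upper bound}.

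The hard part is ensuring the Chebyshev step for $\boldsymbol{v}^\top\boldsymbol{\Sigma}\boldsymbol{v}$ genuinely returns probability $1-o(1)$, i.e.\ that $\text{tr}(\boldsymbol{\Sigma}^2)=o(\text{tr}(\boldsymbol{\Sigma})^2)$. This does not follow from \eqref{eq:zc-condition} verbatim, so the argument must squeeze it out of \eqref{eq:zc-condition} together with the proportional regime: the immediate consequence $\|\boldsymbol{\Sigma}\|_2^2\le\sqrt{\text{tr}(\boldsymbol{\Sigma}^4)}=o(\text{tr}(\boldsymbol{\Sigma}^2))$ combined with the Cauchy--Schwarz-type inequality $\text{tr}(\boldsymbol{\Sigma}^2)\le\|\boldsymbol{\Sigma}\|_2\,\text{tr}(\boldsymbol{\Sigma})$ reduces the task to checking $\|\boldsymbol{\Sigma}\|_2/\text{tr}(\boldsymbol{\Sigma})=o(1)$. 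If this chain is too tight in a particular regime, one can instead replace Chebyshev by Lévy's inequality for Lipschitz functions on the sphere, whose effective exponent $\text{tr}(\boldsymbol{\Sigma})^2/(p\,\|\boldsymbol{\Sigma}\|_2^2)$ is likewise controlled by \eqref{eq:zc-condition}.
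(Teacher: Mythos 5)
Your proof is correct, and its skeleton is the same as the paper's: both reduce \eqref{eq: upper bound} via Assumption \textbf{(A)} to showing that the denominator $\Delta_k^2/\bs{\beta}^\top\bs{\Sigma}\bs{\beta} = \bs{v}^\top \bm{P}\bs{v}$ is at least $k/(2p)$ and the numerator $\|\bs{\Sigma\beta}\|_2^2/\bs{\beta}^\top\bs{\Sigma}\bs{\beta} = \bs{v}^\top\bs{\Sigma}\bs{v}$ is at most $2\,\mathrm{tr}(\bs{\Sigma})/p$, each with probability $1-o(1)$. Where you differ is in the concentration machinery. The paper writes $\bs{v}=\bs{Z}/\|\bs{Z}\|_2$ with $\bs{Z}\sim\NORMAL(\bs{0},\Ind_p)$ and invokes Bernstein-type bounds for Gaussian quadratic forms (its Lemma~\ref{lem:gaussian-concentration}) plus the law of large numbers for $\|\bs{Z}\|_2^2$, obtaining exponential tails; you instead use the exact Beta$(k/2,(p-k)/2)$ law of $\bs{v}^\top\bm{P}\bs{v}$ and the fourth-moment identity on the sphere, followed by Chebyshev, which gives only polynomial tails but is entirely sufficient for a $1-o(1)$ statement and is arguably more elementary. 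For the numerator the paper needs $\mathrm{tr}(\bs{\Sigma})/\|\bs{\Sigma}\|_F\goto\infty$ and extracts it from condition~\eqref{eq:zc-condition} via a dedicated inequality $\mathrm{tr}(\bs{\Sigma})/\|\bs{\Sigma}\|_F\geq(\mathrm{tr}^2(\bs{\Sigma}^2)/\mathrm{tr}(\bs{\Sigma}^4))^{1/8}$ (its Lemma~\ref{lem:matrix-norm}); your shorter chain accomplishes the same thing and does in fact close, so your closing worry is unfounded: from \eqref{eq:zc-condition}, $\|\bs{\Sigma}\|_2^2\leq\sqrt{\mathrm{tr}(\bs{\Sigma}^4)}=o(\mathrm{tr}(\bs{\Sigma}^2))$ and $\mathrm{tr}(\bs{\Sigma}^2)\leq\|\bs{\Sigma}\|_2\,\mathrm{tr}(\bs{\Sigma})$ together give $\|\bs{\Sigma}\|_2=o(\mathrm{tr}(\bs{\Sigma}))$ and hence $\mathrm{tr}(\bs{\Sigma}^2)/\mathrm{tr}^2(\bs{\Sigma})\leq\|\bs{\Sigma}\|_2/\mathrm{tr}(\bs{\Sigma})=o(1)$, so Chebyshev already returns probability $1-o(1)$ and no L\'evy-type refinement is needed. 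The only cosmetic looseness, shared equally with the paper, is the identification of $k/n$ with its limit $\rho$ in the algebraic rearrangement.
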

We note that the condition~\eqref{eq:zc-condition} is required only for the ZC test, not for our sketched $F$-test. It essentially means that the eigenvalues of $\bs{\Sigma}$ should not decay too fast.  More importantly, if this eigenvalue condition is violated, then ZC test may not be a valid level $\alpha$ test even asymptotically. In such a case, it is not meaningful to compare the power of the given tests. In sharp contrast, the power expression \eqref{eq:asymp} for the sketched $F$-test holds regardless of the eigenvalue condition.

Building on Proposition~\ref{prop:are}, it is natural to ask the following questions. 

\paragraph{How do we select the dimension of $S_k$?} 
Notice that the upper bound of $\text{ARE}_n(\Psi^{ZC}_n;\Psi^{S}_n)$ is maximized when $\rho = 1/2$. In other words, when there is no extra information on the model, $k=\lfloor n/2\rfloor$ is a good choice of the sketching dimension to obtain higher asymptotic power. A similar choice was recommended by \cite{NIPS2011_4260} for the sketched version of Hotelling's $T^2$ test. In Section~\ref{sec:low-rank}, we explain how to leverage the information of $\bs{\Sigma}$ and further improve the power.

\paragraph{When~does~the~sketched~$F$-test~have~higher~power?} 
Notice that the upper bound in expression~\eqref{eq: upper bound} depends on $\text{tr}(\bs{\Sigma}) / \sqrt{\text{tr}(\bs{\Sigma^2})}$, which roughly measures the rate of decay of the eigenvalues. Recall that the sketched $F$-test is preferred over ZC test when $\text{ARE}_n(\Psi^{ZC}_n;\Psi^{S}_n)$ is less than one. That means, the sketched $F$-test becomes more favorable to ZC test in terms of power when $\text{tr}(\bs{\Sigma}) / \sqrt{\text{tr}(\bs{\Sigma^2})}$ grows slower than $1/\sqrt{n}$.


With the recommended choice of $k=\lfloor n/2\rfloor$, we make the upper bound~\eqref{eq: upper bound} be more concrete in the following two examples. Derivations of them can be found in Appendix~\ref{section:first two examples}.
\begin{example}\label{exp-1}
	For a positive sequence $t_n>0$, let $s:=s_n$ be another sequence that satisfies $\sqrt{s} \leq (1-\epsilon) t_n \sqrt{n} / 8$ for a small constant $\epsilon \in (0,1)$. Suppose that $\boldsymbol{\Sigma}$ can be well-approximated by a rank $s$ matrix in the sense that
	\begin{equation*}
	\lambda_1+ \dots + \lambda_s \geq (1-\epsilon) \cdot \text{\emph{tr}}(\boldsymbol{\Sigma}).
	\end{equation*}
	Then we have $\text{\emph{ARE}}_n(\Psi^{ZC}_n;\Psi^{S}_n)  \leq t_n$ with probability $1-o(1)$.
	In particular, if the covariance matrix is well-approximated by a $\sqrt{n}$ dimensional matrix, then one can take $t_n = 1/n^{1/4}$ and therefore 
	\begin{align*}
	\text{\emph{ARE}}_n(\Psi^{ZC}_n;\Psi^{S}_n)  \leq \frac{1}{n^{1/4}}.
	\end{align*}
\end{example}
\begin{example}\label{exp-2}
	When $\lambda_i(\bs{\Sigma}) \propto i^{-1/2}$ and $p/n \goto \rho$, we have that $\text{\emph{ARE}}_n(\Psi^{ZC}_n;\Psi^{S}_n) \leq 16\sqrt{\frac{\rho}{\log p}}$ with probability $1-o(1)$, which has a $0$ limit as $n,p\goto\infty$.
\end{example}

\section{Optimal guarantees for structured designs}
\label{sec:low-rank}

In this section, we discuss the case when $(\bs{\beta}, \cov)$ are not completely full dimensional, but instead have intrinsically lower dimensional structure, 
which is indeed the case underlying most applications (such as \cite{fienup1982phase,sundar2003skeleton,candes2010matrix,goldberg1992using,halko2011finding}). 
We first define our measure of intrinsic dimension $r$ in Section~\ref{sec:intrinsic-dim}. We then justify the optimality of choice $k=O(r)$ in two ways: in Section~\ref{sec:minimax}, we discuss the minimax optimality within the dim-$r$ class; in Section~\ref{sec:power}, we show that this choice fully preserves the signal strength and yields a non-random power expression of sketched $F$-test. In Section~\ref{sec:examples}, we demonstrate the consequences of the above results with several examples.

\subsection{Intrinsic dimensions}\label{sec:intrinsic-dim}
Let us start by introducing some notation to describe the spectral structure of $\bs{\Sigma}$. First denote the singular value decomposition of $\bs{\Sigma}$ as $\bs{\Sigma}=\bs{U}\bs{\Lambda}\bs{U}^\top$, where the diagonals of $\bs{\Lambda}$ are aligned in descending order. Write $\bs{\Lambda} = \text{diag}(\lambda_1, \dots, \lambda_p)$. Also define a rotated version of our signal as $\widetilde{\bs{\beta}} = \bs{U}^\top\bs{\beta}$. Note that $\bs{\Lambda}$ and $\widetilde{\bs{\beta}}$ describe variance and ``coefficient'' of the model in terms of orthogonalized feature dimensions, and are critical in determining the intrinsic structure of the model.

In the following, let us formally define the intrinsic dimension of the problem. 
\begin{definition} \label{def: intrinstic dim}
	We say model~\eqref{eq:model} has intrinsic dimension up to $r$, if we can find $\eta=o(1)$ and $r\leq p$, such that
	\begin{equation}
	\begin{aligned}\label{eq:intrinsic-dim}
	& \left(\frac{1}{r}\sum_{i=1}^r \widetilde{\beta}_i^2 \right) \cdot \left(\sum_{i=r+1}^p \lambda_i\right) + \sum_{i=r+1}^p \widetilde{\beta}_i^2 \lambda_i  \leq  \eta \bs{\beta}^\top\bs{\Sigma}\bs{\beta}; \\
	& \left( \frac{1}{r}\sum_{i=1}^r \widetilde{\beta}_i^2 + \frac{1}{p-r}\sum_{i=r+1}^p \widetilde{\beta}_i^2\right) \cdot r\lambda_{r+1} \leq \eta \bs{\beta}^\top\bs{\Sigma}\bs{\beta}.
	\end{aligned}
	\end{equation}
	Denote the collection of such $(\bs{\beta},\bs{\Sigma})$ as $\mathcal{D}(r)$.
\end{definition}
Here quantities $\eta$ and $r$ are both sequences of parameters indexed by $p$. 
To understand Definition~\ref{def: intrinstic dim}, suppose each $\widetilde{\beta}_i$ is of the same order, the above conditions boil down to
\begin{align*}
\sum_{i=r+1}^p \lambda_i \leq \eta \sum_{i=1}^p \lambda_i \quad \text{and} \quad 
r\lambda_{r+1} \leq \eta \sum_{i=1}^p \lambda_i.
\end{align*}
We refer readers to the examples in Section~\ref{sec:examples} for illustrations.

\subsection{Minimax optimality}\label{sec:minimax}
In the following, we describe our results under the classical minimax testing framework (\cite{ingster2012nonparametric}) and Gaussian design model~\eqref{eq:model}. Given observations $(\bm{X}, \bm{y})$, consider the problem of testing 
$$
H_0: \bs{\beta} = \bs{0} \quad \text{versus} \quad H_1: (\bs{\beta},\bs{\Sigma}) \in \Theta_r(\tau),
$$
in which the alternative space $\Theta_r(\tau)$ is specified by
$$\Theta_r(\tau) = \{ (\bs{\beta},\bs{\Sigma}) \in \mathcal{D}(r): \bs{\beta}^\top\bs{\Sigma}\bs{\beta} \geq \tau^2\},~~\text{for some}~\tau > 0.
$$
Note that the alternative is measured in terms of the Mahalanobis norm instead of the $\ell_2$ norm. We call $\psi$ a level-$\alpha$ test function, if $\psi$ is a measurable mapping from all possible values of $(\bm{X}, y)$ to $\left\lbrace 0, 1\right\rbrace$, and satisfy
$\mathbb{E}\left[\psi\vert H_0 \right] \leq \alpha$. The Type II error of $\psi$ and the minimax Type II error over $\Theta_r(\tau)$ are defined as
\[
r(\psi, \bs{\beta}, \bs{\Sigma}) :=  \mathbb{E}_{\bs{\beta},\bs{\Sigma}}\left[1 - \psi\right]; \quad \mathcal{R}_r(\tau) := \inf_{\psi} \sup_{(\bs{\beta},\bs{\Sigma})\in\Theta_r(\tau)} r(\psi, \bs{\beta},\bs{\Sigma}),
\]
in which the infimum is taken over all level-$\alpha$ test functions. We say level-$\alpha$ test $\psi^*$ is rate optimal with radius $\epsilon_n$, if for any $\gamma \in (\alpha, 1)$, there exists universal constant $c$, such that the following hold:
\begin{enumerate}
	\item[(i)](lower bound) when $\tau_n \leq c \epsilon_n$, $\mathcal{R}_r (\tau_n) \geq \gamma$,
	\item[(ii)](upper bound) 
	when $\tau_n/\epsilon_n \goto \infty$, for any $(\bs{\beta}_n,\bs{\Sigma}_n) \in \Theta_r(\tau_n)$, we have $r(\psi^*, \bs{\beta}_n,\bs{\Sigma}_n) = o_P(1)$.
\end{enumerate}

We are now able to state our optimality result in terms of the testing radius $\epsilon_n.$
\begin{thm}\label{thm:optimality}
	The sketched $F$-test is minimax rate optimal over $(\bs{\beta},\bs{\Sigma}) \in \mathcal{D}(r)$ with radius
	\begin{equation}
	\label{EqnRadius}
	\epsilon_n^2 = \frac{r^{1/2}}{n},
	\end{equation}
	and the upper bound is reached by choosing a sketching dimension $ar \leq k \leq br$ with any $1 < a \leq b$.
\end{thm}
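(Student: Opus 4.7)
My plan is to prove the two directions separately: achievability by the sketched $F$-test with $k \asymp r$, and a matching minimax lower bound over $\mathcal{D}(r)$.

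For the upper bound, I would reduce everything to a high-probability lower bound on the sketched-signal quantity $\Delta_k^2$ from~\eqref{eq:delta}. Specifically, I first claim that for $k = \Theta(r)$ with $k > ar$, with probability $1-o(1)$ over the Gaussian sketch $S_k$, $\Delta_k^2 \geq (1-o(1))\bs{\beta}^\top\bs{\Sigma}\bs{\beta}$. The argument splits the spectral decomposition $\bs{\Sigma}=\bs{U}\bs{\Lambda}\bs{U}^\top$ into its leading-$r$ and trailing-$(p-r)$ blocks and invokes both inequalities in~\eqref{eq:intrinsic-dim}: the first controls the total trailing mass, the second its top eigenvalue, and together they ensure the trailing block's contribution to $S_k^\top\bs{\Sigma} S_k$ and to $S_k^\top\bs{\Sigma}\bs{\beta}$ is negligible, while on the leading block the restriction of $S_k$ is a Gaussian $k\times r$ matrix with $k > r$, so standard Wishart concentration gives a well-conditioned leading Gram and $\Delta_k^2$ recovers essentially the full signal. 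Once this is in hand, I would use non-asymptotic tails of the noncentral $F(k,n-k,\lambda)$ distribution via Laurent--Massart-style bounds on the numerator and denominator chi-squares---this replaces the asymptotic Theorem~\ref{thm:power-function} since here $k/n \goto 0$ is allowed---to show rejection with probability $1-o(1)$ whenever $n\Delta_k^2/\sigma^2 \gg \sqrt{k}$. Combined with $k\asymp r$ and the lower bound on $\Delta_k^2$, this yields rejection exactly when $\bs{\beta}^\top\bs{\Sigma}\bs{\beta} \gg \sqrt{r}/n = \epsilon_n^2$.

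For the matching lower bound I would use the standard chi-squared (Ingster) method against a single well-chosen prior. Take $\bs{\Sigma}$ to be (a vanishing perturbation of) the diagonal matrix with $r$ ones and $p-r$ zeros so that $(\bs{\beta},\bs{\Sigma}) \in \mathcal{D}(r)$ for any $\bs{\beta}$ supported on the first $r$ coordinates, and put the Gaussian prior $\bs{\beta} \sim \NORMAL(\bs{0}, (\tau^2/r)\,\mathrm{diag}(\Ind_r,\bs{0}))$ on the alternative. Marginalizing $\bs{\beta}$ leaves $\bm{y}\mid \bm{X} \sim \NORMAL(\bs{0},\ \sigma^2\Ind_n + (\tau^2/r)\bm{X}_{(r)}\bm{X}_{(r)}^\top)$, where $\bm{X}_{(r)}$ denotes the first $r$ columns of $\bm{X}$. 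A direct Gaussian chi-squared computation, combined with the Wishart trace bound $\mathbb{E}\,\mathrm{tr}((\bm{X}_{(r)}^\top\bm{X}_{(r)})^2) \lesssim n^2 r$ in the regime $r = o(n)$, gives $\mathbb{E}\,\chi^2(P_\pi\|P_0) \lesssim \tau^4 n^2/(r\sigma^4)$, so Le Cam's inequality forces the minimax Type II error to stay bounded away from zero whenever $\tau^2 \lesssim \sqrt{r}/n$, as required.

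The hard part will be the first half of the upper bound: lower bounding the random quadratic form $\Delta_k^2 = \bs{\beta}^\top\bs{\Sigma} S_k(S_k^\top\bs{\Sigma} S_k)^{-1}S_k^\top\bs{\Sigma}\bs{\beta}$ by $(1-o(1))\bs{\beta}^\top\bs{\Sigma}\bs{\beta}$. The inverse $(S_k^\top\bs{\Sigma} S_k)^{-1}$ couples the Gaussian sketch across the entire spectrum of $\bs{\Sigma}$, and Definition~\ref{def: intrinstic dim} provides only averaged (not spectral-gap) information at index $r$; controlling the cross-block perturbations and reconciling both conditions in~\eqref{eq:intrinsic-dim} simultaneously, without any explicit gap between $\lambda_r$ and $\lambda_{r+1}$, is where the real technical work lies.
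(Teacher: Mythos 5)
Your proposal follows essentially the same route as the paper: the crux is the high-probability bound $\Delta_k^2 \geq (1-o(1))\,\bs{\beta}^\top\bs{\Sigma}\bs{\beta}$ for $k\asymp r$ (the paper's Lemma~\ref{lem:minimax}, proved via the same leading/trailing spectral-block split you describe, implemented there through a Lagrange-multiplier choice of $\bs{\xi}$ and an eigenvalue-lifting trick), followed by a non-asymptotic Type II error bound for the noncentral $F$-statistic at the threshold $n\Delta_k^2 \gg \sqrt{k}$ (the paper uses Chebyshev with noncentral-$F$ moments where you propose Laurent--Massart tails), and a Le Cam chi-square lower bound against a prior supported on the first $r$ coordinates with $\bs{\Sigma}=\mathrm{diag}(\Ind_r,\bs{0})$. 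The only differences are cosmetic: you compute the $\chi^2$-divergence explicitly for a Gaussian prior (which needs the standard fix that such a prior sits in $\Theta_r(\tau)$ only up to an event of vanishing probability), whereas the paper reduces to the $r$-dimensional identity-covariance problem and cites the prior constructions of Collier et al.\ and Carpentier et al.
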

\begin{remark} \normalfont
	The rate above is the same as the global testing rate of linear regression model with feature dimension $r$, sample size $n$, $\bs{\Sigma} = \ensuremath{\textbf{I}}_r$ and $r<n$ \citep[see ][]{carpentier2018minimax}. In this sense, the intrinsic dimension $r$ measures the minimal number of orthogonal dimensions needed to approximate the original model.
\end{remark}

\subsection{Refined power guarantees}\label{sec:power}
In the following, we provide a more precise characterization of the power function for our proposed test, with the sketching dimension $k$ chosen proportionally to the intrinsic dimension $r$.

\begin{thm}\label{thm:power-3}
	Suppose $(\bs{\beta},\bs{\Sigma})\in\mathcal{D}(r)$. Assume  $\bs{\beta}^\top\bs{\Sigma}\bs{\beta} = o(k/n)$. Then, for almost all sequences of sketching matrix $S_k$, the power function
	of the sketched $F$-test satisfies
	\[
	\Psi_n^S- \Phi\left(
	- z_{\alpha} + \frac{\sqrt{n}\bs{\beta}^\top\bs{\Sigma}\bs{\beta}}{\sigma^2} \sqrt{\frac{1-k/n}{2k/n}}
	\right) \overset{\text{p}}{\goto} 0.
	\]
\end{thm}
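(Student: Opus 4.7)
The plan is to derive Theorem~\ref{thm:power-3} as a refinement of Theorem~\ref{thm:power-function} by replacing the data-dependent surrogate $\Delta_k^2$ in the argument of $\Phi$ by the deterministic signal $\bs{\beta}^\top\bs{\Sigma}\bs{\beta}$. Since $\Phi$ is $(2\pi)^{-1/2}$-Lipschitz, the desired conclusion will follow from Theorem~\ref{thm:power-function} together with
\[
\sqrt{\tfrac{(1-k/n)n}{2k/n}} \cdot \frac{|\Delta_k^2 - \bs{\beta}^\top\bs{\Sigma}\bs{\beta}|}{\sigma^2} \overset{\text{p}}{\goto} 0.
\]
The assumption $\bs{\beta}^\top\bs{\Sigma}\bs{\beta} = o(k/n)$ tames the leading $\sqrt{n/k}$ prefactor, so this display reduces further to the relative-error bound $|\Delta_k^2 - \bs{\beta}^\top\bs{\Sigma}\bs{\beta}| = o_p(\bs{\beta}^\top\bs{\Sigma}\bs{\beta})$.

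To establish this relative-error bound I would move to the eigen-basis of $\bs{\Sigma}$. Writing $\bs{\Sigma} = \bs{U}\bs{\Lambda}\bs{U}^\top$ and using rotational invariance of the Gaussian sketch to replace $S_k$ by $W = \bs{U}^\top S_k$ (still with i.i.d.\ $\mathcal{N}(0,1)$ entries), set $\bs{v} = \bs{\Lambda}^{1/2}\widetilde{\bs{\beta}}$ and $T_k = \bs{\Lambda}^{1/2}W$. A direct algebra yields $\bs{\beta}^\top\bs{\Sigma}\bs{\beta} = \|\bs{v}\|_2^2$ and $\Delta_k^2 = \bs{v}^\top P_{T_k}\bs{v}$ where $P_{T_k}$ is the orthogonal projector onto the column space of $T_k$, so $\|\bs{v}\|_2^2 - \Delta_k^2 = \min_{\bs{a}\in\mathbb{R}^k}\|\bs{v} - T_k \bs{a}\|_2^2$. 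I would upper bound this minimum by plugging in a clever feasible vector. Splitting coordinates into the head $H = \{1,\ldots,r\}$ and tail $\{r+1,\ldots,p\}$ and writing $T_k$ in the corresponding block form, the natural choice is the minimum-norm solution $\bs{a}_0 = (W^H)^\top [W^H(W^H)^\top]^{-1}\bs{\Lambda}_H^{-1/2}\bs{v}_H$ to $T_k^H \bs{a}_0 = \bs{v}_H$, well-defined since $k > r$ forces $W^H(W^H)^\top$ to be invertible almost surely. This gives
\[
\|\bs{v}\|_2^2 - \Delta_k^2 \leq \|\bs{v}_T - T_k^T \bs{a}_0\|_2^2 \leq 2\|\bs{v}_T\|_2^2 + 2\|T_k^T \bs{a}_0\|_2^2.
\]

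The first summand is exactly $\sum_{i>r}\widetilde{\beta}_i^2\lambda_i$, directly controlled by $\eta\bs{\beta}^\top\bs{\Sigma}\bs{\beta}$ via the first inequality in~\eqref{eq:intrinsic-dim}. For the second, conditional on $\bs{a}_0$ the vector $T_k^T\bs{a}_0$ is Gaussian with covariance $\|\bs{a}_0\|_2^2 \bs{\Lambda}_T$, so its squared norm has conditional mean $\|\bs{a}_0\|_2^2 \sum_{i>r}\lambda_i$ and conditional variance $O(\|\bs{a}_0\|_2^4 \lambda_{r+1} \sum_{i>r}\lambda_i)$. Inverse-Wishart concentration for $W^H(W^H)^\top$ at bounded aspect ratio $r/k<1$ pins $\|\bs{a}_0\|_2^2 = O_p(r^{-1}\sum_{i\leq r}\widetilde{\beta}_i^2)$; the first inequality in~\eqref{eq:intrinsic-dim} then bounds the conditional mean by $O(\eta)\bs{\beta}^\top\bs{\Sigma}\bs{\beta}$, while combining both inequalities in~\eqref{eq:intrinsic-dim} shows the conditional standard deviation is $O(r^{-1/2}\eta)\bs{\beta}^\top\bs{\Sigma}\bs{\beta}$. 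A Chebyshev step then delivers the desired $o_p(\bs{\beta}^\top\bs{\Sigma}\bs{\beta})$ bound.

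The main technical obstacle is the uniform stochastic control of $\|\bs{a}_0\|_2^2$ and of the Gaussian quadratic form $\|T_k^T\bs{a}_0\|_2^2$: for the former one needs tight tail bounds on the smallest eigenvalue of the Wishart matrix $W^H(W^H)^\top$ at bounded aspect ratio strictly below one, and for the latter the second inequality in Definition~\ref{def: intrinstic dim}, which at first glance may appear redundant in view of the first, turns out to be essential because it controls the variance of the quadratic form through the operator norm $\lambda_{r+1}$ of the tail block. A minor issue is that Theorem~\ref{thm:power-function} is stated for $\rho_n \to \rho \in (0,1)$ while Theorem~\ref{thm:power-3} permits $k/n \to 0$; this can be patched by a parallel normal-approximation argument in the spirit of Proposition~\ref{lem:f-small-p}.
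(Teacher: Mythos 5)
There is one genuine gap in your reduction; the rest of your argument is correct and takes a genuinely different route to the paper's key lemma. The gap: you claim that, after the Lipschitz step, the problem ``reduces further to the relative-error bound $|\Delta_k^2-\bs{\beta}^\top\bs{\Sigma}\bs{\beta}| = o_p(\bs{\beta}^\top\bs{\Sigma}\bs{\beta})$'' because the local alternative tames the prefactor. It does not. Writing $a_n = \sqrt{n}\,\bs{\beta}^\top\bs{\Sigma}\bs{\beta}\,\sigma^{-2}\sqrt{(1-k/n)/(2k/n)}$, the assumption $\bs{\beta}^\top\bs{\Sigma}\bs{\beta}=o(k/n)$ only gives $a_n = o(\sqrt{k})$, so $a_n$ may diverge, and the Lipschitz bound yields only $|y_n|\leq \eta_n a_n$ with $\eta_n=o_p(1)$ --- a product that need not vanish, since Definition~\ref{def: intrinstic dim} provides no rate for $\eta$. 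This is precisely why the paper's proof in Appendix~\ref{section:theorem on power guarantee} supplements the Lipschitz bound with the Gaussian-tail bound $|y_n|\leq 2\Phi(z_\alpha - a_n/2)$ (valid because both $\Phi$-terms are then near one) and takes the minimum of the two bounds, one increasing and one decreasing in $a_n$. The fix is short --- either the paper's argument or a case split on whether $a_n$ stays bounded --- but as written your reduction is false. A second, smaller dependency you flag yourself: Theorem~\ref{thm:power-function} assumes $k/n\goto\rho\in(0,1)$, while $k=O(r)$ here typically forces $k/n\goto 0$; the paper closes this by invoking Theorem~\ref{thm:power-2} (which only needs $\limsup k/n<1$) rather than Theorem~\ref{thm:power-function}.

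Your proof of the relative-error bound itself (the paper's Lemma~\ref{lem:minimax}) is correct and more elementary than the paper's. Both arguments view $\|\bs{v}\|_2^2-\Delta_k^2$ as a constrained least-squares value, interpolate the head coordinates exactly, and exploit the independence of the head and tail blocks of the Gaussian sketch. The paper, however, computes the exact constrained minimizer by Lagrange multipliers (Lemma~\ref{lem:delta-bound-1}), which forces it to control the condition number $\kappa(\widetilde{S}_2^\top\bs{\Lambda}_{p-r}\widetilde{S}_2)$; since this can blow up for fast-decaying tails, the paper resorts to the eigenvalue-lifting device $\lambda_i^+=\lambda_i+\frac{b}{C_1}\frac{r\lambda_{r+1}}{p-r}$. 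You instead plug in the minimum-norm interpolant $\bs{a}_0$ and observe that, conditionally on the head block, $T_k^T\bs{a}_0$ is exactly Gaussian with covariance $\|\bs{a}_0\|_2^2\bs{\Lambda}_T$, so a mean/variance computation plus Chebyshev suffices; the only random-matrix input is the smallest eigenvalue of the $r\times k$ Wishart matrix at aspect ratio $r/k\leq 1/a<1$ (the paper's Lemma~\ref{lem:minimal-eigenvalue-2}). This sidesteps both the Lagrange computation and the lifting trick, and your observation that the second inequality in \eqref{eq:intrinsic-dim} enters precisely through the variance of the tail quadratic form matches the role it plays in the paper's argument.
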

\noindent The proof is provided in the Appendix~\ref{section:theorem on power guarantee}.
Note that in Theorem~\ref{thm:power-3}, the quantity inside the second $\Phi$ function is completely deterministic. Now we make a few remarks with regard to the above result. 

\begin{remark} [\textbf{Choice~of~$k=O(r)$}]\normalfont
	Theorem~\ref{thm:power-3} shows that by choosing $k=O(r)$, we are able to obtain the same signal strength as the original model as $\Delta_k^2\leq\bs{\beta}^\top\bs{\Sigma}\bs{\beta}$; further increasing the sketching dimension does not increase the power. Together with Theorem~\ref{thm:optimality}, we justify the choice of $k=O(r)$ affirmatively.
\end{remark}
\begin{remark}[\textbf{Relaxation of Gaussian assumption}] \normalfont \label{remark: gaussian}
	So far we have assumed that the design matrix $\bs{X}$ and random errors $\bs{z}$ follow Gaussian distributions, mainly to simplify our presentation. Indeed, this Gaussian assumption is not necessary and can be replaced with more general conditions. To this end, we can leverage the recent results by \cite{steinberger2016relative} and prove that Theorem~\ref{thm:power-function} and Theorem~\ref{thm:power-3} hold under mild moment conditions on $\bs{X}$ and $\bs{z}$. Due to the space limit, we defer the technical details of this result to the Appendix~\ref{section:non-gaussian}.	
\end{remark}
\begin{remark}[\textbf{Computational Complexity}] \normalfont
	Note that the computational complexity of constructing $\bm{X}S_k$ is $O(nkp)$, which can be further reduced by fast Hadamard transform \citep[see, e.g.][]{yang2017randomized}. The complexity of the $F$-statistic is $O(k^2n)$ and thus the overall complexity of the sketched statistic becomes $O(npk)$. In contrast, the $U$-statistics in \cite{zhong2011tests} and \cite{cui2018test} have time complexity of $O(n^4+n^2p)$ and $O(n^2p)$, respectively. This illustrates a computational benefit of the proposed approach over the competitors especially when the design matrix has a low dimensional structure.
\end{remark}

\subsection{Examples}\label{sec:examples} 
We illustrate Theorem~\ref{thm:optimality} and Theorem~\ref{thm:power-3} in a range of different settings for parameters $(\bs{\widetilde{\beta}}, \cov)$. 
For simplicity, in the first three examples, we assume that each coordinate $\widetilde{\beta}_i$ is generated from the same distribution and study how intrinsic dimensions (as well as the testing rates) vary with different structures of $\mySigma.$
Throughout this section, $\eta$ is set to be $ 1/\log p$.
More details of the derivations can be found in the Appendix~\ref{section:details of second set of examples}.

\begin{example}[$\alpha$-polynomial decay]\label{example:polynomial}
	Suppose the eigenvalues decay as $\lambda_j \propto j^{-\alpha}$ with $\alpha >1$. Then we have
	\begin{align*}
	r \lesssim (\log p)^{\frac{1}{\alpha-1}}.
	\end{align*} 
\end{example}
\begin{example}[$\gamma$-exponential decay]\label{example:exponential}
	Suppose the eigenvalues decay as $\lambda_j \propto\exp(-j^{\gamma})$ with $\gamma > 0$. Then we have 
	\[ r \lesssim (\log \log p)^{\frac{1}{\gamma}}.
	\]
\end{example}
\begin{example}[block covariance]\label{example:block}
	Suppose the covariance matrix has $m$ identical blocks with $m\goto\infty$:
	\[
	\cov = \left(\begin{matrix}
	\bs{B} & & \\
	& \ddots & \\
	& & \bs{B}
	\end{matrix}\right), \qquad \text{where  }~ \bs{B} = (1-\rho)I_{d\times d} + \rho \bs{1}_d\bs{1}_d^\top, 
	\]
	with $p = md$. In this case, the spectrum of $\cov$ has $m$ copies of $(1-\rho)+\rho d$ and $p-m$ copies of $1-\rho$. Then as long as $\rho \geq 1 - \frac{c}{\log p}$, we have
	\[ r \lesssim m.
	\]
\end{example}
The last example shows how the structure of $\bs{\beta}$ changes the intrinsic dimension:
\begin{example}[structured coefficient]\label{example:beta-structure}
	Suppose the eigenvalues decay as $\lambda_j \propto j^{-1}$ and $0 < c_1 \leq \widetilde{\beta}_i\sqrt{i} \leq c_2$ for some fixed $c_1,c_2>0$. Then we have
	\[
	r \lesssim (\log p)^3.
	\] 
\end{example}
\vskip -0.1in
As can be seen from the above examples, our sketched procedure is adaptive to various structured designs. 
Note that the $\alpha$-polynomial and $\gamma$-exponential decays are classical setups for nonparametric estimation, and their effective dimensions under different problems, such as kernel regression estimation and optimization are well understood (e.g. \cite{yang2017randomized,wei2017early}). We would like to point out that the intrinsic dimensions we derived above are much smaller than that of the previous literature, echoing the common belief that testing is a much easier task than its corresponding estimation analogue.

\section{Simulation studies} 
\label{Sec:sim}

In this section, we present some empirical studies to validate our theoretical findings using synthetic data. 
We start by comparing the performance of our sketched $F$-test to the existing tests proposed by \cite{zhong2011tests} and \cite{cui2018test} under several scenarios. We refer to the latter two tests as ZC test and CGZ test, respectively. 
Our experiments are set in the same way as of \cite{zhong2011tests} and \cite{cui2018test}, to give a fair comparison to the other two methods. 
In the second part, we consider a sequence of testing problems indexed by $p$ with non-sparse signals. We demonstrate that the signal strength of the sketched coefficients $\Delta_k^2$ in \eqref{eq:delta} follows closely to that of the original coefficients $\bs{\beta}^\top \bs{\Sigma} \bs{\beta}$, leading to the power of the proposed test higher than the other competitors.

\paragraph{Power evaluation.}
We first demonstrate the empirical power performance of the tests by varying the decay rate of eigenvalues. For the slow-decay case, we consider $\bs{\Lambda}$ with $\lambda_i = \log^{-2}(i+1)$ and choose $k=\lfloor n/2\rfloor$. Whereas for the fast-decay case, we consider $\bs{\Lambda}$ with $\lambda_i = i^{-2/3}\log^{-1}(i+1)$ and choose $k=\lfloor 2 \log p\rfloor$. 

We sample each $\beta_i$ from $\text{Binomial}(3, 0.3) + 0.3 \mathcal{N}(0, 1)$ independently to generate heterogeneous and non-sparse signals; with each given $\bs{\Lambda}$, let $\bs{U}$ be column vectors of a QR decomposition of a $p\times p$ matrix with i.i.d. $\mathcal{N}(0,1)$ entries, and set $\cov = \bs{U}\bs{\Lambda}\bs{U}^\top$. We generate $\bs{Z}$ with i.i.d.~$\mathcal{N}(0,1)$ entries for the slow-decay case and i.i.d.~$t(2)$ entries for the fast-decay case, and calculate $\bs{X}=\cov^{1/2} \bs{Z}$. A noise vector $\bs{z}$ is similarly generated with i.i.d. $\mathcal{N}(0,1)$ entries. We scale $\bs{\beta}$ and $\cov$ to ensure that $\|\bs{\beta}\|_2 = c_1$ and $\|\cov\|_F = c_2$ for $c_1 \in \{0,1,5\}$ and $c_2 \in \{50,100,300\}$.

We present the Type I and Type II error rates in Table~\ref{table:combined} with $(n,p)=(50, 500)$. The simulations were repeated $500$ times to estimate the error rates. From the results, we first note that all of the tests control the Type I error rate at $\alpha=0.05$ reasonably well under the null. In terms of the Type II error, we see that the sketched $F$-test performs comparable to or better than CGZ and ZC tests in most of the considered scenarios. In particular, the sketched $F$-test has a clear advantage over the competitors for the fast-decay case, which coincides with our theory in Section~\ref{sec:low-rank}.

\begin{table}
	\caption{Comparison of Type I and Type II Error Rates.}
	\label{table:combined}
	\centering
	\scalebox{0.9}{
		\begin{tabular}{llllllll}
			\toprule
			&&\multicolumn{3}{c}{Slow-decay} & \multicolumn{3}{c}{Fast-decay}                  \\
			\cmidrule(r){3-5}\cmidrule(r){6-8}
			&&$\|\bs{\beta}\|_2=0(H_0)$ & $\|\bs{\beta}\|_2=1$ & $\|\bs{\beta}\|_2 =5$&$\|\bs{\beta}\|_2=0(H_0)$ & $\|\bs{\beta}\|_2=1$ & $\|\bs{\beta}\|_2 =5$ \\
			\cmidrule(r){1-2}\cmidrule(r){3-5}\cmidrule(r){6-8}
			&\textbf{Sketching}&$\bs{4.8\% }$& $14.4\%$ &$ \bs{0.2\%}$& $\bs{3.0\%} $& $\bs{2.6\%}$ &$ \bs{1.4\%}$     \\
			$\|\bs{\Sigma}\|_F =50$&CGZ&$5.2\%$ & $\bs{7.8\%}$ & $4.8\%$  & $5.6\%$ & $11.4\%$ & $10.8\%$    \\
			&ZC&$8.4\%$ & $18.9\%$ & $2.1\%$ & $6.3\%$ & $14.7\%$ & $18.9\%$ \\
			\cmidrule(r){1-2}\cmidrule(r){3-5}\cmidrule(r){6-8}
			&\textbf{Sketching}&$3.2\%$ & $\bs{1.4\%} $ &$\bs{0.0\%}$  & $\bs{4.0\%}$ & $\bs{1.4\%} $& $\bs{2.4\%}$  \\
			$\|\bs{\Sigma}\|_F =100$&CGZ&$6.0\%$ & $5.4\%$ & $4.6\%$  &$6.2\%$ & $10.4\%$ & $12.4\%$  \\
			&ZC&$\bs{2.1\%}$ & $16.8\%$ & $0.6\%$  & $4.2\%$ & $14.7\%$ & $6.3\%$  \\
			\cmidrule(r){1-2}\cmidrule(r){3-5}\cmidrule(r){6-8}
			&\textbf{Sketching} &$5.4\%$ & $\bs{0.0\%}$ & $\bs{0.0\%}$  & $4.6\%$ & $\bs{1.2\%}$ & $\bs{1.6\%}$  \\
			$\|\bs{\Sigma}\|_F =300$&CGZ&$5.4\%$ &$4.4\%$ & $3.0\%$ & $5.8\%$ &$12.4\%$ & $12.2\%$  \\
			&ZC&$\bs{4.2\%}$ & $6.3\%$ & $8.4\%$  & $\bs{4.2\%}$ & $8.4\%$ & $10.5\%$  \\
			\bottomrule
		\end{tabular}
	}
\end{table}

\paragraph{Asymptotic Behavior.}
In Section~\ref{sec:low-rank}, we discussed various cases for which the proposed sketched $F$-test is minimax rate optimal in power. This argument essentially relies on the observation that $\Delta_k^2 \overset{\text{p}}{\goto}\bs{\beta}^\top\cov\bs{\beta}$ within $\mathcal{D}(r)$ and $k=O(r)$. 
To illustrate the accuracy of this approximation, 
\begin{figure}[ht]
	\centering
	\begin{subfigure}[b]{0.5\textwidth}
		\centering
		\includegraphics[width=\columnwidth]{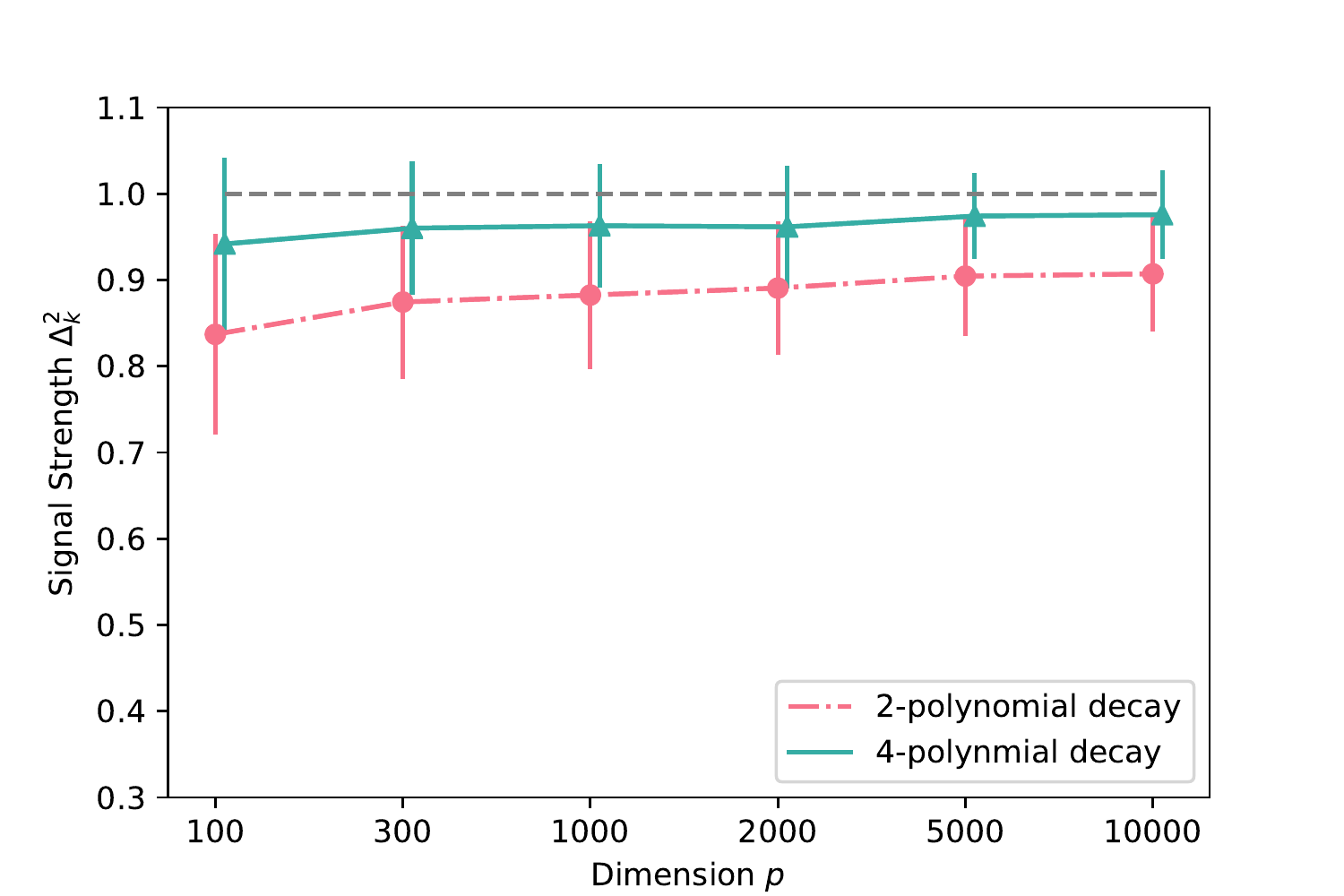}
		\caption{Stability of random signal.}
		\label{fig:signal}
	\end{subfigure}%
	\begin{subfigure}[b]{0.5\textwidth}
		\centering
		\includegraphics[width=\columnwidth]{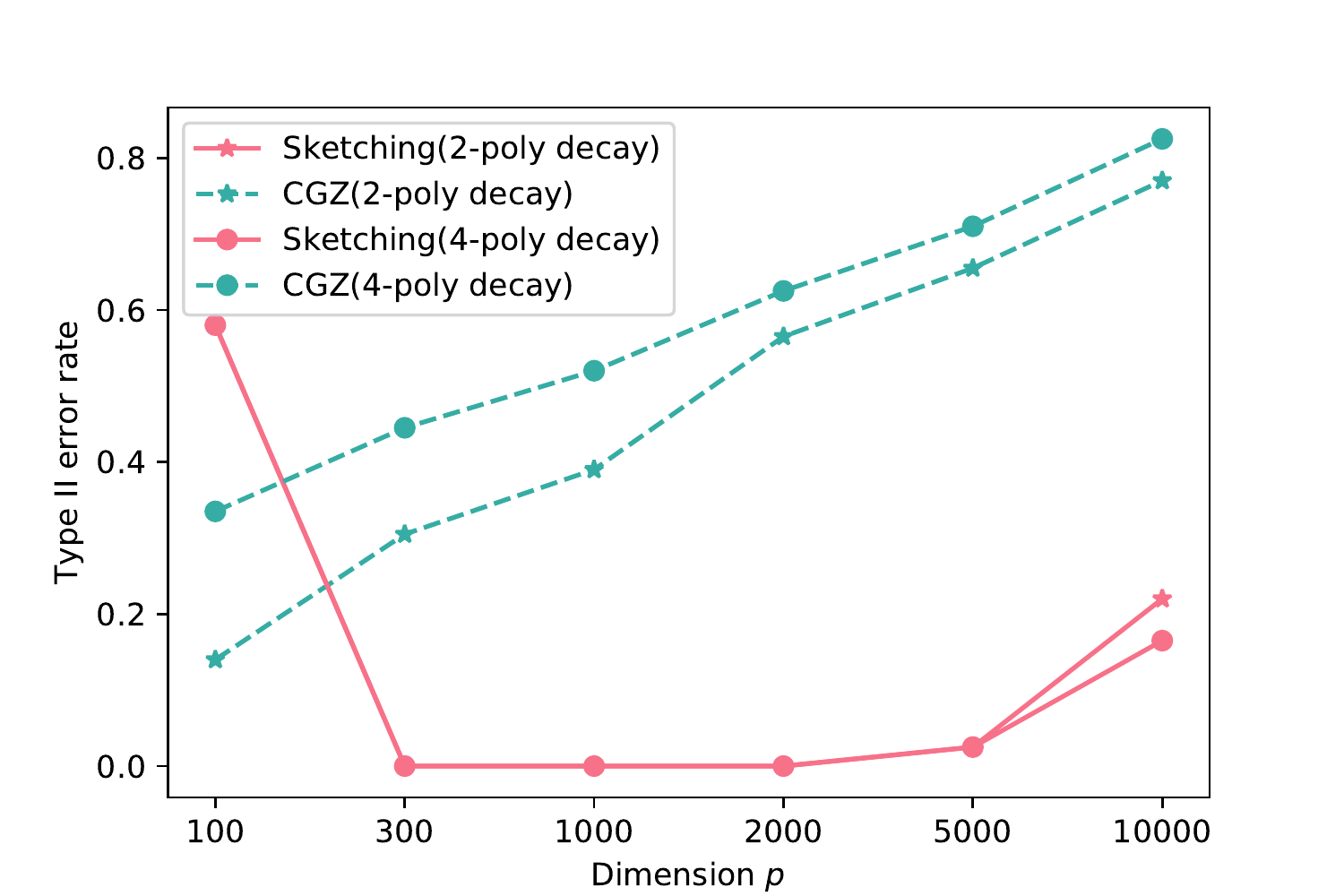}
		\caption{Error rate versus sample size.}
		\label{fig:error-plot}
	\end{subfigure}
	\caption{Simulation results with varying $p$.}
\end{figure}
we consider the polynomial-decay condition of eigenvalues in Example~\ref{example:polynomial} with the parameter $\alpha=2$ and $\alpha=4$. We then generate $\bs{\beta}$ and $\bs{U}$ in a similar manner to the previous simulation part, followed by a scaling step to ensure that $\bs{\beta}^\top\cov\bs{\beta}=1$. 

We first evaluate the random signal strength in the noiseless setting. Building on Example~\ref{example:polynomial}, we set the sketching dimension to be $k = \lfloor\min\{3 (\log p)^{\frac{1}{\alpha-1}}, n/2\}\rfloor$ and then calculate $\Delta_k^2$ for each $p\in \{100, 300, 1000, 2000, 5000, 10000\}$. Simulations were repeated 1000 times to obtain a confidence interval for $\Delta_k^2$ and the results are summarized in Figure~\ref{fig:signal}. From Figure~\ref{fig:signal}, we can see that the random signal $\Delta_k^2$ is fairly stable and follows closely to $\bs{\beta}^\top\cov\bs{\beta}$ over different sample sizes. This empirical result confirms that the random-projection approach maintains robust signal strength, which is a building block of our theoretical analysis in Section~\ref{sec:low-rank}.

To further demonstrate the performance of our approach, we generate $n=10\log^2 p$ samples for each scenario considered above with $\sigma^2=1$. When there is no prior knowledge about the population covariance structure, we take $k=\lfloor n/2\rfloor$ instead of the theoretical optimum. Under this setting, simulations were repeated $N=200$ times to approximate the type II error rates of our method and CGZ. The results are summarized in Figure~\ref{fig:error-plot}. From the results, we again observe a competitive performance of the proposed method over the competitor. We also found out that the ZC test behaves similarly to CGZ but computing their test statistic (fourth order U-statistic) is too expensive for large sample sizes. For this reason, we do not include the result of ZC test here.





\section{Key proof ingredients} 

The key of showing the upper bound part in Theorem~\ref{thm:optimality} and Theorem~\ref{thm:power-3} is a high-probability lower bound of the signal $\Delta_k^2$. Recall $\Delta_k^2 \leq \bs{\beta}^\top \cov \bs{\beta}$. Lemma~\ref{lem:minimax} below shows that, when $(\bs{\beta},\cov) \in \mathcal{D}(r)$ and sketching dimension is $O(r)$, the sketched model can capture most of signals in the original model. We state Lemma~\ref{lem:minimax} and its proof sketch here, while the details are deferred to Appendix~\ref{section:theorem 2} and \ref{section:theorem on power guarantee}.
\begin{lemma}\label{lem:minimax}
	When $(\bs{\beta},\cov) \in \mathcal{D}(r)$ and $ar \leq k \leq br$ with $1 < a \leq b$, 
	we have $\Delta_k^2 /\bs{\beta}^\top \cov \bs{\beta} \overset{p}{\goto} 1$.
\end{lemma}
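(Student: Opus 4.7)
My plan is to rewrite $\Delta_k^2$ as the squared length of an orthogonal projection and exhibit a near-minimizer that captures almost all of the signal. Since the distribution of $S_k$ is rotationally invariant, I would first replace $S_k$ by $\bs{U} S_k$ to reduce without loss of generality to the case where $\bs{\Sigma} = \bs{\Lambda}$ is diagonal and $\bs{\beta} = \widetilde{\bs{\beta}}$. Setting $W := \bs{\Lambda}^{1/2} S_k$ and $\bs{\gamma} := \bs{\Lambda}^{1/2} \bs{\beta}$, one has $\|\bs{\gamma}\|_2^2 = \bs{\beta}^\top \bs{\Sigma} \bs{\beta}$ and $\Delta_k^2 = \bs{\gamma}^\top W(W^\top W)^{-1}W^\top \bs{\gamma} = \bs{\gamma}^\top P_W \bs{\gamma}$, the squared length of the orthogonal projection of $\bs{\gamma}$ onto the column span of $W$. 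Always $\Delta_k^2 \leq \|\bs{\gamma}\|_2^2$, so the variational identity $\|\bs{\gamma}\|_2^2 - \Delta_k^2 = \min_{v \in \real^k} \|\bs{\gamma} - Wv\|_2^2$ reduces the lemma to exhibiting a single $v^\ast$ with $\|\bs{\gamma} - W v^\ast\|_2^2 = o_p(\|\bs{\gamma}\|_2^2)$.

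To build $v^\ast$ I would split along the top-$r$/tail decomposition suggested by $\mathcal{D}(r)$: partition $S_k = (S_{k,1}^\top, S_{k,2}^\top)^\top$ with $S_{k,1} \in \real^{r \times k}$, and $\bs{\Lambda} = \text{diag}(\bs{\Lambda}_1, \bs{\Lambda}_2)$, $\bs{\beta} = (\bs{\beta}_1, \bs{\beta}_2)$, $\bs{\gamma} = (\bs{\gamma}_1, \bs{\gamma}_2)$ accordingly. Since $k \geq ar > r$, the matrix $S_{k,1} S_{k,1}^\top$ is almost surely invertible, so I set
\[
v^\ast \;:=\; S_{k,1}^\top (S_{k,1} S_{k,1}^\top)^{-1} \bs{\beta}_1,
\]
which by construction satisfies $\bs{\Lambda}_1^{1/2} S_{k,1} v^\ast = \bs{\gamma}_1$. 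Consequently
\[
\|\bs{\gamma} - W v^\ast\|_2^2 \;=\; \|\bs{\gamma}_2 - \bs{\Lambda}_2^{1/2} S_{k,2} v^\ast\|_2^2 \;\leq\; 2\|\bs{\gamma}_2\|_2^2 + 2\|\bs{\Lambda}_2^{1/2} S_{k,2} v^\ast\|_2^2,
\]
and the deterministic first term is immediately $\leq 2\eta\|\bs{\gamma}\|_2^2$ by the first inequality in Definition~\ref{def: intrinstic dim}.

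For the stochastic second term, I would condition on $S_{k,1}$ (which fixes $v^\ast$) and use independence of $S_{k,2}$: the vector $S_{k,2} v^\ast \mid S_{k,1}$ is $\NORMAL(0, \|v^\ast\|_2^2 \Ind_{p-r})$, so $\mathbb{E}\!\left[\|\bs{\Lambda}_2^{1/2} S_{k,2} v^\ast\|_2^2 \mid S_{k,1}\right] = \|v^\ast\|_2^2 \cdot \text{tr}(\bs{\Lambda}_2)$. Bounding $\|v^\ast\|_2^2 \leq \|\bs{\beta}_1\|_2^2 / \lambda_{\min}(S_{k,1} S_{k,1}^\top)$ and invoking the Davidson--Szarek inequality to get $\lambda_{\min}(S_{k,1} S_{k,1}^\top) \geq c_a\, r$ for some $c_a > 0$ depending only on $a > 1$, with probability $1 - o(1)$, a Markov step then yields
\[
\|\bs{\Lambda}_2^{1/2} S_{k,2} v^\ast\|_2^2 \;\lesssim\; \eta^{-1/2} \cdot \Bigl(\tfrac{1}{r}\|\bs{\beta}_1\|_2^2\Bigr) \cdot \text{tr}(\bs{\Lambda}_2) \;\leq\; \sqrt{\eta}\,\|\bs{\gamma}\|_2^2
\]
with high probability, where the final inequality invokes exactly the product $A \cdot B$ controlled by the first assumption of $\mathcal{D}(r)$. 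Combining the two bounds gives $\|\bs{\gamma}\|_2^2 - \Delta_k^2 = o_p(\|\bs{\gamma}\|_2^2)$, which with $\Delta_k^2 \leq \|\bs{\gamma}\|_2^2$ yields $\Delta_k^2 / \bs{\beta}^\top \cov \bs{\beta} \overset{p}{\goto} 1$.

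The main obstacle I expect is controlling $\|v^\ast\|_2^2 \cdot \text{tr}(\bs{\Lambda}_2)$ uniformly in a regime where $r$ and $k$ may both diverge at comparable rates, so $S_{k,1}$ is neither nearly square nor strongly over-determined. The non-asymptotic Davidson--Szarek bound is tailor-made for this and supplies $\lambda_{\min}(S_{k,1} S_{k,1}^\top) \asymp r$ uniformly, after which the product structure built into the first inequality of Definition~\ref{def: intrinstic dim} closes the argument. Only that first inequality enters above; the second inequality of $\mathcal{D}(r)$ does not appear necessary for this $L^1$-level convergence and is presumably reserved for the finer control of the noise-variance $\nu^2$ needed when applying Proposition~\ref{lem:f-small-p} inside Theorem~\ref{thm:power-3}.
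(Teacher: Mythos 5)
Your proposal is correct, and it takes a genuinely different and in fact leaner route than the paper. The paper also reduces the problem to bounding $(\bs{\beta}-S_k\bs{\xi})^\top \cov_{p-r}(\bs{\beta}-S_k\bs{\xi})$ over $\bs{\xi}$ satisfying $\bs{U}_r^\top\bs{\beta}=\bs{U}_r^\top S_k\bs{\xi}$, but it then solves for the \emph{exact} constrained minimizer via Lagrange multipliers, bounds the result through operator norms and condition numbers of $\widetilde{S}_1\widetilde{S}_1^\top$ and $\widetilde{S}_2^\top\bs{\Lambda}_{p-r}\widetilde{S}_2$ (Lemma~\ref{lem:delta-bound-1}), and must invoke a bespoke concentration bound for $\bs{\Lambda}^{1/2}S$ (Lemma~\ref{lem:minimal-eigenvalue}) together with an eigenvalue-lifting device $\cov_{p-r}^+$ to make that bound applicable --- this is precisely where the second inequality of Definition~\ref{def: intrinstic dim} enters. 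You instead exhibit a single feasible point, the minimum-norm solution of $S_{k,1}v=\bs{\beta}_1$, and exploit the fact that conditionally on $S_{k,1}$ the vector $S_{k,2}v^\ast$ is exactly $\NORMAL(0,\|v^\ast\|_2^2\Ind_{p-r})$, so its weighted second moment is $\|v^\ast\|_2^2\,\mathrm{tr}(\bs{\Lambda}_2)$ with no operator-norm control of $S_{k,2}$ needed; Davidson--Szarek on the $r\times k$ block $S_{k,1}$ and a Markov step then close the argument using only the first inequality of $\mathcal{D}(r)$. What each approach buys: yours is shorter, avoids the lifting trick and the second condition of $\mathcal{D}(r)$ entirely, and suffices for the stated convergence in probability; the paper's heavier machinery returns an exponential tail $1-e^{-cr}$ for the event $\{1-\Delta_k^2/\bs{\beta}^\top\cov\bs{\beta}\leq C\eta\}$, whereas your Markov step only gives probability $1-O(\sqrt{\eta})$ (still $1-o(1)$, which is all the lemma asks). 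Two minor points: your final speculation is slightly off --- the second inequality of $\mathcal{D}(r)$ is consumed inside the paper's own proof of this lemma (via the lifted eigenvalues), not in controlling $\nu^2$ --- and both proofs implicitly need $r\to\infty$ for the Davidson--Szarek / eigenvalue events to have probability tending to one, which is consistent with the paper's examples.
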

\begin{proof}[Proof sketch of Lemma~\ref{lem:minimax}]
	First we introduce some additional notation. In the SVD decomposition $\bs{\Sigma}=\bs{U}\bs{\Lambda}\bs{U}^\top$, write $\bs{U}=\left[\bs{U}_r \ \bs{U}_{p-r}\right]$ and $\bs{\Lambda} = \left[\begin{matrix}\bs{\Lambda}_r & \\ & \bs{\Lambda}_{p-r} \end{matrix}\right]$, where $\bs{U}_r\in \mathbb{R}^{p\times r}$ and $\bs{\Lambda}_r \in \mathbb{R}^{r\times r}$. Then $\bs{\Sigma} = \bs{U}_r \bs{\Lambda}_r\bs{U}_r^\top + \bs{U}_{p-r} \bs{\Lambda}_{p-r} \bs{U}_{p-r}^\top:=\bs{\Sigma}_r + \bs{\Sigma}_{p-r}$. 
	
	The intuition comes from low-rank cases. If $\text{rank}(\bs{\Sigma}) = r$, using sketching dimension $k=r$ is enough. To see this, notice that when $\text{rank}(\bs{\Sigma}) = r$, we have $\text{rank}(\bs{\Sigma}S_k)=r$ almost surely, i.e., $\bs{\Sigma}S_k$ is of full-rank almost surely. Then $\exists \bs{\xi} \in\mathbb{R}^p$, such that 
	$
	\bs{\Sigma}\bs{\beta} = \bs{\Sigma}S_k\bs{\xi}.
	$
	It follows that $\Delta_k^2 = \boldsymbol{\beta}^\top  \boldsymbol{\Sigma} S_k(S_k^\top \boldsymbol{\Sigma} S_k)^{-1} S_k^\top \bs{\Sigma}S_k\bs{\xi} =\boldsymbol{\beta}^\top  \boldsymbol{\Sigma}\bs{\beta}$.
	
	In general case, we may not be able to find $\bs{\xi}$ satisfying $\bs{\Sigma}\bs{\beta} = \bs{\Sigma}S_k\bs{\xi}$, and we seek for some $\bs{\xi}$ to make the difference between $\bs{\Sigma}\bs{\beta}$ and $\bs{\Sigma}S_k\bs{\xi}$ small. Formally, as long as sketching dimension $k\geq r$, for any $\bs{\xi}$ that satisfies
	\begin{equation}\label{eq:restriction}
	\bs{U}_r^\top\bs{\beta} = \bs{U}_r^\top S_k\bs{\xi},
	\end{equation}
	define $\bs{\nu} = \cov_{p-r}(\bs{\beta}-S_k\bs{\xi})$. Then $\cov\bs{\beta}=\cov S_k \bs{\xi} + \bs{\nu}$, and 
	\begin{equation*}
	\begin{aligned}
	\Delta_k^2 = & \ (\bs{\xi}^\top S_k^\top \cov + \bs{\nu}^\top ) S_k(S_k^\top \cov S_k)^{-1} S_k^\top (\cov S_k\bs{\xi}  + \bs{\nu}) \\
	\geq & \ \bs{\beta}^\top \cov \bs{\beta} -(\bs{\beta}-S_k\bs{\xi})^\top \cov_{p-r} (\bs{\beta}-S_k\bs{\xi}),
	\end{aligned}
	\end{equation*}
	where the inequality follows by positive semi-definite property of $S_k(S_k^\top \boldsymbol{\Sigma} S_k)^{-1} S_k^\top$. 
	When $k\geq r$, we have $\text{rank}(\bs{U}_r^\top S_k)=r$ almost surely, so such $\bs{\xi}$ exists. To optimize the results, we seek for a solution of the problem 
	\[
	\mathrm{min}_{\bs{\xi}}\ (\bs{\beta}-S_k\bs{\xi})^\top \cov_{p-r}(\bs{\beta}-S_k\bs{\xi}) \quad s.t. \quad \bs{U}_r^\top (\bs{\beta}-S_k\bs{\xi})=0.
	\]
	The optimal $\bs{\xi}^*$ can be obtained by Lagrange multiplier. With Lagrange function
	$$
	\mathcal{L}(\bs{\xi}, \bs{\lambda}) = \frac{1}{2}(\bs{\beta}-S_k\bs{\xi})^\top \cov_{p-r}(\bs{\beta}-S_k\bs{\xi}) - \bs{\lambda}^\top \bs{U}_r^\top (\bs{\beta}-S_k\bs{\xi}),
	$$
	by solving the following two equations
	\begin{equation*}
	\frac{\partial \mathcal{L}(\bs{\xi}, \bs{\lambda})}{\partial \bs{\xi}} = 0  \quad \text{and} \quad
	\bs{U}_r^\top (\bs{\beta}-S_k\bs{\xi})=0,
	\end{equation*}
	we can solve for $\bs{\xi}^*$. The following lemma gives an upper bound on $(\bs{\beta}-S_k\bs{\xi}^*)^\top \cov_{p-r} (\bs{\beta}-S_k\bs{\xi}^*)$:
	\begin{lemma}\label{lem:delta-bound-1}
		Write $\widetilde{S}_1 = \bs{U}_r^\top S_k$, $\widetilde{S}_2 = \bs{U}_{p-r}^\top S_k$, $\widetilde{\bs{\beta}}_1 = \bs{U}_r^\top\bs{\beta}$ and $\widetilde{\bs{\beta}}_2 = \bs{U}_{p-r}^\top\bs{\beta}$. 
		With $\bs{\xi}^*$ defined above, we have
		$(\bs{\beta}-S_k\bs{\xi}^*)^\top \cov_{p-r} (\bs{\beta}-S_k\bs{\xi}^*) \leq 2L_1+2L_2$, where
		\begin{equation}\label{eq:delta-bound-1}
		\begin{aligned}
		L_1 &= \frac{\|\widetilde{S}_2^\top\bs{\Lambda}_{p-r} \widetilde{S}_2\|_2}{\lambda_{\min}(\widetilde{S}_1\widetilde{S}_1^\top)}\|\widetilde{\bs{\beta}}_1\|_2^2, \\
		L_2 &= \left( 1 + \kappa(\widetilde{S}_2^\top\bs{\Lambda}_{p-r} \widetilde{S}_2) \kappa(\widetilde{S}_1\widetilde{S}_1^\top) \right) \cdot \widetilde{\bs{\beta}}_2^\top \bs{\Lambda}_{p-r} \widetilde{\bs{\beta}}_2.
		\end{aligned}
		\end{equation}
		Here $\kappa(\cdot)$ represents the condition number of matrix, i.e., $\kappa(\bs{A}) = \lambda_{\max}(\bs{A})/\lambda_{\min}(\bs{A})$.
	\end{lemma}
	\vskip -0.1in
	To analyze the terms on the right hand side of \eqref{eq:delta-bound-1}, we need to characterize the behavior of the minimal and maximal eigenvalues of Gaussian random matrix. This can be done by covering argument; see, for example, \cite{vershynin2010introduction}. Details of the proof are deferred to Appendix~\ref{section:lemma on eigenvalues}, where we show that whenever $(\bs{\beta},\cov) \in \mathcal{D}(r)$ and $ar \leq k \leq br$, we have
	\[
	L_1 + L_2 \leq C\eta\bs{\beta}^\top \cov \bs{\beta}
	\]
	with probability at least $1-\exp(-c r)$, and thus the claim in Lemma~\ref{lem:minimax} follows.
\end{proof}

\section{Discussion}

In this work, we consider the problem of testing the overall significance for the regression coefficients in the high-dimensional settings. Building upon the random projection techniques, we introduce a sketched $F$-test for arbitrary dimension and sample size pair and develop theoretical gurantees for the proposed test including the asymptotic power and minimax optimality. We also demonstrate the advantages of the proposed test over the existing competitors in terms of the asymptotic relative efficiency and computational complexity. 
To our best knowledge, the proposed procedure is the first attempt to analyze in detail how sketching techniques work for testing regression coefficients. 

Our findings and analysis suggest a few directions for further investigations. 
For example, our procedure, as a general methodology, can be substantially extended to other testing problems. 
For instance, built upon an improved argument of the high-dimensional $F$-test (see \cite{steinberger2016relative}), our framework can be \emph{provably} adapted to testing whether $H_0: G \boldsymbol{\beta} = r_0$ or $H_1: G \boldsymbol{\beta} \neq r_0$ for matrix $G \in \mathbb{R}^{q \times p}$ and $r_0 \in \mathbb{R}^q$ with $q \leq p$.
In the case where, the joint significance of a group of coefficients are tested, it is sensible to combine a sketching step (over the complement set of features) with the classical $F$-test.
In addition, it would be interesting to see whether the sketching techniques can be adapted to other types of tests, apart from the $F$-test, as an effective approach for dimension reduction and statistical inference.


\vspace{1cm}
\bibliographystyle{alpha}
\newcommand{\etalchar}[1]{$^{#1}$}

\appendix

\section{Relaxation of the Gaussian assumptions}\label{section:non-gaussian}
In this part, we show that the proposed sketching test is still valid under more general conditions for both data matrix and noise distribution. To do this, we invoke a new set of assumptions on $\bs{X}_i$ and $z_i$ in model~\eqref{eq:model}, which hold beyond the Gaussian setting.

\paragraph{(B1)} The design vectors are generated as $\bs{x}_i = \bs{\Gamma}\bs{u}_i$, where $\bs{\Gamma}\in\mathbb{R}^{p\times m}$ satisfies $\bs{\Gamma}\bs{\Gamma}^\top = \bs{\Sigma}$ and $\bs{u}_1, \dots, \bs{u}_n$ are i.i.d. instances with $\mathbb{E}[\bs{u}_i] = \bs{0}$ and $\text{Var}[\bs{u}_i] = \Ind_m$ for some $m\leq k$. Additionally, we assume that $\bs{u}_i$ satisfies 

\quad \textbf{(a)} \emph{(polynomial tail)} There exists constant $c, C > 0$ such that for any $n\in\mathbb{N}$, orthogonal projection $P$ in $\mathbb{R}^m$ and $t > C \text{rank}(P)$, we have $\mathbb{P}(\|P \bs{u}_i\|^2 > t) \leq Ct^{-1-c}$;

\quad \textbf{(b)} \emph{(bounded moment)} We have $\sup_{\|v\|=1}(\mathbb{E}|v'\bs{u}_i|^8)^{1/8} = O(1)$ and for any symmetric matrix sequence $\bm{M}\in\mathbb{R}^{m\times m}$,
$$
\text{Var}[\bs{u}_i^\top \bm{M} \bs{u}_i] = O(\text{tr}(\bm{M}^2)) + o(\text{tr}^2(\bm{M})).
$$

\paragraph{(B2)} The noise vector $\bs{z}$ is independent of design matrix, with $\mathbb{E}[z_i^2] = 1$ and $\mathbb{E}[z_i^4] \leq c$ for $1\leq i\leq n$ and some universal constant $c>0$.

With this new set of assumptions, we are able to obtain similar results as in the Gaussian case. Theorem~\ref{thm:power-2} below, which builds on \cite{steinberger2016relative}, includes Theorem~\ref{thm:power-function} as a special case; we can also show Theorem~\ref{thm:power-3} holds if we replace the Gaussian assumptions of $\bs{X}$ and $\bs{z}$ with \textbf{(B1)} and \textbf{(B2)}.
\begin{thm}\label{thm:power-2}
	Besides \textbf{(B1)} and \textbf{(B2)}, assume $\lim\sup k/n <1$ and $\bs{\beta}^\top\bs{\Sigma}\bs{\beta} = o(k/n)$. Then, for almost all sequences of sketching matrix $S_k$, the power function $\Psi^S(S_k) = P\left\lbrace F(S_k) > q_{\alpha,k,n-k}\right\rbrace $ of test~\eqref{eq:f-test} satisfies
	\[
	\Psi_n^F - \Phi\left(
	- z_{\alpha} + \frac{\sqrt{n}\Delta_k^2}{\sigma^2} \sqrt{\frac{1-k/n}{2k/n}}
	\right) \goto 0.
	\]
\end{thm}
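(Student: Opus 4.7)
The plan is to reduce the problem to the classical (non-Gaussian) high-dimensional $F$-test analysis in Steinberger (2016), applied to the sketched regression, and then carry over the argument of Theorem~\ref{thm:power-function} to this more general setting. Fix a realization of $S_k$ and observe that, almost surely, $S_k^\top \bs{\Sigma} S_k$ is invertible, so we may decompose the coefficient vector as $\bs{\beta} = S_k \bs{\beta}^S + \bs{\gamma}_0$, with $\bs{\beta}^S = (S_k^\top \bs{\Sigma} S_k)^{-1} S_k^\top \bs{\Sigma} \bs{\beta}$ and $\bs{\gamma}_0 := \bs{\beta} - S_k \bs{\beta}^S$ satisfying the orthogonality relation $S_k^\top \bs{\Sigma} \bs{\gamma}_0 = 0$. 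Writing $\widetilde{z}_i := \bs{x}_i^\top \bs{\gamma}_0 + \sigma z_i$, the original model becomes $y_i = (S_k^\top \bs{x}_i)^\top \bs{\beta}^S + \widetilde{z}_i$, a regression in dimension $k$.

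The next step is to verify that the sketched regression satisfies (a conditional version of) assumptions \textbf{(B1)} and \textbf{(B2)}. For \textbf{(B1)}, note that $S_k^\top \bs{x}_i = (S_k^\top \bs{\Gamma}) \bs{u}_i$, so we may take the new factor matrix to be $\bs{\Gamma}' := S_k^\top \bs{\Gamma}$, which satisfies $\bs{\Gamma}' (\bs{\Gamma}')^\top = S_k^\top \bs{\Sigma} S_k$; the latent vectors $\bs{u}_i$ are unchanged, so the polynomial-tail and bounded-moment conditions transfer directly (the projection condition in \textbf{(B1)(a)} continues to hold for any projection in $\mathbb{R}^m$, in particular for those arising from the sketched model). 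The effective noise $\widetilde{z}_i$ has variance $\nu^2 = \sigma^2 + \bs{\beta}^\top \bs{\Sigma} \bs{\beta} - \Delta_k^2$, and since $\Delta_k^2 \leq \bs{\beta}^\top \bs{\Sigma} \bs{\beta} = o(k/n) = o(1)$, we obtain $\nu^2 = \sigma^2 + o(1)$. Also, $\mathbb{E}[\widetilde{z}_i\, (S_k^\top \bs{x}_i) \mid S_k] = S_k^\top \bs{\Sigma} \bs{\gamma}_0 = 0$ by construction.

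With these ingredients, I would apply the non-Gaussian high-dimensional $F$-test result of Steinberger (2016) to the sketched regression with design $\bs{X} S_k$, coefficient $\bs{\beta}^S$ and noise $\widetilde{z}_i$; this is exactly the analogue of how Theorem~\ref{thm:power-function} invokes Proposition~\ref{lem:f-small-p} in the Gaussian case. Under $\rho_n = k/n \to \rho \in (0,1)$, this gives the approximation
\[
\Psi_n^S(S_k) - \Phi\!\left(-z_\alpha + \sqrt{\tfrac{(1-\rho_n)n}{2\rho_n}}\,\frac{\Delta_k^2}{\nu^2}\right) \to 0,
\]
and replacing $\nu^2$ by $\sigma^2$ (at cost $o(1)$ inside the $\Phi$) yields the stated conclusion, recognizing that $\sqrt{n}\,\Delta_k^2 / \sigma^2 \cdot \sqrt{(1-k/n)/(2k/n)}$ is exactly the centering appearing in the statement.

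The main obstacle, absent in the Gaussian setting, is that $\widetilde{z}_i$ is only \emph{uncorrelated} with $S_k^\top \bs{x}_i$ and not independent of it, because both depend on $\bs{u}_i$. Consequently, one cannot simply treat the sketched regression as i.i.d. linear regression with fresh noise. The way around this is to use \textbf{(B1)(b)}, which controls quadratic forms $\bs{u}_i^\top M \bs{u}_i$ on the order of $\text{tr}(M^2) + o(\text{tr}^2(M))$, to bound the cross-terms and fourth moments that appear when one expands the numerator and denominator of $F(S_k)$; these are precisely the ingredients that Steinberger's framework is designed to accommodate, so the non-independence can be absorbed into the $o_P(1)$ remainders. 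The remaining steps are a verification that the quantities arising in the Taylor expansion of the $F$-statistic about the null concentrate at the correct rate under \textbf{(B1)}–\textbf{(B2)}, which proceeds in the same way as the Gaussian proof modulo these quadratic-form concentration bounds.
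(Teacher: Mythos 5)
Your proposal follows essentially the same route as the paper: recast the problem as the sketched regression $y_i = \langle S_k^\top \bs{x}_i, \bs{\beta}^S\rangle + z_i^S$, note that \textbf{(B1)} transfers to the projected design $S_k^\top\bs{\Gamma}\bs{u}_i$, and invoke Steinberger's non-Gaussian high-dimensional $F$-test result, with the dependence between $z_i^S$ and the sketched design handled through the moment assumptions. The paper's appendix simply carries out the step you leave as a plan, namely verifying the two explicit moment conditions $\mathbb{E}[(\mathbb{E}[(z_i^S)^4\mid\widetilde{\bs{x}}_i])^2]=O(1)$ and $\max_i \mathbb{E}[(z_i^S)^4\mid\widetilde{\bs{x}}_i]=o_P(\sqrt{k})$ using $\|\bs{\Gamma}^\top(\bs{\beta}-S_k\bs{\beta}^S)\|_2^2=\bs{\beta}^\top\bs{\Sigma}\bs{\beta}-\Delta_k^2=o(1)$ together with the eighth-moment bound in \textbf{(B1)}(b).
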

The proof of the result shares the same spirit as the proof of Theorem~\ref{thm:power-function}; one major difference is that, when the design matrix is not Gaussian, sketched noise $z_i^S$ is not independent of sketched data $S_k\bs{X}_i$ anymore, requiring extra efforts to characterize the behavior of $F(S_k)$. We list some technical details in Section~\ref{section:additional-proof}.

\paragraph{Remark:} We note that the assumptions \textbf{(B1)} and \textbf{(B2)} are mild. The moment and tail conditions hold for a wide range of random instances beyond Gaussian, including heavy-tailed ones such as log-normal distribution. Also note that we do not require entries of $\bs{u}_i$ to be independent with each other.

\section{Proofs of main results and other details}\label{section:main-proof}

\subsection{Proof of Proposition~\ref{lem:f-small-p}} 
\label{Section: the proof of lemma on small p}
First, let us write the second term inside $\Phi(\cdot)$ as 
\begin{equation}\label{eq:eta}
\eta = \sqrt{\frac{(1-\delta)n}{2\delta}} \frac{\boldsymbol{\beta}^\top \boldsymbol{\Sigma} \boldsymbol{\beta}}{\sigma^2}.
\end{equation}
We also define
\[
\hat{\sigma}^2 = \frac{\bs{y}^\top (\bs{\Ind}_p - \bs{X}(\bs{X}^\top \bs{X})^{-1}\bs{X}^\top)\bs{y}}{n-p} \quad \text{and} \quad T = \frac{\hat{\sigma}^2}{\sigma^2}\sqrt{\frac{n\delta(1-\delta)}{2}}(F-1).
\]
The proof builds on the following two claims, which are proved at the end of this section.
\begin{align} \label{eq:power-2}
&\sqrt{n}\left(\frac{\hat{\sigma}^2}{\sigma^2}-1\right) =O_P(1) \quad \text{and} \\[.5em]
&T - \eta \xrightarrow{d} \mathcal{N}(0, 1). \label{eq:power-1}
\end{align}
We now continue the main line of the proof assuming the claims in~\eqref{eq:power-2} and \eqref{eq:power-1} hold.
By the claim~\eqref{eq:power-2} we know $\hat{\sigma}^2 /\sigma^2 \xrightarrow{p} 1$. 
Note that $\eta = o(\sqrt{n})$ under local alternative assumption. By Slutsky's theorem, 
\begin{equation}\label{eq:power-0}
G\defn \sqrt{\frac{n\delta(1-\delta)}{2}}(F -1) - \eta = \frac{\sigma^2}{\hat{\sigma}^2} \left(T - \eta\right) + \left( \frac{\sigma^2}{\hat{\sigma}^2}-1\right) \eta \xrightarrow{d} \mathcal{N}(0, 1).
\end{equation}
We can use the convergence result~\eqref{eq:power-0} to show the claim in Lemma~\ref{lem:f-small-p}. Additionally write 
\begin{align} \label{eq: definition of s}
s \defn \sqrt{\frac{n\delta(1-\delta)}{2}}(q_{\alpha,p,n-p}-1).
\end{align}
Notice that $\Phi(\cdot)$ is Lipschitz-$1$ and thus we have
\begin{align*}
\left| \Psi_n^F - \Phi(-z_{\alpha} + \eta ) \right|  = &  \ \left| \mathbb{P}\left( G \geq  s-\eta \right) - \Phi(-z_{\alpha} + \eta)\right| \\[.5em]
\overset{\text{(i)}}{\leq} & \ \left| \mathbb{P}\left( G  \leq s-\eta \right) - \Phi\left(s-\eta\right)\right|  + \left| \Phi\left(s-\eta\right) -\Phi(z_{\alpha} - \eta) \right| \\[.5em]
\overset{\text{(ii)}}{\leq} &\  \sup_{x\in\mathbb{R}} \left| \mathbb{P}\left( G \leq x \right) - \Phi\left(x\right)\right|  + \left| s-z_{\alpha}\right|,
\end{align*}
where step~(i) uses the fact $\Phi(x)=1-\Phi(-x))$ and step~(ii) uses Lipschitz property of $\Phi$. To analyze the second term, we need Lemma 2.1 of \cite{bai1996effect} which provides an approximation of $q_{\alpha, p, n-p}$ when $p = \delta n$ for $\delta\in(0, 1)$.
\begin{lemma}[Lemma 2.1 of \cite{bai1996effect}]
	\label{lem:bai}
	When $p = \delta n$ with $\delta\in(0, 1)$, we have
	\begin{align*}
	q_{\alpha, p, n-p} = 1 + \sqrt{\frac{2}{n\delta(1-\delta)}} z_{\alpha} + o(n^{-1/2}).
	\end{align*}
\end{lemma}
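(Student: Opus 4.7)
The plan is to reduce the lemma to a central limit theorem for a ratio of independent chi-squared random variables and then translate the resulting distributional convergence into a quantile statement. Recall that if $F \sim F_{p, n-p}$, we may write $F = \frac{U/p}{V/(n-p)}$ where $U \sim \chi^2_p$ and $V \sim \chi^2_{n-p}$ are independent. I would start by centering and scaling: write
\begin{equation*}
F - 1 = \frac{(U/p - 1) - (V/(n-p) - 1)}{V/(n-p)}.
\end{equation*}
Since $\mathrm{Var}(\chi^2_k/k) = 2/k$, we have $(U/p - 1) = O_P(p^{-1/2})$ and $(V/(n-p) - 1) = O_P((n-p)^{-1/2})$, and independence implies that the numerator, appropriately scaled, converges to a standard normal with variance $\frac{2}{p} + \frac{2}{n-p} = \frac{2}{n\delta(1-\delta)}$ (to leading order). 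The denominator tends to one in probability, so by Slutsky's theorem
\begin{equation*}
\sqrt{\tfrac{n\delta(1-\delta)}{2}}\,(F - 1) \xrightarrow{d} \mathcal{N}(0,1).
\end{equation*}

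Next I would convert this into a statement about the upper $\alpha$-quantile $q_{\alpha,p,n-p}$. Since the limiting distribution has a continuous and strictly increasing CDF, quantile convergence (see e.g.\ Lemma 21.2 of \cite{van2000asymptotic}) yields
\begin{equation*}
\sqrt{\tfrac{n\delta(1-\delta)}{2}}\,(q_{\alpha,p,n-p} - 1) \;\longrightarrow\; z_{\alpha},
\end{equation*}
which is equivalent to $q_{\alpha,p,n-p} = 1 + \sqrt{2/(n\delta(1-\delta))}\,z_{\alpha} + o(n^{-1/2})$.

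The main obstacle is sharpening the $o(1)$ in quantile convergence to $o(n^{-1/2})$ in the original scale, i.e., controlling the error uniformly on the right order. To do this I would invoke a Berry--Esseen type bound for the joint distribution of the independent chi-square components: specifically, for each $k$, $\sup_{x}|\mathbb{P}(\chi^2_k/k - 1 \leq x/\sqrt{2/k}) - \Phi(x)| = O(k^{-1/2})$. Combined with independence and a Taylor expansion of the ratio, this gives
\begin{equation*}
\sup_{x}\left| \mathbb{P}\!\left(\sqrt{\tfrac{n\delta(1-\delta)}{2}}(F - 1) \leq x\right) - \Phi(x)\right| = O(n^{-1/2}),
\end{equation*}
and the inverse-function stability of $\Phi^{-1}$ near $1-\alpha$ (Taylor expansion with non-vanishing derivative $\varphi(z_{\alpha}) > 0$) turns the $O(n^{-1/2})$ CDF gap into the required $o(n^{-1/2})$ quantile gap after using that the leading normal approximation is already centered correctly. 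An alternative route, which would sidestep some of the Taylor bookkeeping, is to invoke a one-term Edgeworth expansion for $F$-quantiles (Cornish--Fisher), which directly yields an expansion of the form $q_{\alpha,p,n-p} = 1 + \sqrt{2/(n\delta(1-\delta))}z_{\alpha} + O(n^{-1})$ — strictly stronger than the claimed remainder, and therefore sufficient.
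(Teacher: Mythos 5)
Your argument is correct, but it is worth noting that the paper does not prove this lemma at all: it simply imports it as Lemma 2.1 of the cited reference \cite{bai1996effect}, so your proposal supplies a self-contained derivation where the paper relies on an external result. The derivation itself is sound: writing $F-1=\frac{(U/p-1)-(V/(n-p)-1)}{V/(n-p)}$ with independent $U\sim\chi^2_p$, $V\sim\chi^2_{n-p}$, the numerator scaled by $\sqrt{n\delta(1-\delta)/2}$ is asymptotically standard normal (its variance is exactly $2/p+2/(n-p)=2/(n\delta(1-\delta))$), Slutsky handles the denominator, and quantile convergence for a limit with continuous strictly increasing CDF gives $\sqrt{n\delta(1-\delta)/2}\,(q_{\alpha,p,n-p}-1)\to z_{\alpha}$, which is precisely the claimed expansion. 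One small criticism: your ``main obstacle'' paragraph is a non-issue. Since the scaling factor is of order $\sqrt{n}$, an $o(1)$ error on the standardized scale is \emph{identical} to an $o(n^{-1/2})$ error on the original scale --- you even state this equivalence yourself two lines earlier --- so the Berry--Esseen and Cornish--Fisher machinery is unnecessary for the stated remainder (it would only be needed to claim a sharper $O(n^{-1})$ remainder, which the lemma does not require). The proof is complete after the quantile-convergence step.
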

Rearranging the statement of Lemma~\ref{lem:bai} yields $s = z_{\alpha} + o(1)$ where $s$ is defined in \eqref{eq: definition of s}. We also know $\sup_{x\in\mathbb{R}} \left| \mathbb{P}\left( G \leq x \right) - \Phi\left(x\right)\right| \goto 0$ by the approximation~\eqref{eq:power-0}. Combining these pieces yields $\left| \Psi_n^F - \Phi(-z_{\alpha} + \eta ) \right|=o(1)$ and thus Proposition~\ref{lem:f-small-p} follows.

\subsubsection*{Proof of Claim~\eqref{eq:power-2}}
Write $\bs{H}=\bs{X}(\bs{X}^\top \bs{X})^{-1}\bs{X}^\top$.
Notice that $\bm{HX} = \bm{X}$ and then $(\Ind_p - \bs{H})\bs{X}\bs{\beta}=\bs{0}$. By the linearity assumption $\bs{y} = \bs{X}\bs{\beta}+\sigma \bs{z}$, we can write 
\begin{equation}\label{eq:sigma-hat-2}
\frac{\hat{\sigma}^2}{\sigma^2} 
= \frac{(\bs{X}\bs{\beta}+\sigma \bs{z})^\top (\Ind_p - \bs{H})(\bs{X}\bs{\beta}+\sigma \bs{z})}{(n-p)\sigma^2} 
= \frac{1}{n-p} \bs{z}^\top (\Ind_p - \bs{H}) \bs{z}.
\end{equation}
Additionally, by our model assumption, the noise vector $\bs{z}\sim \mathcal{N}(\bs{0}, \Ind_p)$ is independent of $\bs{X}$. For any given $\bs{X}$ with rank $p$, $\Ind_p - \bs{H}$ is a projection matrix with rank $(n-p)$, and in this case $\bs{z}^\top (\Ind_p - \bs{H}) \bs{z}\vert \bm{H} \sim \chi^2_{n-p}$.
Under the Gaussian setting, we know $\text{rank}(\bs{X})=p$ almost surely, so $\hat{\sigma}^2/\sigma^2 \overset{d}{=} \chi^2_{n-p}/(n-p)$. Recall that $p=\delta n$, and thus $\sqrt{n}\left(\chi^2_{n-p}/(n-p) - 1\right) = O_P(1)$, which in turn leads to $\sqrt{n}\left( \hat{\sigma}^2/ \sigma^2-1\right) =O_P(1)$. This completes the proof of claim~\eqref{eq:power-2}.


\subsubsection*{Proof of Claim~\eqref{eq:power-1}}
We first rearrange the expression of $T$ in \eqref{eq:power-1}. By definition of $T$ in \eqref{eq:power-1}, we have
$$
T = \frac{\hat{\sigma}^2}{\sigma^2}\sqrt{\frac{n\delta(1-\delta)}{2}}(F-1) = \frac{\hat{\sigma}^2}{\sigma^2}\sqrt{\frac{n\delta(1-\delta)}{2}} \left(\frac{\bm{y}^\top\bm{H}\bm{y}/p}{\hat{\sigma}^2}-1\right) = \sqrt{\frac{n\delta(1-\delta)}{2}} \left(\frac{\bm{y}^\top\bm{H}\bm{y}/p}{\sigma^2}-\frac{\hat{\sigma}^2}{\sigma^2}\right).
$$
Using the fact that $\bm{HX} = \bm{X}$, we have
\[
\bm{y}^\top\bm{H}\bm{y} = (\bs{X}\bs{\beta}+\sigma \bs{z})^\top \bm{H}(\bs{X}\bs{\beta}+\sigma \bs{z}) = \sigma^2 \bm{z}^\top\bm{H}\bm{z} + 2\sigma \bm{\beta}^\top\bm{X}^\top\bm{z} + \bs{\beta}^\top \bs{X}^\top \bs{X} \bs{\beta}.
\]
Combining the above with another expression of $\widehat{\sigma}^2/\sigma^2$ in \eqref{eq:sigma-hat-2}, we can write $T$ as 
\[
T = \sqrt{\frac{n\delta(1-\delta)}{2}} 
\left(\frac{\bs{z}^\top\bs{H}\bs{z}}{p} - \frac{\bs{z}^\top(\Ind_p-\bs{H})\bs{z}}{n-p}  + \frac{\bs{\beta}^\top \bs{X}^\top \bs{X} \bs{\beta}}{p\sigma^2} + \frac{2}{\sigma} \frac{\bm{\beta}^\top\bm{X}^\top\bm{z}}{p}\right).
\]
By recalling $\eta$ defined in \eqref{eq:eta}, we can decompose $T-\eta$ as $T-\eta = T_1 + (T_2 - \eta) + T_3$, where
\begin{align*}
& T_1 = \sqrt{\frac{n\delta(1-\delta)}{2}}\left(
\frac{\bs{z}^\top\bs{H}\bs{z}}{p} - \frac{\bs{z}^\top(\Ind_p-\bs{H})\bs{z}}{n-p}
\right), \\
& T_2-\eta = \eta \left(\frac{\bs{\beta}^\top \bs{X}^\top \bs{X} \bs{\beta}}{n\bs{\beta}^\top \cov \bs{\beta}} - 1\right) \quad \text{and} \\ 
& T_3 =  \frac{1}{\sigma} \sqrt{\frac{2(1-\delta)}{n\delta}}\bs{\beta}^\top\bs{X}^\top\bs{z}.
\end{align*}
In what follows, we prove $T_1 \xrightarrow{d} \mathcal{N}(0,1)$, $T_2-\eta \xrightarrow{d} 0$ and $T_3 \xrightarrow{d} 0$ and thus $T - \eta \xrightarrow{d} \mathcal{N}(0,1)$ as desired.

\paragraph*{Analyzing $T_1$:} Note that $\bs{H}=\bs{X}(\bs{X}^\top \bs{X})^{-1}\bs{X}^\top$ is a projection matrix with rank $p$ almost surely. Therefore, conditional on $\bs{H}$, we have $\bs{z}^\top\bs{H}\bs{z}\vert\bs{H} \sim\chi^2_p$ and $\bs{z}^\top(\bs{I}-\bs{H})\bs{z}\vert\bs{H} \sim\chi^2_{n-p}$ and these are independent to each other. By letting $\omega_1, \omega_2\overset{iid}{\sim}\mathcal{N}(0,1)$, we may apply the central limit theorem and see that
\begin{align*}
& \bs{z}^\top\bs{H}\bs{z}/p\vert\bs{H} = 1 + \omega_1/\sqrt{p} + o_P(n^{-1/2}), \\[.5em]
& \bs{z}^\top(\bs{I}-\bs{H})\bs{z}/(n-p)\vert\bs{H} = 1 + \omega_2/\sqrt{n-p} + o_P(n^{-1/2}).
\end{align*}
Then we conclude that $T_1\vert \bs{H} \xrightarrow{d} \mathcal{N}(0,1)$ and thus $T_1 \xrightarrow{d} \mathcal{N}(0,1)$ as well by dominated convergence theorem. 

\paragraph*{Analyzing $T_2$:} Since $\bs{X}\bs{\beta} \sim \mathcal{N}(\bs{0}, (\bs{\beta}^\top\cov\bs{\beta}) \Ind_p)$ under the Gaussian setting, it follows that
\begin{align*}
\eta \left(\frac{\bs{\beta}^\top \bs{X}^\top \bs{X} \bs{\beta}}{n\bs{\beta}^\top \cov \bs{\beta}} - 1\right) \overset{d}{=} \eta \left(\frac{\chi^2_n}{n}-1 \right).
\end{align*}
Together with observations (i)~$\eta =o(\sqrt{n})$ and (ii)~$\sqrt{n}(\chi^2_n/n-1) = O_P(1)$, we conclude $T_2-\eta \xrightarrow{d} 0$.

\paragraph*{Analyzing $T_3$:} To show $T_3 \xrightarrow{d} 0$, it suffices to prove $\bs{\beta}^\top\bs{X}^\top\bs{z} = o_P(\sqrt{n})$. By the independence between $\bs{X}$ and $\bs{z}$, we have $\mathbb{E}\left[\bs{\beta}^\top\bs{X}^\top\bs{z}\right] = 0$ and $\text{Var}(\bs{\beta}^\top\bs{X}^\top\bs{z}) = \mathbb{E}\left[ \bs{\beta}^\top\bs{X}^\top\bs{z} \bs{z}^\top \bs{X}\bs{\beta}\right] = \mathbb{E}\left[ \bs{\beta}^\top\bs{X}^\top\bs{X}\bs{\beta}\right] = n \bs{\beta}^\top \cov \bs{\beta} = o(n)$. Therefore $\bs{\beta}^\top\bs{X}^\top\bs{z} = o_P(\sqrt{n})$ holds.

Combining the results, we complete the proof of claim~\eqref{eq:power-1}.


\subsection{Proof of claim~\eqref{prop:almost-surely-invertible}} \label{Section: the proof of invertibility}
Since $\text{rank}(\bs{A}^\top \bs{A}) = \text{rank}(\bs{A})$ for any matrix $\bs{A}$, we observe $\text{rank}(S_k^\top\bs{X}^\top\bs{X}S_k) = \text{rank}(\bs{X}S_k)$. For any realization of $\bs{X}$ with no all-zero rows, the entries of $\bs{X}S_k$ are independent Gaussian random variables and thus $\bs{X}S_k$ has full-rank $k$. By construction, $\bs{X}$ does not have all-zero rows almost surely, and thus $\text{rank}(S_k^\top\bs{X}^\top\bs{X}S_k) = k$ almost surely.


\subsection{Proof of Proposition~\ref{prop:are}} 
Rearranging expression~\eqref{eq:are} in the main text, we have
\[
\text{ARE}_n(\Psi^{ZC}_n;\Psi^S_n)= \left(\frac{4}{\sqrt{\rho(1-\rho)}}  \frac{\text{tr}(\boldsymbol{\Sigma})}{\sqrt{\text{tr}(\boldsymbol{\Sigma^2})}}  \frac{1}{\sqrt{n}}\right) \cdot \left(
\frac{\bs{\beta}^\top\bs{\Sigma}\bs{\beta}}{\Delta_k^2}  \frac{k}{2p}
\right)\cdot \left(
\frac{\|\bs{\Sigma\beta}\|^2}{\bs{\beta}^\top\bs{\Sigma}\bs{\beta}}  \frac{p}{2\text{tr}(\bs{\Sigma})}
\right),
\]
where we recall that 
\begin{align*}
\Delta_k^2 := \boldsymbol{\beta}^\top  \boldsymbol{\Sigma} S_k(S_k^\top \boldsymbol{\Sigma} S_k)^{-1} S_k^\top \boldsymbol{\Sigma}\boldsymbol{\beta}.
\end{align*}

The first term is exactly what we want; it remains to derive high-probability bounds for the second and third terms. Define
\[
\mathcal{E}_1 = \left\lbrace \frac{\Delta_k^2}{\bs{\beta}^\top\bs{\Sigma}\bs{\beta}} \geq \frac{k}{2p} \right\rbrace \quad \text{and} \quad \mathcal{E}_2 = \left\lbrace \frac{\|\bs{\Sigma\beta}\|^2}{\bs{\beta}^\top\bs{\Sigma}\bs{\beta}} \leq\frac{2\text{tr}(\bs{\Sigma})}{p} \right\rbrace .
\]
If we can show $\mathbb{P}(\mathcal{E}_1)\goto 1$ and $\mathbb{P}(\mathcal{E}_2)\goto 1$ as $n\goto \infty$, the claim of Proposition~\ref{prop:are} follows.

The remaining parts of the proof rely on concentration bounds of Gaussian quadratic forms. See Lemma 0.2. in \cite{bechar2009bernstein} for the proof of the following lemma:
\begin{lemma}[\cite{bechar2009bernstein}]\label{lem:gaussian-concentration}
	For any symmetric matrix $\bs{A}\in\mathbb{R}^{p\times p}$ with $\bs{A}\succeq 0$, $\bs{Z}\sim \mathcal{N}(0, I_{p\times p})$ and any $t>0$, we have
	\begin{align*}
	& \mathbb{P}\left(\bs{Z}^\top \bs{A} \bs{Z} \geq \emph{tr}(\bs{A}) + 2 \|\bs{A}\|_F \sqrt{t} + 2 \|\bs{A}\|t \right) \leq \exp(-t) \quad \text{and} \\[.5em]
	& \mathbb{P}\left(\bs{Z}^\top \bs{A} \bs{Z} \leq \emph{tr}(\bs{A}) - 2 \|\bs{A}\|_F \sqrt{t} \right) \leq \exp(-t).
	\end{align*}
\end{lemma}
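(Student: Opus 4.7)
The plan is to reduce both tail bounds to a standard Chernoff/Markov argument after diagonalization. Since $\bm{A}\succeq 0$ is symmetric, I would spectrally decompose $\bm{A}=\bm{U}\bm{\Lambda}\bm{U}^\top$ with $\bm{\Lambda}=\mathrm{diag}(\lambda_1,\ldots,\lambda_p)$, $\lambda_i\geq 0$. By rotational invariance of the standard Gaussian, $\bm{W}:=\bm{U}^\top\bm{Z}\sim\mathcal{N}(\bm{0},\Ind_p)$ has i.i.d.\ $\mathcal{N}(0,1)$ entries, so $\bm{Z}^\top\bm{A}\bm{Z}=\sum_{i=1}^p \lambda_i W_i^2$. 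Note $\mathrm{tr}(\bm{A})=\sum_i\lambda_i$, $\|\bm{A}\|_F^2=\sum_i\lambda_i^2$ and $\|\bm{A}\|=\max_i\lambda_i$. The quantity of interest becomes the centered sum $Y:=\sum_i\lambda_i(W_i^2-1)$.

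Next I would compute the cumulant generating function. For $W\sim\mathcal{N}(0,1)$ and $s<1/2$, $\mathbb{E}[e^{sW^2}]=(1-2s)^{-1/2}$, so
\[
\log\mathbb{E}[e^{s(W^2-1)}]=-s-\tfrac{1}{2}\log(1-2s)=\sum_{k\geq 2}\frac{(2s)^k}{2k}.
\]
I would use two Taylor-type bounds on this series. For $s\in(0,1/(2\|\bm{A}\|))$, the identity above together with $\sum_{k\geq 2}x^k/k \leq x^2/(2(1-x))$ yields $\log\mathbb{E}[e^{s\lambda_i(W_i^2-1)}]\leq (s\lambda_i)^2/(1-2s\lambda_i)\leq (s\lambda_i)^2/(1-2s\|\bm{A}\|)$, and summing over $i$,
\[
\log\mathbb{E}[e^{sY}]\leq \frac{s^2\|\bm{A}\|_F^2}{1-2s\|\bm{A}\|}.
\]
For the lower tail I would use the analogous expansion with $-s$: the resulting series $\sum_{k\geq 2}(-1)^k(2s)^k/(2k)$ is alternating with monotonically decreasing absolute values (for $s\geq 0$), so it is bounded by its first term $s^2$, giving $\log\mathbb{E}[e^{-sY}]\leq s^2\|\bm{A}\|_F^2$.

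Now it is just a Chernoff optimization. For the lower tail, $\mathbb{P}(Y\leq -u)\leq \exp(-su+s^2\|\bm{A}\|_F^2)$, minimized at $s=u/(2\|\bm{A}\|_F^2)$, giving $\exp(-u^2/(4\|\bm{A}\|_F^2))$; setting $u=2\|\bm{A}\|_F\sqrt{t}$ yields the claimed $e^{-t}$. For the upper tail, $\mathbb{P}(Y\geq u)\leq \exp\!\bigl(-su+s^2\|\bm{A}\|_F^2/(1-2s\|\bm{A}\|)\bigr)$ for any $s\in(0,1/(2\|\bm{A}\|))$; the crux is to select $s$ so that the exponent equals $-t$ when $u=2\|\bm{A}\|_F\sqrt{t}+2\|\bm{A}\|t$. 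The ansatz
\[
s\;=\;\frac{\sqrt{t}}{\|\bm{A}\|_F+2\|\bm{A}\|\sqrt{t}}
\]
makes $1-2s\|\bm{A}\|=\|\bm{A}\|_F/(\|\bm{A}\|_F+2\|\bm{A}\|\sqrt{t})$, and a short algebraic check gives $su-s^2\|\bm{A}\|_F^2/(1-2s\|\bm{A}\|)=t$ exactly. This is the one delicate step: guessing a tractable $s$ that collapses the two terms of the threshold $u$ cleanly; once found, verification is a one-line computation.

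The main (small) obstacle is this optimization in the upper-tail Chernoff bound, since the na\"ive quadratic minimization doesn't directly accommodate the $s$-dependent denominator; the rest is standard spectral reduction plus Laurent--Massart-style log bounds.
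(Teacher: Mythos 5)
Your proposal is correct, but it is worth noting that the paper does not actually prove this lemma: it is imported verbatim as Lemma~0.2 of the cited Bechar reference (itself a restatement of Lemma~1 of Laurent--Massart), so you have supplied a self-contained proof of a result the paper only cites. Your argument is the classical one: diagonalize $\bs{A}$, use rotational invariance to reduce to $Y=\sum_i\lambda_i(W_i^2-1)$, bound the cumulant generating function via $\log\mathbb{E}[e^{s(W^2-1)}]=\sum_{k\geq2}(2s)^k/(2k)$, and Chernoff-optimize; I verified that your ansatz $s=\sqrt{t}/(\|\bs{A}\|_F+2\|\bs{A}\|\sqrt{t})$ does make the upper-tail exponent equal $-t$ exactly, and the lower-tail optimization at $s=u/(2\|\bs{A}\|_F^2)$ likewise gives $e^{-t}$. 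One small repair is needed in the lower-tail MGF bound: the alternating-series justification of $\log\mathbb{E}[e^{-s\lambda_i(W_i^2-1)}]\leq(s\lambda_i)^2$ only makes sense when $2s\lambda_i<1$ (otherwise the power series does not converge), whereas your optimal $s=\sqrt{t}/\|\bs{A}\|_F$ can violate this for $t$ of order one. The inequality itself holds for all $s\lambda_i\geq0$: writing $g(x)=x-\tfrac{1}{2}\log(1+2x)$, one has $g(0)=0$ and $g'(x)=2x/(1+2x)\leq 2x=(x^2)'$, so $g(x)\leq x^2$ on $[0,\infty)$. With that one-line substitute for the series argument, your proof is complete and matches the standard Laurent--Massart derivation in every essential respect.
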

We also state the useful matrix inequality used in the proof:
\begin{lemma}\label{lem:matrix-norm}
	For a symmetric matrix $\cov \in\mathbb{R}^{p\times p}$ and $\cov\neq \bs{0}$, we have $$\frac{\emph{tr}(\cov)}{\|\cov\|_F} \geq \left(\frac{\emph{tr}^2(\cov^2)}{\emph{tr}(\cov^4)}\right)^{1/8}. $$
\end{lemma}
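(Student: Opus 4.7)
The plan is to reduce the matrix inequality to a scalar inequality in the eigenvalues of $\cov$ and then exploit log-convexity of the power sums. Since $\cov$ is symmetric and, as a covariance matrix throughout the paper, positive semidefinite, let $\lambda_1,\ldots,\lambda_p \geq 0$ denote its eigenvalues and set $a_k := \text{tr}(\cov^k) = \sum_i \lambda_i^k$ for $k=1,2,3,4$. In these terms $\text{tr}(\cov)/\|\cov\|_F = a_1/\sqrt{a_2}$ and $\text{tr}^2(\cov^2)/\text{tr}(\cov^4) = a_2^2/a_4$, so raising the desired inequality to the eighth power reduces the claim to
\[
a_1^8 \, a_4 \;\geq\; a_2^6.
\]

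The first step is to obtain two Cauchy-Schwarz (log-convexity) bounds. Writing $\lambda_i^2 = \lambda_i^{1/2}\lambda_i^{3/2}$ and $\lambda_i^3 = \lambda_i \cdot \lambda_i^2$ and applying Cauchy-Schwarz to each sum yields
\[
a_2^2 \leq a_1 a_3 \qquad \text{and} \qquad a_3^2 \leq a_2 a_4.
\]
Eliminating $a_3$ between these gives $a_2^4 \leq a_1^2 a_3^2 \leq a_1^2 a_2 a_4$, and hence $a_2^3 \leq a_1^2 a_4$.

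The second step uses nonnegativity of the eigenvalues: expanding $a_1^2 = (\sum_i \lambda_i)^2 = \sum_i \lambda_i^2 + 2\sum_{i<j}\lambda_i\lambda_j$ and dropping the nonnegative cross terms gives $a_1^2 \geq a_2$. Cubing yields $a_1^6 \geq a_2^3$, and combining this with the bound from the first step produces
\[
a_1^8 a_4 \;=\; a_1^6 \cdot (a_1^2 a_4) \;\geq\; a_2^3 \cdot a_2^3 \;=\; a_2^6,
\]
which is the claim after taking eighth roots.

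I do not expect a substantive obstacle; the proof is two applications of Cauchy-Schwarz plus one use of nonnegativity. The only subtlety worth flagging is that the step $a_1^2 \geq a_2$ genuinely needs $\cov$ to be positive semidefinite and would fail for indefinite symmetric matrices, which is harmless because $\cov$ is a covariance matrix everywhere this lemma is invoked (in particular in the proof of Proposition~\ref{prop:are}).
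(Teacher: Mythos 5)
Your proof is correct and follows essentially the same route as the paper: reduce to the power sums $a_k=\sum_i\lambda_i^k$ of the (nonnegative) eigenvalues, establish $a_1^2 a_4\geq a_2^3$ and $a_1^2\geq a_2$, and multiply. The only difference is in how the first of these is obtained: the paper derives $a_1a_4\geq a_2a_3$ from the explicit identity $a_1a_4-a_2a_3=\tfrac12\sum_{i\neq j}\lambda_i\lambda_j(\lambda_i+\lambda_j)(\lambda_i-\lambda_j)^2\geq 0$ and combines it with $a_1a_3\geq a_2^2$, whereas you replace that identity with a second Cauchy--Schwarz bound $a_3^2\leq a_2a_4$ and eliminate $a_3$; both are valid and yours is marginally more uniform in that it uses only log-convexity of the power sums. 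Your remark that the argument (and indeed the lemma itself) requires $\cov\succeq 0$, not mere symmetry, is accurate and matches the implicit use of $\lambda_i\geq 0$ in the paper's proof.
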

The proof of Lemma~\ref{lem:matrix-norm} can be found in Section~\ref{SecPfMatrix-norm}. Using Lemma~\ref{lem:gaussian-concentration}, we first show $\mathbb{P}(\mathcal{E}_1)\goto 1$. By assumption \textbf{(A)}, we can write $\bs{\Sigma}^{1/2}\bs{\beta} / \|\bs{\Sigma}^{1/2}\bs{\beta}\|_2$ as $\bs{Z}/\|\bs{Z}\|_2$, where $\bs{Z}\sim \mathcal{N}(\bs{0}, \bs{I}_p)$. Then
\[
\frac{\Delta_k^2}{\bs{\beta}^\top\bs{\Sigma}\bs{\beta}} = \frac{1}{\|\bs{Z}\|_2^2} \bs{Z}^\top \bs{\Sigma}^{1/2}S_k(S_k^\top \boldsymbol{\Sigma} S_k)^{-1} S_k^\top \boldsymbol{\Sigma}^{1/2}\bs{Z} \defn \frac{1}{\|\bs{Z}\|_2^2} \bs{Z}^\top\bm{P} \bs{Z},
\]
where we denote $\bs{P}\defn\bs{\Sigma}^{1/2}S_k(S_k^\top \boldsymbol{\Sigma} S_k)^{-1} S_k^\top \boldsymbol{\Sigma}^{1/2}$.
To apply the second statement of Lemma~\ref{lem:gaussian-concentration}, we first calculate $\text{tr}(\bm{P})$ and $\|\bm{P}\|_F$.
By $\text{tr}(\bs{AB}) = \text{tr}(\bs{BA})$, it follows that $\text{tr}
(\bm{P}) = \text{tr}((S_k^\top \boldsymbol{\Sigma} S_k)^{-1}(S_k^\top \boldsymbol{\Sigma} S_k)) = \text{tr}(\bs{I}_k) = k$. Also notice that $\bs{P}$ is a projection matrix with rank $k$, and then $\|\bs{P}\|_F = \sqrt{\text{tr}(\bs{P}^\top\bs{P})} = \sqrt{\text{tr}(\bs{P})} = \sqrt{k}$. 
By choosing $t=\frac{3-2\sqrt{2}}{8} k$, we have, for some universal constant $C>0$,
\[
\mathbb{P}\left(
\bs{Z}^\top \bm{P} \bs{Z} \leq \frac{k}{\sqrt{2}}
\right) \leq \exp(-Ck).
\]
By the law of large numbers, $\|\bs{Z}\|_2^2/p \goto 1$ almost surely as $p\goto\infty$. Thus $\mathbb{P}(\|\bs{Z}\|_2^2 \geq \sqrt{2}p) \goto 0$. By the above reasoning and the following lower bound
\[
\mathbb{P}\left( \mathcal{E}_1\right)  \geq 1 - \mathbb{P}\left( \|\bs{Z}\|_2^2\geq \sqrt{2}p\right)  + \mathbb{P}\left( \bs{Z}^\top \bm{P} \bs{Z} \leq \frac{k}{\sqrt{2}}\right),
\]
we know $\mathbb{P}(\mathcal{E}_1)\goto 1$ as $k\goto\infty$ (recall that we assume $p\geq n/2$ and $k\goto\infty$ as $n\goto\infty$).

We complete the proof by showing $\mathbb{P}(\mathcal{E}_2)\goto 1$. Similar to the proof in the first part, we may write 
\[
\frac{\|\bs{\Sigma\beta}\|^2}{\bs{\beta}^\top\bs{\Sigma}\bs{\beta}} = \frac{1}{\|\bs{Z}\|^2} \bs{Z}^\top \bs{\Sigma}\bs{Z}.
\]
Slightly modifying the first statement of Lemma~\ref{lem:gaussian-concentration} yields
\begin{align*}
&\mathbb{P}\left( 
\bs{Z}^\top \bs{\Sigma}\bs{Z} \geq \text{tr}(\bs{\Sigma}) + 2 \|\cov\|_F \sqrt{t_1} + 2 \|\cov\|_2 t_2\right) \\
\leq ~&\mathbb{P}\left( 
\bs{Z}^\top \bs{\Sigma}\bs{Z} \geq \text{tr}(\bs{\Sigma}) + 2 \|\cov\|_F \sqrt{\min(t_1, t_2)} + 2 \|\cov\|_2 \min(t_1, t_2)\right) \\
\leq ~& \exp(-\min(t_1, t_2)).
\end{align*}
Choose $\sqrt{t_1} = \frac{\text{tr}(\cov)}{24 \|\cov\|_F}$ and $t_2 = \frac{\text{tr}(\cov)}{24 \|\cov\|_2}$. By $\|\cov\|_F \geq \|\cov\|_2$, we know $\sqrt{t_1}\leq t_2$. By Lemma~\ref{lem:matrix-norm} and Condition~\eqref{eq:zc-condition}, we observe $\sqrt{t_1} \goto\infty$ as $p\goto\infty$. Then
\[
\mathbb{P}\left(\bs{Z}^\top \bs{\Sigma}\bs{Z} \geq\sqrt{2}\text{tr}(\cov)\right)  \goto 0, \quad p\goto\infty.
\]
Similar to the first part, we have
\[
\mathbb{P}(\mathcal{E}_2) \geq 1 - \mathbb{P}(\|\bs{Z}\|^2 \geq \sqrt{2}p) - \mathbb{P}\left(\bs{Z}^\top \bs{\Sigma}\bs{Z} \geq\sqrt{2}\text{tr}(\cov)\right).
\]
Recall that we have shown $\mathbb{P}(\|\bs{Z}\|^2 \geq \sqrt{2}p) \goto 0$, and thus it follows that $\mathbb{P}(\mathcal{E}_2)\goto 1$.

\subsection{Details of Example~\ref{exp-1} and Example~\ref{exp-2}}\label{section:first two examples}
With the recommended choice $k=\lfloor n/2 \rfloor$, expression \eqref{eq: upper bound} in the main text becomes 
\begin{align*}
\text{ARE}_n(\Psi^{ZC}_n;\Psi^S_n) \leq
8\frac{\text{tr}(\boldsymbol{\Sigma})}{\sqrt{\text{tr}(\boldsymbol{\Sigma^2})}}  \frac{1}{\sqrt{n}}.
\end{align*}
Hence in both examples, we only need to deal with the right hand side of the inequality. 

For Example~\ref{exp-1}, we have 
\[
\text{tr}(\boldsymbol{\Sigma^2}) \geq \lambda_1^2 + \dots + \lambda_s^2 \overset{(i)}{\geq} (\lambda_1 + \dots + \lambda_s)^2 / s \overset{(ii)}{\geq} (1-\epsilon)^2 \text{tr}^2(\boldsymbol{\Sigma}) /s,
\]
where step~(i) follows by Cauchy-Schwarz inequality and step~(ii) uses the condition $\lambda_1+ \dots + \lambda_s \geq (1-\epsilon) \cdot \text{tr}(\boldsymbol{\Sigma})$. This inequality further implies that 
\begin{align*}
\text{ARE}_n(\Psi^{ZC}_n;\Psi^{S}_n) \leq 8\frac{\text{tr}(\boldsymbol{\Sigma})}{\sqrt{\text{tr}(\boldsymbol{\Sigma^2})}}  \frac{1}{\sqrt{n}} \leq \frac{8\sqrt{s}}{(1-\epsilon)\sqrt{n}}.
\end{align*}

Now we can see that $\sqrt{s} \leq (1-\epsilon) t_n \sqrt{n} / 8$ yields $\text{ARE}_n(\Psi^{ZC}_n;\Psi^{S}_n) \leq t_n$. When $s\asymp \sqrt{n}$, we have $\frac{8\sqrt{s}}{(1-\epsilon)\sqrt{n}} \asymp n^{-1/4}$ and then $\text{ARE}_n(\Psi^{ZC}_n;\Psi^{S}_n) \lesssim n^{-1/4}$.

For Example~\ref{exp-2}, with $\lambda_i(\bs{\Sigma}) = ai^{-1/2}$, we know 
\[
\text{tr}(\boldsymbol{\Sigma^2}) \leq a^2(1+\ln p); \quad \text{tr}(\boldsymbol{\Sigma}) \geq 2a(\sqrt{p} - 1) \quad \Longrightarrow \quad \text{ARE}_n(\Psi^{ZC}_n;\Psi^{S}_n) \leq 8 \frac{2 \sqrt{\rho}}{\sqrt{\ln p + 1}}.
\]
Thus we show the claims in Example~\ref{exp-1} and Example~\ref{exp-2}.


\subsection{Proof of Theorem~\ref{thm:optimality}}\label{section:theorem 2}
We establish Theorem~\ref{thm:optimality} by first proving an information theoretic lower bound and then proving that our test achieves this lower bound. 
Recall that in the proof, we use the fact that when the sketching dimension is chosen as $k\asymp r$, we have $\Delta_k^2/\bs{\beta}^\top\cov\bs{\beta} \overset{p}{\goto} 0$ (Lemma~\ref{lem:minimax}).

\subsubsection{Lower bound}
We start with the lower bound that is based on standard Le Cam's framework. Our argument is particularly similar to that in \cite{carpentier2018minimax}. Without loss of generality, we assume $\sigma^2 = 1$. First, we define a new parameter class $B_r(\tau)$ as
\[
B_r(\tau) = \left\lbrace \bs{\beta}\in\mathbb{R}^p: \|\bs{\beta}\|_2 \geq \tau, \beta_i = 0 \text{ for } r+1\leq i\leq p \right\rbrace.
\]
By definition of $\Theta_r(\tau)$, we can easily see that for any $\bs{\beta} \in B_r(\tau)$ and $\bs{\Sigma}_0 = \text{diag}\left(\bs{1}_r, \bs{0}_{p-r}\right)$, it follows $(\bs{\beta}, \bs{\Sigma}_0) \in \Theta_r(\tau)$.
Then the minimax Type II error can be bounded by
\[
\inf_{\psi} \sup_{\bs{\beta}\in B_r(\tau)} \mathbb{P}_{\bs{\beta}, \bs{\Sigma}_0}(\psi = 0) \leq \mathcal{R}_r(\tau).
\]
Let $\mu$ be a probability measure on $B_r(\tau)$. Consider any family of probability measures $P_{\bs{\beta}}$ indexed by $\bs{\beta}\in B_r(\tau)$. Denote by $\mathbb{P}_{\mu}$ the mixture probability measure 
\[
\mathbb{P}_{\mu} = \int_{B_r(\tau)} P_{\bs{\beta}} \ \mu(d\bs{\beta}).
\]
Also let $\chi^2(P', P) = \int (dP'/dP)^2 dP - 1$ be the chi-square divergence between two probability measures $P'\ll P$. Then,
\begin{align*}
\alpha + \mathcal{R}_r(\tau) & \geq \inf_{\psi}\sup_{\bs{\beta}\in B_r(\tau)}\left\lbrace \mathbb{P}_0(\psi = 1) + \mathbb{P}_{\bs{\beta}, \bs{\Sigma}_0}(\psi = 0) \right\rbrace \\
& \geq 1 - \sqrt{\chi^2(\mathbb{P}_{\mu}, P_0)},
\end{align*}
in which the infimum is taken over all test functions based on $(\bs{X}, \bm{y})$. To show the lower bound, it suffices to show that, for $\tau=\tau(A, n) =  \frac{A r^{1/4}}{\sqrt{n}}$, we can find $\mu_{\tau}$ such that
\begin{equation}
\chi^2(\mathbb{P}_{\mu_{\tau}}, P_0) \leq 1 + o_A(1),
\end{equation}
where $o_A(1)$ tends to $0$ as $A\goto 0$.

Note that when $\bs{\Sigma} = \bs{\Sigma}_0$, data matrix $\bs{X}$ under the null and alternative model only differs in the first $r$ features. Thus the chi-square divergence is essentially the divergence between two $r$-dimensional distributions, which allows us to borrow techniques for linear regression with $\bs{\Sigma} = \bs{I}_r$. More specifically, we may apply the results in Section 7.1 of \cite{collier2017minimax} and observe that
\begin{equation}
\chi^2(\mathbb{P}_{\mu_{\tau}}, P_0) \leq \exp(A^2)
\end{equation}
for some properly chosen $\mu_{\tau}$. See Section 7.1 of \cite{collier2017minimax} or Section 4.4 of \cite{carpentier2018minimax} for more details.

\subsubsection{Upper bound}
We now turn to the upper bound. Recall that we always assume $\bs{\beta}^\top\cov\bs{\beta} = O(1)$, since the problem is trivial otherwise. 
In order to show the upper bound, following the definition (ii), it suffices to show, if we choose $\psi^S$ to be the sketched $F$-test in Algorithm~\ref{alg:example} associated with any fixed sequence of sketching matrix $\{S_k\}\in\mathcal{A}$, it holds that
$$r(\psi^S, \bs{\beta}_n, \bs{\Sigma}_n) = o_P(1), \quad \text{when} \quad \tau_n/\epsilon_n \goto \infty \text{ and } (\bs{\beta}, \cov) \in \Theta_r(\tau).$$
For $(\bs{\beta}, \cov) \in \Theta_r(\tau)$, by Chebyshev inequality, we have 
\begin{equation}\label{eq:upper-bound}
\mathbb{E}_{\bs{\beta}, \cov}\left[1 - \psi^S\right] = \mathbb{P}(F^S > q_{\alpha, k, n-k})
\leq \frac{\text{Var}_{\bs{\beta}, \cov}(F^S)}{(q_{\alpha, k, n-k}- \mathbb{E}_{\bs{\beta}, \cov}\left[F^S\right])^2}.
\end{equation}

We claim that the following inequalities hold, and leave their proofs to the end of this section:
\begin{align}\label{eq:order}
& \text{Var}_{\bs{\beta}, \cov}(F^S) \leq \frac{C}{r^2} \left[ \frac{r^2+\lambda^2}{n} + (r+\lambda)\right] \quad \text{and} \\[.5em]
& (q_{\alpha, k, n-k}- \mathbb{E}_{\bs{\beta}, \cov}\left[F^S\right])^2 \geq \frac{\lambda^2}{2 r^2}, \label{eq:order2}
\end{align}
for any fixed $S_k \in \mathcal{A}$. Here we define $\lambda := n\Delta_k^2/\nu^2$ which satisfies $\sqrt{r}/\lambda = o_P(1)$.
As a consequence of expression \eqref{eq:order}, we have
\[
\mathbb{E}_{\bs{\beta}, \cov}\left[1 - \psi^S\right] \leq C \frac{(r^2+\lambda^2)/n + (r+\lambda)}{\lambda^2} = o_P(1).
\]
This completes the proof.

\subsubsection*{Proof of inequalities \eqref{eq:order} and \eqref{eq:order2}}
We omit the subscript $\bs{\beta}$  and $\cov$ of $\text{Var}$ and $\mathbb{E}$ for short. Recall that we define $\boldsymbol{\beta}^S = (S_k^\top \boldsymbol{\Sigma} S_k)^{-1} S_k^\top \boldsymbol{\Sigma} \boldsymbol{\beta}$ and $\nu^2 = \sigma^2 + \boldsymbol{\beta}^\top \boldsymbol{\Sigma}\boldsymbol{\beta}-\Delta_k^2$. Following the reasoning in the proof of Theorem~\ref{thm:power-function},we have under $H_1$,  
\[
F^S \vert \bs{X} \sim F_{k,n-k}(\lambda(\bs{X})) \quad \text{where} \quad \lambda(\bs{X}) = \frac{(\bs{\beta}^S)^\top S_k^\top \bs{X}^\top \bs{X} S_k \bs{\beta}^S}{\nu^2}.
\]
By the moment expressions of a non-central $F$-statistic, it 
can be easily seen that
\begin{align*}
& \mathbb{E}[F^S|\bs{X}] = \frac{(n-k)(k+\lambda(\bs{X}))}{k(n-k-2)}, \\[.5em]
& \text{Var}(F^S|\bs{X}) = 2 \frac{(k+\lambda(\bs{X}))^2 + (k + 2\lambda(\bs{X}))(n-k-2)}{(n-k-2)^2(n-k-4)} \left(\frac{n-k}{k}\right)^2.
\end{align*}
Then we have, with $\lambda \defn \mathbb{E}[\lambda(\bs{X})] = n\Delta_k^2/\nu^2$ and $\text{Var}(\lambda(\bs{X})) = 2\lambda^2/n$,
\begin{align*}
& \text{Var}(\mathbb{E}[F^S|\bs{X}]) \leq \frac{2}{k^2} \text{Var}(\lambda(\bs{X})), \\[.5em]
& \mathbb{E}[\text{Var}(F^S|\bs{X})] \leq \frac{C}{k^2} \left[ \frac{(k+\lambda)^2}{n} + (k+\lambda) + \frac{\text{Var}(\lambda(\bs{X}))}{n}
\right].
\end{align*}
By the law of total variance,
\[
\text{Var}(F^S) = \text{Var}(\mathbb{E}[F^S|\bs{X}]) + \mathbb{E}[\text{Var}(F^S|\bs{X})] \leq \frac{C}{k^2} \left[ \frac{(k+\lambda)^2}{n} + (k+\lambda)\right],
\]
which proves inequality~\eqref{eq:order} under the assumption $k \asymp r$. 

To prove inequality~\eqref{eq:order2}, notice that
\[
\mathbb{E}[F^S] = \frac{(n-k)(k+\lambda)}{k(n-k-2)}.
\]
In addition, Lemma C.8. of \cite{steinberger2016relative} yields
\[
q_{\alpha, k, n-k} = 1 + \sqrt{\frac{2n}{k(n-k)}} z_{\alpha} + o(k^{-1/2}).
\]
By the assumption $\tau_n/\epsilon_n \goto \infty$, it follows $\lambda \gg \sqrt{r}$. After checking each term in $(q_{\alpha, k, n-k}-\mathbb{E}[F^S])^2$, we have 
\[
(q_{\alpha, k, n-k}-\mathbb{E}[F^S])^2 = \frac{\lambda^2}{k^2}(1+o(1)),
\]
which verifies inequality~\eqref{eq:order2}.


\subsection{Proof of Theorem~\ref{thm:power-3}}\label{section:theorem on power guarantee}
By Theorem~\ref{thm:power-2} and Lemma~\ref{lem:minimax}, it suffices to show that when $\Delta_k^2/ \bs{\beta}^\top\cov\bs{\beta} \overset{p}{\goto} 1$, we have 
\begin{equation}
y_n := \Phi\left(
- z_{\alpha} + \frac{\sqrt{n}\Delta_k^2}{\sigma^2} \sqrt{\frac{1-k/n}{2k/n}}
\right) - \Phi\left(
- z_{\alpha} + \frac{\sqrt{n}\bs{\beta}^\top\cov\bs{\beta}}{\sigma^2} \sqrt{\frac{1-k/n}{2k/n}}
\right) \overset{p}{\goto} 0.
\end{equation}
For ease of notation, let us write $a_n = \frac{\sqrt{n}\bs{\beta}^\top\cov\bs{\beta}}{\sigma^2} \sqrt{\frac{1-k/n}{2k/n}}$ and $\eta_n = \frac{\bs{\beta}^\top\cov\bs{\beta}-\Delta_k^2}{\bs{\beta}^\top\cov\bs{\beta}}$. Then we have $\eta_n\overset{p}{\goto} 0$ and $\eta_n\geq 0$, due to the fact that $\Delta_k^2\leq \bm{\beta}^\top\cov\bm{\beta}$. Assume $n$ is large enough,  such that $ \eta_n \leq 1/2$.
By Lipschitz-1 property of $\Phi(\cdot)$, we have 
\begin{equation}\label{eq:yn-1}
|y_n| \leq \eta_n a_n.
\end{equation}
On the other hand, we have
\begin{equation}\label{eq:yn-2}
\begin{aligned}
|y_n| & \overset{\text{(i)}}{\leq} \Phi\left(
z_{\alpha} - \frac{\sqrt{n}\Delta_k^2}{\sigma^2} \sqrt{\frac{1-k/n}{2k/n}}
\right) + \Phi\left(z_{\alpha} -  \frac{\sqrt{n}\bs{\beta}^\top\cov\bs{\beta}}{\sigma^2} \sqrt{\frac{1-k/n}{2k/n}}\right) \\
& \leq 2 \Phi\left(
z_{\alpha} - \frac{\sqrt{n}\Delta_k^2}{\sigma^2} \sqrt{\frac{1-k/n}{2k/n}}
\right) \\
& \overset{\text{(ii)}}{\leq} 2 \Phi(z_{\alpha} - a_n/2)\\
& \overset{\text{(iii)}}{\leq} 2 \exp\left\lbrace - \frac{\bm{1}_{\left\lbrace z_{\alpha} - a_n/2\leq 0\right\rbrace } (z_{\alpha} - a_n/2)^2 }{2}\right\rbrace,
\end{aligned}
\end{equation}
where step (i) is due to $\Phi(x)-\Phi(y) \leq \left| \Phi(x)-\Phi(y)\right| = \left| \Phi(-x)-\Phi(-y)\right|\leq \Phi(-x)+\Phi(-y)$,
step (ii) follows from $\eta_n\leq 1/2$ and step (iii) uses the Gaussian tail bound $\Phi(x) \leq 2\exp(-\bm{1}_{\left\lbrace x\leq 0\right\rbrace } x^2/2)$.

Combining inequalities \eqref{eq:yn-1} and \eqref{eq:yn-2}, we have 
\[
|y_n| \leq \min\left\lbrace \eta_n a_n, 2 \exp\left\lbrace - \frac{\bm{1}_{\left\lbrace z_{\alpha} - a_n/2\leq 0\right\rbrace } (z_{\alpha} - a_n/2)^2 }{2}\right\rbrace \right\rbrace.
\]
Given $\eta_n>0$, we know $\eta_n a_n$ and $2 \exp\left\lbrace - \bm{1}_{\left\lbrace z_{\alpha} - a_n/2\leq 0\right\rbrace } (z_{\alpha} - a_n/2)^2 /2\right\rbrace $ are monotone increasing and decreasing respectively as functions of $a_n$. Then we have the upper bound
\begin{equation}\label{eq:bound-of-yn}
|y_n| \leq 2 \exp\left\lbrace - \frac{\bm{1}_{\left\lbrace z_{\alpha} - f(\eta_n)/2\leq 0\right\rbrace } (z_{\alpha} - f(\eta_n)/2)^2 }{2}\right\rbrace,
\end{equation}
where $f(\eta_n)$ is the unique $x_n$ that solves
\[
\eta_n x_n = 2 \exp\left\lbrace - \frac{\bm{1}_{\left\lbrace z_{\alpha} - x_n/2\leq 0\right\rbrace } (z_{\alpha} - x_n/2)^2 }{2}\right\rbrace .
\]
We can directly check that $f(\eta_n)$ is a monotone decreasing function of $\eta_n$, and $\displaystyle{\lim_{\eta_n\rightarrow 0^+}} f(\eta_n) = +\infty$. Then 
\[
\displaystyle{\lim_{\eta_n\rightarrow 0^+}} 2 \exp\left\lbrace - \frac{\bm{1}_{\left\lbrace z_{\alpha} - f(\eta_n)/2\leq 0\right\rbrace } (z_{\alpha} - f(\eta_n)/2)^2 }{2}\right\rbrace= 0.
\]
By bound~\eqref{eq:bound-of-yn}, it follows that $y_n \overset{p}{\goto} 0$. This completes the proof of Theorem~\ref{thm:power-3}.


\subsection{Details of Examples in Section~\ref{sec:examples}}
\label{section:details of second set of examples}
In this section, we provide details of Examples~\ref{example:polynomial}--\ref{example:beta-structure} with $\eta = 1/\log p$. Note that for Examples~\ref{example:polynomial}--\ref{example:block}, the conditions in Definition~\ref{def: intrinstic dim} essentially boil down to $	\sum_{i=r+1}^p \lambda_i \leq \eta \sum_{i=1}^p \lambda_i $ and $r\lambda_{r+1} \leq \eta \sum_{i=1}^p \lambda_i$.

To start with Example~\ref{example:polynomial}, notice that
\[
\sum_{i=r+1}^p\lambda_i \asymp r^{-\alpha + 1}, \quad  \sum_{i=1}^p \lambda_i \asymp 1.
\]
Then the conditions translate to $r^{-\alpha + 1} \leq 1/\log p$, or equivalently, there exists $r \lesssim (\log p)^{\frac{1}{\alpha-1}}$ such that $(\bm{\beta}, \cov)\in \mathcal{D}(r)$ as stated in Example~\ref{example:polynomial}.

For Example~\ref{example:exponential}, first note that $\lambda_k/\lambda_{k+1} = \exp((k+1)^{\gamma}-k^{\gamma})$ by definition. When $\gamma\geq 1$, we have $\lambda_k/\lambda_{k+1}\geq e\geq 1 + 1/k$; whereas $0 < \gamma < 1$, we have
$$(k+1)^{\gamma}-k^{\gamma} = \gamma \int_{k}^{k+1} x^{\gamma-1} dx \geq \gamma (k+1)^{\gamma-1} \geq \frac{\gamma}{k}.$$
Thus $\lambda_k/\lambda_{k+1} \geq 1 + \gamma/k$. In either case, we know $\left\lbrace \lambda_k\right\rbrace $ decays faster than $(\gamma \land 1)/k$. Then observe that
\[
\sum_{i=r+1}^p\lambda_i \leq \lambda_r \sum_{i=r+1}^{p} \frac{\gamma \land 1}{i} \lesssim \exp(-r^{\gamma}) (\log p), \quad  \sum_{i=1}^p \lambda_i \asymp 1.
\]
Thus the conditions translate to $\exp(-r^{\gamma}) (\log p) \leq 1/\log p$ and $r\exp(-r^{\gamma})\leq 1/\log p$, or there exists $r \lesssim (\log \log p)^{\frac{1}{\gamma}}$ such that $(\bm{\beta}, \cov)\in \mathcal{D}(r)$ as stated in Example~~\ref{example:exponential}.

For Example~\ref{example:block}, when $r\geq m$, we have 
\begin{align*}
\sum_{i=r+1}^p \lambda_i \leq p(1-\rho) \lesssim \frac{p}{\log p} \quad \text{and} \quad \sum_{i=1}^p \lambda_i = p.
\end{align*}
Then we can directly see the two conditions are satisfied.

For Example~\ref{example:beta-structure}, we know that 
$$\sum_{i=1}^r \widetilde{\beta}_i \asymp \log r, \quad \sum_{i=r+1}^p \lambda_i \asymp \log p - \log r, \quad \sum_{i=r+1}^p \widetilde{\beta}_i^2 \lambda_i\asymp 1/r, \quad \bs{\beta}^\top \cov \bs{\beta} \asymp 1.$$
Then the first condition of Definition~\ref{def: intrinstic dim} is now $\log r (\log p - \log r)/r + 1/r \leq 1/\log p$, or $r \geq \log^2 p \log r$. Thus we can see that there exists $r \lesssim (\log p)^3$ satisfying both conditions.


\section{Auxiliary proofs}
\label{section:additional-proof}

\subsection{Remaining proof of Lemma~\ref{lem:minimax}}
In this part, we prove that, when $(\bs{\beta}, \cov)\in\mathcal{D}(r)$ and $ar \leq k\leq br$, we have $\Delta_k^2/\bs{\beta}^\top\cov\bs{\beta} \overset{p}{\goto } 1$. 
In order to show this convergence result, we make use of the following lemma. 
\begin{lemma}\label{lem:spectral-bounds}
	For $\forall a>1$, if we choose sketching dimension to be $a r \leq k\leq C_1 \ \frac{\sum_{i=r+1}^p \lambda_i}{\lambda_{r+1}}$, then with probability at least $1-\exp(-c_2r)-\exp(-c_1\frac{\sum_{i=r+1}^p \lambda_i}{\lambda_{r+1}})$, we obtain 
	\begin{enumerate}
		\item $\kappa(\widetilde{S}_2^\top \Lambda_{p-r}\widetilde{S}_2) \leq 4$; 
		\item $\lambda_{\max}(\widetilde{S}_2^\top \Lambda_{p-r}\widetilde{S}_2) \leq 2 \sum_{i=r+1}^p \lambda_i$;
		\item $\kappa(\widetilde{S}_1\widetilde{S}_1^\top) \leq C_2$;
		\item $\lambda_{\min}(\widetilde{S}_1\widetilde{S}_1^\top) \geq C_2^{-1} k$,
	\end{enumerate}
	where $c_1, c_2, C_1, C_2$ are universal constants only depending on $a$.
\end{lemma}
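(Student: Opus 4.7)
First I would exploit rotational invariance: because $[\bs{U}_r \ \bs{U}_{p-r}]$ is an orthogonal matrix and $S_k$ has i.i.d.\ $\mathcal{N}(0,1)$ entries, the stacked matrix $[\widetilde{S}_1 ; \widetilde{S}_2]$ has the same distribution as $S_k$. Hence $\widetilde{S}_1 \in \mathbb{R}^{r \times k}$ and $\widetilde{S}_2 \in \mathbb{R}^{(p-r)\times k}$ are independent Gaussian matrices with i.i.d.\ standard Gaussian entries, and I can treat the two blocks of claims separately.

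\textbf{Claims (3)--(4).} These are pure extremal singular value estimates for a tall Gaussian matrix $\widetilde{S}_1$ with $k \geq ar > r$. A standard nonasymptotic bound (e.g.\ Corollary~5.35 of \cite{vershynin2010introduction}) gives
\[
\sqrt{k} - \sqrt{r} - t \;\leq\; \sigma_{\min}(\widetilde{S}_1^\top) \;\leq\; \sigma_{\max}(\widetilde{S}_1^\top) \;\leq\; \sqrt{k} + \sqrt{r} + t,
\]
with probability at least $1 - 2\exp(-t^2/2)$. Taking $t = \epsilon \sqrt{r}$ with $\epsilon$ so small that $(1+\epsilon)/\sqrt{a} < 1$, and squaring, we get $\lambda_{\min}(\widetilde{S}_1\widetilde{S}_1^\top) \geq C_2^{-1} k$ and $\kappa(\widetilde{S}_1\widetilde{S}_1^\top) \leq C_2$ for a constant $C_2 = C_2(a)$, with failure probability $\exp(-c_2 r)$.

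\textbf{Claims (1)--(2).} Write $M := \widetilde{S}_2^\top \bs{\Lambda}_{p-r} \widetilde{S}_2 = \sum_{l=1}^{p-r} \lambda_{r+l}\, v_l v_l^\top$, where $v_l \in \mathbb{R}^k$ denotes the $l$-th row of $\widetilde{S}_2$, so $v_l \stackrel{iid}{\sim} \mathcal{N}(0, I_k)$. Then $\mathbb{E}[M] = \mathrm{tr}(\bs{\Lambda}_{p-r})\, I_k$. For a fixed unit vector $u \in S^{k-1}$, $u^\top M u - \mathrm{tr}(\bs{\Lambda}_{p-r}) = \sum_l \lambda_{r+l}(g_l^2 - 1)$ with $g_l \stackrel{iid}{\sim} \mathcal{N}(0,1)$, and the Bernstein-type bound (Lemma~\ref{lem:gaussian-concentration}) yields
\[
\mathbb{P}\bigl(|u^\top M u - \mathrm{tr}(\bs{\Lambda}_{p-r})| \geq 2\|\bs{\Lambda}_{p-r}\|_F\sqrt{t} + 2\|\bs{\Lambda}_{p-r}\|_2\, t\bigr) \leq 2 e^{-t}.
\]
A $\tfrac{1}{4}$-net on $S^{k-1}$ of cardinality $\leq 9^k$ together with the inequality $\|M - \mathbb{E} M\|_2 \leq 2 \max_{u\in\mathcal{N}} |u^\top(M - \mathbb{E} M)u|$ upgrades this to an operator-norm bound with failure probability $2\cdot 9^k e^{-t}$.

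\textbf{Choosing $t$ and combining.} The key step is to pick $t = c_1 \sum_{i=r+1}^p \lambda_i / \lambda_{r+1}$. Using $\|\bs{\Lambda}_{p-r}\|_F \leq \sqrt{\|\bs{\Lambda}_{p-r}\|_2 \cdot \mathrm{tr}(\bs{\Lambda}_{p-r})}$, the deviation becomes at most $(4\sqrt{c_1} + 4c_1)\mathrm{tr}(\bs{\Lambda}_{p-r}) \leq \tfrac{1}{2}\mathrm{tr}(\bs{\Lambda}_{p-r})$ for $c_1$ small enough; this yields $\lambda_{\max}(M) \leq 2\sum_{i=r+1}^p \lambda_i$ and $\kappa(M) \leq 3 < 4$. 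The combinatorial factor $9^k$ is absorbed into $e^{-t}$ because $k \leq C_1 \sum_{i=r+1}^p \lambda_i / \lambda_{r+1}$ with $C_1$ chosen small relative to $c_1$, leaving an overall failure probability $\exp(-c_1' \sum_{i=r+1}^p \lambda_i/\lambda_{r+1})$. A union bound with the event from (3)--(4) produces the stated probability.

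\textbf{Main obstacle.} The delicate point is calibrating $t$ so that the exponential decay $e^{-t}$ dominates both (a) the $9^k$ covering cost and (b) the Bernstein tradeoff between the sub-Gaussian term $\|\bs{\Lambda}_{p-r}\|_F \sqrt{t}$ and the sub-exponential term $\|\bs{\Lambda}_{p-r}\|_2 t$. This is exactly why the bound on $k$ must scale with the effective rank $\mathrm{tr}(\bs{\Lambda}_{p-r})/\lambda_{r+1}$ and not with any crude notion of ambient dimension; getting the right constants $c_1, C_1$ simultaneously is the real content of the lemma.
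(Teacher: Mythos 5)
Your proposal is correct and follows essentially the same route as the paper: claims (3)--(4) via the standard nonasymptotic singular-value bounds for a tall Gaussian matrix (the paper cites Lemma 4 of \cite{NIPS2011_4260}, which is the same estimate as Vershynin's Corollary 5.35), and claims (1)--(2) via a $1/4$-net plus Bernstein/Laurent--Massart concentration for weighted chi-squares, with the effective rank $\sum_{i>r}\lambda_i/\lambda_{r+1}$ calibrated to dominate the $9^k$ covering cost exactly as in the paper's Lemma~\ref{lem:minimal-eigenvalue} (applied there to $\bs{\Lambda}_{p-r}^{1/2}\widetilde{S}_2$ with $t=1/3$). The only cosmetic difference is that you run the net argument directly on the quadratic form rather than packaging it as a singular-value statement for $\bs{\Lambda}_{p-r}^{1/2}\widetilde{S}_2$.
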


The proof of Lemma~\ref{lem:spectral-bounds} can be found at the end of this section. Now suppose that Lemma~\ref{lem:spectral-bounds} is given and also assume that $ar \leq k \leq br$ with $a>1$.
Recall from the main text that for any $\bs{\xi}$ that satisfies condition \eqref{eq:restriction}, i.e.
$\bs{U}_r^\top\bs{\beta} = \bs{U}_r^\top S_k\bs{\xi}$, it holds that
\begin{equation*}
\Delta_k^2  \geq  \bs{\beta}^\top \cov \bs{\beta} -(\bs{\beta}-S_k\bs{\xi})^\top \cov_{p-r} (\bs{\beta}-S_k\bs{\xi}).
\end{equation*}
Given any $\cov_{p-r}^+ \succeq \cov_{p-r}$, it is easily seen that the following inequality always holds 
\begin{equation*}
\Delta_k^2  \geq  \bs{\beta}^\top \cov \bs{\beta} -(\bs{\beta}-S_k\bs{\xi})^\top \cov_{p-r}^+ (\bs{\beta}-S_k\bs{\xi}).
\end{equation*}
With this new $\cov_{p-r}^+$ and under the same restriction $\bs{U}_r^\top\bs{\beta} = \bs{U}_r^\top S_k\bs{\xi}$, we can find $\bs{\xi}^*$ that minimizes
$(\bs{\beta}-S_k\bs{\xi})^\top \cov_{p-r}^+ (\bs{\beta}-S_k\bs{\xi})$
and then apply Lemma~\ref{lem:delta-bound-1} to obtain
\[
\bs{\beta}^\top\cov\bs{\beta} - \Delta_k^2 \leq \frac{2\|\widetilde{S}_2^\top\bs{\Lambda}_{p-r}^+ \widetilde{S}_2\|_2}{\lambda_{\min}(\widetilde{S}_1\widetilde{S}_1^\top)}\|\widetilde{\bs{\beta}}_1\|_2^2 + 2\left( 1 + \kappa(\widetilde{S}_2^\top\bs{\Lambda}_{p-r}^+ \widetilde{S}_2) \kappa(\widetilde{S}_1\widetilde{S}_1^\top) \right) \cdot \widetilde{\bs{\beta}}_2^\top \bs{\Lambda}_{p-r}^+ \widetilde{\bs{\beta}}_2.
\]
Here we write $\cov_{p-r}^+ = \bs{U}_{p-r}^\top \bs{\Lambda}_{p-r}^+\bs{U}_{p-r}$, with $\bs{\Lambda}_{p-r}^+ = \text{diag}(\lambda_{r+1}^+, \dots, \lambda_p^+)$. Then by Lemma~\ref{lem:minimal-eigenvalue}, when sketching dimension $k$ satisfies $ar \leq k \leq C_1 \sum_{i=r+1}^p \lambda_i^+ / \lambda_{r+1}^+$, we have
\begin{equation}\label{eq:delta-bound-lifting}
\begin{aligned}
\bs{\beta}^\top\cov\bs{\beta} - \Delta_k^2 ~\leq~ & 
\frac{4C_2\sum_{i=r+1}^p \lambda_i^+}{k} \|\widetilde{\bs{\beta}}_1\|_2^2 + 2(1+4C_2) \cdot \widetilde{\bs{\beta}}_2^\top \bs{\Lambda}_{p-r}^+ \widetilde{\bs{\beta}}_2 \\[.5em]
~\leq~  & C_3 \left(
\left(\frac{1}{r}\sum_{i=1}^r \widetilde{\beta}_i^2 \right) \cdot \left(\sum_{i=r+1}^p \lambda_i^+ \right) + \sum_{i=r+1}^p \widetilde{\beta}_i^2 \lambda_i^+
\right)
\end{aligned}
\end{equation}
with probability at least $1-\exp(-c_2r)-\exp(-c_1 \sum_{i=r+1}^p \lambda_i^+/\lambda_{r+1}^+)$. Note that the constant $C_3$ here only depends on $a$.

Up to now, the derivations do not depend on the form of matrix $\cov_{p-r}^+$. 
Now we are ready to choose a particular form of $\cov_{p-r}^+$, namely we can set 
\begin{equation}\label{eq:eigen-lifting}
\cov_{p-r}^+ \defn \bs{U}_{p-r}^\top \bs{\Lambda}_{p-r}^+\bs{U}_{p-r} \quad \text{and} \quad \lambda_i^+ = \lambda_i + \frac{b}{C_1} \frac{r\lambda_{r+1}}{p-r}, \quad \text{for $r+1 \leq i\leq p$.}
\end{equation}
Then direct calculations give 
$$\frac{\sum_{i=r+1}^p \lambda_i^+}{\lambda_{r+1}^+} = 
\frac{\sum_{i=r+1}^p \lambda_i + \frac{b}{C_1} r\lambda_{r+1}}{\lambda_{r+1} + \frac{b}{C_1} \frac{r\lambda_{r+1}}{p-r}} \geq \frac{b}{C_1} r.
$$
Plugging the expression of $\lambda_i^+$ into \eqref{eq:delta-bound-lifting} and then applying the conditions in Definition~\ref{def: intrinstic dim} yield the following result:

When sketching dimension $k$ satisfies $ar\leq k\leq br$, we have with probability at least $1-2\exp(-c_3r)$ that
\[
1 - \frac{\Delta_k^2}{\bs{\beta}^\top\cov\bs{\beta}} \leq C_3\left(1+\frac{b}{C_1}\right) \eta = o(1).
\]
Thus we finish the proof with the stronger conclusion $\frac{\Delta_k^2}{\bs{\beta}^\top\cov\bs{\beta}}\overset{p}{\goto} 1$.
Now we are only left to prove Lemma~\ref{lem:spectral-bounds}.

\paragraph*{Proof of Lemma~\ref{lem:spectral-bounds}.}
To establish Lemma~\ref{lem:spectral-bounds}, we make use of the result below whose proof is provided in Section~\ref{SecPfMatrix-norm}.
\begin{lemma}\label{lem:minimal-eigenvalue}
	Suppose $\bs{\Lambda} = \text{\emph{diag}}(\lambda_1, \dots, \lambda_N)$ with $\lambda_i\geq0$, $\|\lambda\|^2 >0$ and $S\in\mathbb{R}^{N\times n}$ is a standard Gaussian random matrix with $n\leq N$. Write $\bs{\lambda} = (\lambda_1, \dots, \lambda_N)$. Then for $t<1$, 
	\[
	(1 - t)\sqrt{\sum_{i=1}^N \lambda_i^2} \leq s_{\min}(\bs{\Lambda S}) \leq s_{\max}(\bs{\Lambda S}) \leq (1 + t)\sqrt{\sum_{i=1}^N \lambda_i^2},
	\]
	with probability at least $1-9^n \cdot 2\exp \left(-\min\left\lbrace
	\frac{1}{16} \frac{\|\bs{\lambda}\|_2^4}{\|\bs{\lambda}\|^4_4} t^2 , \frac{1}{4} \frac{\|\bs{\lambda}\|_2^2}{\|\bs{\lambda}\|_{\infty}^2} t \right\rbrace\right)$.
\end{lemma}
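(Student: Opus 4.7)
The plan is to combine a Bernstein-type tail bound for a weighted chi-squared sum with an $\epsilon$-net argument on the unit sphere $S^{n-1}\subset\mathbb{R}^n$. I would first fix any unit vector $u\in\mathbb{R}^n$ and use the Gaussianity of $S$ to observe $Su\sim\mathcal{N}(\bs{0},\Ind_N)$, so that
\[
\|\bs{\Lambda}Su\|_2^2 \;\overset{d}{=}\; \sum_{i=1}^N \lambda_i^2\,g_i^2, \qquad g_1,\dots,g_N \overset{\mathrm{iid}}{\sim}\mathcal{N}(0,1).
\]
This is a weighted chi-squared with mean $\|\bs{\lambda}\|_2^2$, to which the Laurent--Massart inequality \cite{laurent2000adaptive} applies: solving $2\|\bs{\lambda}\|_4^2\sqrt{x}+2\|\bs{\lambda}\|_\infty^2 x\leq t\|\bs{\lambda}\|_2^2$ for the largest admissible $x$ yields
\[
\mathbb{P}\!\left(\bigl|\|\bs{\Lambda}Su\|_2^2-\|\bs{\lambda}\|_2^2\bigr|>t\,\|\bs{\lambda}\|_2^2\right)\leq 2\exp\!\left(-\min\!\left\{\tfrac{t^2\|\bs{\lambda}\|_2^4}{16\|\bs{\lambda}\|_4^4},\ \tfrac{t\|\bs{\lambda}\|_2^2}{4\|\bs{\lambda}\|_\infty^2}\right\}\right),
\]
reproducing precisely the constants $1/16$ and $1/4$ that appear in the displayed exponent.

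Second, I would take a $1/4$-net $\mathcal{N}\subset S^{n-1}$ of cardinality $|\mathcal{N}|\leq 9^n$ (see \cite{vershynin2010introduction}) and union-bound the pointwise tail across $\mathcal{N}$, producing exactly the probability stated in the lemma. Under this event every $u\in\mathcal{N}$ obeys $\sqrt{1-t}\,\|\bs{\lambda}\|_2 \leq \|\bs{\Lambda}Su\|_2 \leq \sqrt{1+t}\,\|\bs{\lambda}\|_2$, which in particular lies inside $[\,1-t,\,1+t\,]\|\bs{\lambda}\|_2$ for $t\in(0,1)$. The standard net-to-sphere transfer — using $\|\bs{\Lambda}S(u-v)\|_2 \leq s_{\max}(\bs{\Lambda}S)\|u-v\|_2$ for unit vectors $u,v$ — then lifts these bounds from $\mathcal{N}$ to the full sphere and so to $s_{\max}(\bs{\Lambda}S)$ and $s_{\min}(\bs{\Lambda}S)$.

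The only real obstacle is constant bookkeeping: one must check that composing the $\sqrt{1\pm t}$ slack with the $1/4$-net transfer (which contributes a $(1-1/4)^{-1}$ factor on the upper side and a $\tfrac14 s_{\max}$ term on the lower side) still fits inside the $(1\pm t)\|\bs{\lambda}\|_2$ sandwich over the admissible range $t\in(0,1)$. If the direct transfer is too coarse in some sub-range, one can equivalently work with the symmetric matrix $M:=(\bs{\Lambda}S)^\top(\bs{\Lambda}S) - \|\bs{\lambda}\|_2^2\Ind_n$, note that $u^\top Mu=\|\bs{\Lambda}Su\|_2^2-\|\bs{\lambda}\|_2^2$, and bound $\|M\|_{op}\leq (1-2\epsilon)^{-1}\max_{v\in\mathcal{N}}|v^\top Mv|$ via the quadratic-form net transfer; this amounts to replacing $t$ by a constant multiple of itself and does not affect the orders of the constants. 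Either way, the conceptual content of the proof is entirely captured by the pointwise Laurent--Massart bound and the $9^n$ volumetric count on the $1/4$-net.
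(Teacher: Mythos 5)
Your proposal matches the paper's proof essentially step for step: the paper also applies the Laurent--Massart bound to the weighted chi-squared $\|\bs{\Lambda}S u\|_2^2=\sum_i\lambda_i^2 Z_i^2$, union-bounds over a $1/4$-net of cardinality $9^n$, and transfers to the whole sphere via the quadratic-form/approximate-isometry route (your second variant, using $\|\bs{A}^\top\bs{A}/T-\Ind\|_2\le 2\max_{x\in\mathcal{N}}|\|\bs{A}x\|_2^2/T-1|$ together with Vershynin's Lemma 5.36), so the approach and the constants $1/16$, $1/4$, $9^n$ all coincide. The constant-bookkeeping issue you flag in the net-to-sphere step is real but is glossed over in the paper as well, so nothing in your argument departs from theirs.
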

Applying Lemma~\ref{lem:minimal-eigenvalue} to $\bs{\Lambda}_{p-r}^{1/2} \widetilde{S}_2$ with sketching dimension $k$ and $t=1/3$ yields
\begin{equation}\label{eq:delta-bound-3}
\kappa(\widetilde{S}_2^\top \bs{\Lambda}_{p-r}\widetilde{S}_2) \leq 4 \quad \text{and} \quad \lambda_{\max}(\widetilde{S}_2^\top \bs{\Lambda}_{p-r}\widetilde{S}_2) \leq 2\sum_{i=r+1}^p \lambda_i
\end{equation}
with probability at least $1-\exp\left(\ln 9 \cdot k - \min\left\lbrace \frac{1}{144} \frac{(\sum_{i=r+1}^p \lambda_i)^2}{\sum_{i=r+1}^p \lambda_i^2}, \frac{1}{12} \frac{\sum_{i=r+1}^p \lambda_i}{\lambda_{r+1}} \right\rbrace\right)$. Since $\frac{(\sum_{i=r+1}^p \lambda_i)^2}{\sum_{i=r+1}^p \lambda_i^2} \geq \frac{\sum_{i=r+1}^p \lambda_i}{\lambda_{r+1}}$, \eqref{eq:delta-bound-3} holds with probability at least $1 - \exp\left(- c_1 \frac{\sum_{i=r+1}^p \lambda_i}{\lambda_{r+1}} \right)$  as long as $k\leq C_1 \ \frac{\sum_{i=r+1}^p \lambda_i}{\lambda_{r+1}}$. 

Lemma~\ref{lem:minimal-eigenvalue} can be used to bound all the four quantities in Lemma~\ref{lem:spectral-bounds}. To obtain a better control for $\kappa(\widetilde{S}_1\widetilde{S}_1^\top)$ and $\lambda_{\min}(\widetilde{S}_1\widetilde{S}_1^\top)$ in terms of constants, we invoke the following lemma from \cite{NIPS2011_4260}:
\begin{lemma}[Lemma 4 of \cite{NIPS2011_4260}] \label{lem:minimal-eigenvalue-2}
	For $k\leq p$, let $P_k \in \mathbb{R}^{k\times p}$ be a random matrix with i.i.d. $\mathcal{N}(0, 1)$ entries. Then
	\begin{align*}
	\mathbb{P}\left(\lambda_{\max}(\frac{1}{p} P_k^\top P_k) \geq (1+\sqrt{k/p} + t)^2 \right) \leq \exp(-pt^2/2);\\
	\mathbb{P}\left(\lambda_{\min}(\frac{1}{p} P_k^\top P_k) \leq (1-\sqrt{k/p} - t)^2 \right) \leq \exp(-pt^2/2).
	\end{align*}
\end{lemma}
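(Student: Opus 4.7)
The plan is to deduce the two tail bounds by combining Gordon's mean estimates for the extremal singular values of a Gaussian matrix with the Gaussian concentration inequality for $1$-Lipschitz functions.

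First, I would recast the spectral quantities in terms of the singular values $s_1(P_k)\ge\cdots\ge s_k(P_k)\ge 0$ of the $k\times p$ matrix $P_k$. Since $k\le p$, the nonzero eigenvalues of $P_k^\top P_k$ coincide with those of the $k\times k$ matrix $P_k P_k^\top$, so $\lambda_{\max}(\tfrac1p P_k^\top P_k)=s_1(P_k)^2/p$ and, in the relevant interpretation, the smallest eigenvalue equals $s_k(P_k)^2/p$. After taking square roots, the two claims of the lemma reduce to the pair of tail inequalities
\begin{equation*}
\mathbb{P}\bigl(s_1(P_k)\ge\sqrt{p}+\sqrt{k}+t\sqrt{p}\bigr)\le\exp(-pt^2/2),\quad
\mathbb{P}\bigl(s_k(P_k)\le\sqrt{p}-\sqrt{k}-t\sqrt{p}\bigr)\le\exp(-pt^2/2).
\end{equation*}

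Next, I would invoke the Gordon--Slepian comparison. Applying Slepian's inequality to the two Gaussian processes $X_{u,v}=\langle P_k u,v\rangle$ and $Y_{u,v}=\langle g,u\rangle+\langle h,v\rangle$ indexed by unit vectors $(u,v)\in S^{p-1}\times S^{k-1}$, with $g\sim\mathcal{N}(0,I_p)$, $h\sim\mathcal{N}(0,I_k)$ independent, yields after the usual min-max manipulation the classical mean bounds
\begin{equation*}
\mathbb{E}[s_1(P_k)]\le\sqrt{p}+\sqrt{k},\qquad \mathbb{E}[s_k(P_k)]\ge\sqrt{p}-\sqrt{k}.
\end{equation*}
In addition, both maps $P\mapsto s_1(P)$ and $P\mapsto s_k(P)$ are $1$-Lipschitz from $(\mathbb{R}^{k\times p},\|\cdot\|_F)$ to $\mathbb{R}$: this is an immediate consequence of Weyl's inequality $|s_j(P)-s_j(Q)|\le\|P-Q\|_{\mathrm{op}}\le\|P-Q\|_F$.

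I would then apply the Borell--Tsirelson--Ibragimov--Sudakov concentration inequality to the vectorization of $P_k$ viewed as a standard Gaussian in $\mathbb{R}^{kp}$: for any $u>0$,
\begin{equation*}
\mathbb{P}\bigl(s_1(P_k)\ge\mathbb{E}[s_1(P_k)]+u\bigr)\le\exp(-u^2/2),\quad
\mathbb{P}\bigl(s_k(P_k)\le\mathbb{E}[s_k(P_k)]-u\bigr)\le\exp(-u^2/2).
\end{equation*}
Choosing $u=t\sqrt{p}$ and combining with the Gordon mean estimates of the previous step yields the two desired displays, and squaring delivers the statement of the lemma.

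The only nontrivial ingredient is the pair of mean bounds in step two; the Lipschitz concentration step and the reduction to singular values are routine. If one wished to avoid Gordon's minimax version of Slepian, an alternative route is to use $\mathbb{E}[s_1(P_k)]\le\sqrt{p}+\sqrt{k}$ via a direct moment computation and to obtain the lower mean bound for $s_k(P_k)$ by a symmetry/interlacing argument, but Gordon's theorem gives both bounds in one stroke and is the cleanest path.
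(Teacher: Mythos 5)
The paper does not prove this lemma at all---it is imported verbatim as Lemma~4 of \cite{NIPS2011_4260}, which in turn is the classical Davidson--Szarek/Gordon bound---so there is no in-paper argument to compare against; your proof is the standard one (Gordon's comparison for the expected extreme singular values plus Borell--TIS concentration for the $1$-Lipschitz maps $P\mapsto s_1(P)$ and $P\mapsto s_k(P)$, with $u=t\sqrt{p}$) and it is correct. You were also right to flag that for $k<p$ the matrix $P_k^\top P_k$ is rank-deficient, so the second display must be read as a bound on $s_k(P_k)^2/p=\lambda_{\min}(\tfrac1p P_kP_k^\top)$, which is indeed how the lemma is applied to $\widetilde S_1\widetilde S_1^\top$ in the paper.
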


With constant $a>1$ and $k\geq ar$, we now apply Lemma~\ref{lem:minimal-eigenvalue-2} to $\widetilde{S}_1$ and obtain that with probability at least $1-\exp(-c_2 r)$, 
\begin{equation}\label{delta-bound-4}
\kappa(\widetilde{S}_1 \widetilde{S}_1^\top )\leq C_2 \quad \text{and} \quad \lambda_{\min}(\widetilde{S}_1 \widetilde{S}_1^\top) \geq C_2^{-1} k
\end{equation}
where $c_2, C_2$ are universal constants only depending on $a$. This completes the proof of Lemma~\ref{lem:spectral-bounds}.


\subsection{Proof of Lemma~\ref{lem:delta-bound-1}}\label{section:lemma on eigenvalues}
\textbf{Structure of the proof:} We prove Lemma~\ref{lem:delta-bound-1} following the Lagrange multiplier procedure discussed in the main text. We first derive the expression of $\bs{\xi}^*$ using the Lagrange multiplier; the explicit form of $\bs{\xi}^*$ is summarized in \eqref{eq:xi-optimal} and \eqref{eq:lambda-optimal}. Then we plug $\bs{\xi}^*$ into $(\bm{\beta}-S_k\bm{\xi}^*)^\top \cov_{p-r}(\bm{\beta}-S_k\bm{\xi}^*)$, and get its upper bound; see \eqref{eq:delta-lower-bound-1}. The remaining part of the proof proceeds by bounding the terms in \eqref{eq:delta-lower-bound-1} based on properties of the spectral norm. 

\subsubsection*{Step 1: Finding minimal value of $(\bm{\beta}-S_k\bm{\xi}^*)^\top \cov_{p-r}(\bm{\beta}-S_k\bm{\xi}^*)$.}

Recall that we define the Lagrange form
$$
\mathcal{L}(\bs{\xi}, \bs{\lambda}) = \frac{1}{2}(\bs{\beta}-S_k\bs{\xi})^\top \cov_{p-r}(\bs{\beta}-S_k\bs{\xi}) - \bs{\lambda}^\top \bs{U}_r^\top (\bs{\beta}-S_k\bs{\xi}).
$$
By solving the following two equations
\begin{equation*}\label{eq:xi-star}
\frac{\partial \mathcal{L}(\bs{\xi}, \bs{\lambda})}{\partial \bs{\xi}} = 0  \quad \text{and} \quad
\bs{U}_r^\top (\bs{\beta}-S_k\bs{\xi})=0,
\end{equation*}
we can obtain the optimal solution $\bs{\xi}^*$.

First, let us consider the first equation
\[
\frac{\partial \mathcal{L}(\bm{\xi},\bs{\lambda})}{\partial \bs{\xi}} = 0.
\]
A direct calculation yields
\[
S_k^\top \cov_{p-r} S_k\bm{\xi} - S_k^\top \cov_{p-r}\bs{\beta} + S_k^\top \bm{U}_r \bs{\lambda} = 0.
\]
Similar to proof in Section~\ref{Section: the proof of invertibility} and by noting that $\text{rank}(\cov_{p-r}) = p-r \geq k$, we can show the matrix $S_k^\top \cov_{p-r} S_k$ is invertible almost surely. Then the solution can be written explicitly as 
\begin{equation}\label{eq:xi-optimal}
\bm{\xi}^* = (S_k^\top \cov_{p-r} S_k)^{-1}S_k^\top (\cov_{p-r}\bs{\beta} - \bm{U}_r \bs{\lambda}^*).
\end{equation}
By writing $\bs{H} = S_k (S_k^\top \cov_{p-r} S_k)^{-1}S_k^\top$ and plugging the above expression to the constraint condition $\bs{U}_r^\top (\bs{\beta}-S_k\bs{\xi})=0$, we obtain the following equality:
\[
\bm{U}_r^\top \bs{\beta} - \bm{U}_r^\top \bs{H} (\cov_{p-r}\bs{\beta} - \bm{U}_r\bs{ \lambda}^*) = 0.
\]
Before preceding, we first justify that $\bm{U}_r^\top \bm{H} \bm{U}_r$ is invertible almost surely. Note that when $S_k^\top \cov_{p-r} S_k$ is invertible, we have $\bm{x}^\top \bm{U}_r^\top \bs{H}\bm{U}_r \bm{x}=0$ iff $S_k^\top \bm{U}_r\bm{x}=0$ iff $\bm{x}^\top \bm{U}_r^\top S_k S_k^\top \bm{U}_r \bm{x}=0$. Since $S_k^\top \bm{U}_r \in \mathbb{R}^{k\times r}$ is distributed as an i.i.d Gaussian sketching matrix, we conclude that $\text{rank}(\bm{U}_r^\top S_k S_k^\top \bm{U}_r )=\text{rank}(S_k^\top \bm{U}_r) = r$ almost surely with $k\geq r$. Now with $S_k^\top \cov_{p-r} S_k$ invertible and $\text{rank}(\bm{U}_r^\top S_k S_k^\top \bm{U}_r )= r$ (which happens almost surely), we know that $\bm{x}^\top \bm{U}_r^\top \bs{H}\bm{U}_r \bm{x}=0$ iff $\bm{x}=0$, or equivalently, $\bm{U}_r^\top \bm{H} \bm{U}_r$ is invertible.

Now we can safely write $(\bm{U}_r^\top \bs{H}\bm{U}_r)^{-1}$. In this case,
\begin{equation}\label{eq:lambda-optimal}
\bs{\lambda}^* = (\bm{U}_r^\top \bs{H}\bm{U}_r)^{-1}(\bm{U}_r^\top \bs{H}\cov_{p-r}-\bm{U}_r^\top)\bs{\beta}.
\end{equation}

Based on \eqref{eq:xi-optimal} and \eqref{eq:lambda-optimal}, we have 
\begin{align*}
S_k\bm{\xi}^* &= \bm{H}\left(\cov_{p-r}\bs{\beta}-\bm{U}_r (\bm{U}_r^\top \bm{H}\bm{U}_r)^{-1}(\bm{U}_r^\top \bm{H} \cov_{p-r}-\bm{U}_r^\top)\bs{\beta}\right) \\
& = \bm{H}\left( \cov_{p-r} - \bm{U}_r (\bm{U}_r^\top \bs{H}\bm{U}_r)^{-1}\bm{U}_r^\top (\bm{H}\cov_{p-r}-\Ind) \right) \bs{\beta}.	
\end{align*}
With the above expression at hand, we are ready to control quantity 
$(\bm{\beta}-S_k\bm{\xi}^*)^\top \cov_{p-r}(\bm{\beta}-S_k\bm{\xi}^*)$.
For the sake of notational simplicity, let us write $$\bm{G} \defn \bm{U}_r (\bm{U}_r^\top \bm{H}\bm{U}_r)^{-1}\bm{U}_r^\top.$$ Then we obtain the following equalities:
\begin{align*}
S_k\bm{\xi}^* &=\bm{H}\left( \cov_{p-r} - \bm{G}(\bm{H}\cov_{p-r}-\Ind) \right) \bm{\beta}; \\
\bm{\beta}-S_k\bm{\xi}^* 
&= (\Ind-\bm{HG})(\Ind-\bm{H}\cov_{p-r})\bm{\beta}.
\end{align*}
Putting the pieces together yields
\[
(\bm{\beta}-S_k\bm{\xi}^*)^\top \cov_{p-r}(\bm{\beta}-S_k\bm{\xi}^*) = \bm{\beta}^\top (\Ind-\cov_{p-r} \bm{H})(\Ind-\bm{GH})\cov_{p-r}(\Ind-\bm{HG})(\Ind-\bm{H}\cov_{p-r})\bm{\beta}.
\]

\subsubsection*{Step 2: Upper bounding the minimal value.}

By recalling the notation $\bs{H}=S_k(S_k^\top \cov_{p-r} S_k)^{-1}S_k^\top$, we know $\bs{H}\cov_{p-r} \bs{H}=\bs{H}$ and $\bs{G}\bs{H}\bs{G}=\bs{G}$. 
Then $\bm{H}\cov_{p-r}(\Ind-\bs{H}\cov_{p-r})=\bs{0}$, and 
\begin{align*}
(\bs{\beta}-S_k\bm{\xi}^*)^\top \cov_{p-r}(\bs{\beta}-S_k\bm{\xi}^*) 
=&\ \bs{\beta}^\top (\Ind-\cov_{p-r} \bs{H})(\cov_{p-r} - \bs{G}\bs{H}\cov_{p-r}-\cov_{p-r}\bs{H}\bs{G}+\bs{G})(\Ind-\bs{H}\cov_{p-r})\bs{\beta} \\[.5em]
=&\ \bs{\beta}^\top (\Ind-\cov_{p-r} \bs{H})(\cov_{p-r} + \bs{G})(\Ind-\bs{H}\cov_{p-r})\bs{\beta} \\[.5em]
=&\ \bs{\beta}^\top\cov_{p-r}\bs{\beta}-\bs{\beta}^\top\cov_{p-r}\bs{H}\cov_{p-r}\bs{\beta} + \bs{\beta}^\top (\Ind-\cov_{p-r} \bs{H}) \bs{G} (\Ind-\bs{H}\cov_{p-r})\bs{\beta} \\[.5em]
\leq & \ \bs{\beta}^\top\cov_{p-r}\bs{\beta} + \|(\bm{U}_r^\top \bs{H} \bm{U}_r)^{-1}\|_2 \|\bm{U}_r^\top \bs{\beta} - \bm{U}_r^\top \bs{H} \cov_{p-r}\bs{\beta}\|_2^2.
\end{align*}
By definition of $\widetilde{S}_1 \defn \bm{U}_r^\top S_k$ and $\widetilde{S}_2 \defn \bs{U}_{p-r}^\top S_k$, we can see $\widetilde{S}_1$ and $\widetilde{S}_2$ are independent, and their entries are independent standard Gaussian random variables.
Additionally denoting $\bs{\widetilde{\beta}}_1 \defn \bm{U}_r^\top\bs{\beta}$ and $\bs{\widetilde{\beta}}_2 \defn \bs{U}_{p-r}^\top\bs{\beta}$, we can rewrite the above as 
\begin{equation}\label{eq:delta-lower-bound-1}
(\bs{\beta}-S_k\bm{\xi}^*)^\top \cov_{p-r}(\bs{\beta}-S_k\bm{\xi}^*) 
\leq\widetilde{\bs{\beta}}_2^\top \bs{\Lambda}_{p-r} \widetilde{\bs{\beta}}_2 + \|(\bm{U}_r^\top \bs{H} \bm{U}_r)^{-1}\|_2 \|\widetilde{\bs{\beta}}_1 - \widetilde{S}_1 (\widetilde{S}_2^\top \bs{\Lambda}_{p-r} \widetilde{S}_2)^{-1} \widetilde{S}_2^\top \bs{\Lambda}_{p-r}\widetilde{\bs{\beta}}_2\|_2^2.
\end{equation}
With some algebra (see the details at the end of this section), it can be shown that 
\begin{align}\label{eq:delta-lower-bound-2}
\|\widetilde{\bs{\beta}}_1 - \widetilde{S}_1 (\widetilde{S}_2^\top \bs{\Lambda}_{p-r} \widetilde{S}_2)^{-1} \widetilde{S}_2^\top \bs{\Lambda}_{p-r}\widetilde{\bs{\beta}}_2\|_2^2 & \leq 2 \|\widetilde{\bs{\beta}}_1 \|_2^2 + 2 \|\widetilde{S}_1\|_2^2 \|(\widetilde{S}_2^\top \bs{\Lambda}_{p-r} \widetilde{S}_2)^{-1}\|_2;\\
\label{eq:delta-lower-bound-3}
\|(\bm{U}_r^\top \bs{H} \bm{U}_r)^{-1}\|_2
& \leq \frac{\lambda_{\max}(\widetilde{S}_2^\top \bs{\Lambda}_{p-r} \widetilde{S}_2)}{\lambda_{\min}(\widetilde{S}_1\widetilde{S}_1^\top)}.
\end{align}

Plugging inequalities \eqref{eq:delta-lower-bound-2} and \eqref{eq:delta-lower-bound-3} into \eqref{eq:delta-lower-bound-1} yields
\begin{align*}
&\ (\bs{\beta}-S_k\bm{\xi}^*)^\top \cov_{p-r}(\bs{\beta}-S_k\bm{\xi}^*) \\
\overset{\text{by }\eqref{eq:delta-lower-bound-2}}{\leq} & \ 2 \|(\bm{U}_r^\top \bs{H} \bm{U}_r)^{-1}\|_2 \|\widetilde{\bs{\beta}}_1\|_2^2 + \left(1 + 2 \|(\bm{U}_r^\top \bs{H} \bm{U}_r)^{-1}\|_2 \|\widetilde{S}_1\|_2^2 \| (\widetilde{S}_2^\top \bs{\Lambda}_{p-r} \widetilde{S}_2)^{-1}\|_2 \right)\cdot \widetilde{\bs{\beta}}_2^\top \bs{\Lambda}_{p-r} \widetilde{\bs{\beta}}_2\\ 
\overset{\text{by }\eqref{eq:delta-lower-bound-3}}{\leq} &\  2 \frac{\|\widetilde{S}_2^\top\bs{\Lambda}_{p-r}\widetilde{S}_2\|_2}{\lambda_{\min}(\widetilde{S}_1\widetilde{S}_1^\top)}\|\widetilde{\bs{\beta}}_1\|_2^2 + \left( 1 + 2\frac{\lambda_{\max}(\widetilde{S}_2^\top\bs{\Lambda}_{p-r}\widetilde{S}_2)}{\lambda_{\min}(\widetilde{S}_2^\top\bs{\Lambda}_{p-r}\widetilde{S}_2)} \cdot
\frac{\lambda_{\max}(\widetilde{S}_1\widetilde{S}_1^\top)}{\lambda_{\min}(\widetilde{S}_1\widetilde{S}_1^\top)}
\right) \cdot \widetilde{\bs{\beta}}_2^\top \bs{\Lambda}_{p-r} \widetilde{\bs{\beta}}_2\\
=~  & \ 2 \frac{\|\widetilde{S}_2^\top\bs{\Lambda}_{p-r}\widetilde{S}_2\|_2}{\lambda_{\min}(\widetilde{S}_1\widetilde{S}_1^\top)}\|\widetilde{\bs{\beta}}_1\|_2^2 + \left( 1 + 2 \kappa(\widetilde{S}_2^\top\bs{\Lambda}_{p-r}\widetilde{S}_2) \kappa(\widetilde{S}_1\widetilde{S}_1^\top) \right) \cdot \widetilde{\bs{\beta}}_2^\top \bs{\Lambda}_{p-r} \widetilde{\bs{\beta}}_2.
\end{align*}
This completes the proof of Lemma~\ref{lem:delta-bound-1}.

\paragraph*{Proof of \eqref{eq:delta-lower-bound-2} and \eqref{eq:delta-lower-bound-3}.} First we show \eqref{eq:delta-lower-bound-2}. By the triangle inequality, we have 
\[
\|\widetilde{\bs{\beta}}_1 - \widetilde{S}_1 (\widetilde{S}_2^\top \bs{\Lambda}_{p-r} \widetilde{S}_2)^{-1} \widetilde{S}_2^\top \bs{\Lambda}_{p-r}\widetilde{\bs{\beta}}_2\|_2^2 \leq 2 \|\widetilde{\bs{\beta}}_1 \|_2^2 + 2\|\widetilde{S}_1 (\widetilde{S}_2^\top \bs{\Lambda}_{p-r} \widetilde{S}_2)^{-1} \widetilde{S}_2^\top \bs{\Lambda}_{p-r}\widetilde{\bs{\beta}}_2\|_2^2.
\]
Note that for $\bm{A}\in\mathbb{R}^{p\times p}$ and $\bm{x}\in\mathbb{R}^p$, the multiplicative property of the norm shows $\|\bm{Ax}\|_2 \leq \|\bm{A}\|_2\|\bm{x}\|_2$. Using this property, it can be seen that  
\[
\|\widetilde{S}_1 (\widetilde{S}_2^\top \bs{\Lambda}_{p-r} \widetilde{S}_2)^{-1} \widetilde{S}_2^\top \bs{\Lambda}_{p-r}\widetilde{\bs{\beta}}_2\|_2^2 \leq \|\widetilde{S}_1\|_2^2 \|(\widetilde{S}_2^\top \bs{\Lambda}_{p-r} \widetilde{S}_2)^{-1} \widetilde{S}_2^\top \bs{\Lambda}_{p-r}^{1/2}\|_2^2 \|\bs{\Lambda}_{p-r}^{1/2}\widetilde{\bs{\beta}}_2\|_2^2.
\]
By $\|\bm{AA}^\top\|_2 = \|\bm{A}\|_2^2$, it follows that
\[
\|(\widetilde{S}_2^\top \bs{\Lambda}_{p-r} \widetilde{S}_2)^{-1} \widetilde{S}_2^\top \bs{\Lambda}_{p-r}^{1/2}\|_2^2 = \|(\widetilde{S}_2^\top \bs{\Lambda}_{p-r} \widetilde{S}_2)^{-1}\|_2.
\]
Then we have
\begin{equation*}
\|\widetilde{\bs{\beta}}_1 - \widetilde{S}_1 (\widetilde{S}_2^\top \bs{\Lambda}_{p-r} \widetilde{S}_2)^{-1} \widetilde{S}_2^\top \bs{\Lambda}_{p-r}\widetilde{\bs{\beta}}_2\|_2^2 \leq 2 \|\widetilde{\bs{\beta}}_1 \|_2^2 + 2 \|\widetilde{S}_1\|_2^2 \|(\widetilde{S}_2^\top \bs{\Lambda}_{p-r} \widetilde{S}_2)^{-1}\|_2.
\end{equation*}
It remains to show \eqref{eq:delta-lower-bound-3}.
By definition, for a symmetric matrix $\bm{A}$, we can write $\lambda_{\min}(\bm{A}) = \displaystyle{\min_{\|\bm{x}\|_2=1}} \bm{x}^\top\bm{A}\bm{x}$. Taking $\forall\bm{x}\in\mathbb{R}^r$ with $\|\bm{x}\|_2 = 1$, we have
\begin{align*}
\bm{x}^\top \bm{U}_r^\top \bs{H} \bm{U}_r \bm{x} & = \bm{x}^\top \bm{U}_r^\top S_k (S_k^\top \cov_{p-r} S_k)^{-1}S_k^\top \bm{U}_r \bm{x}  = \bm{x}^\top \widetilde{S}_1 (\widetilde{S}_2^\top \bs{\Lambda}_{p-r} \widetilde{S}_2)^{-1}\widetilde{S}_1^\top \bm{x} \\[.5em]
& \geq \lambda_{\min}((\widetilde{S}_2^\top \bs{\Lambda}_{p-r} \widetilde{S}_2)^{-1})(\bm{x}^\top \widetilde{S}_1\widetilde{S}_1^\top \bm{x}) \\[.5em]
& \geq \lambda_{\min}((\widetilde{S}_2^\top \bs{\Lambda}_{p-r} \widetilde{S}_2)^{-1}) \lambda_{\min}(\widetilde{S}_1\widetilde{S}_1^\top) = \frac{\lambda_{\min}(\widetilde{S}_1\widetilde{S}_1^\top)}{\lambda_{\max}(\widetilde{S}_2^\top \bs{\Lambda}_{p-r} \widetilde{S}_2)}.
\end{align*}
Then we know
\begin{equation*}
\|(\bm{U}_r^\top \bs{H} \bm{U}_r)^{-1}\|_2
= \frac{1}{\lambda_{\min}(\bm{U}_r^\top \bs{H} \bm{U}_r)}
\leq \frac{\lambda_{\max}(\widetilde{S}_2^\top \bs{\Lambda}_{p-r} \widetilde{S}_2)}{\lambda_{\min}(\widetilde{S}_1\widetilde{S}_1^\top)}.
\end{equation*}

\subsection{Proof of Lemma~\ref{lem:matrix-norm}}
\label{SecPfMatrix-norm}
Let us write the singular value decomposition of $\cov$ as $\cov = \bs{U}\bs{\Lambda}\bs{U}^\top $ with $\bs{\Lambda} = \text{diag}(\bs{\lambda})$, $\bs{\lambda}\in\mathbb{R}^p$. Then we have $\text{tr}(\cov) = \|\bs{\lambda}\|_1$, $\|\cov\|_F = \|\bs{\lambda}\|_2$, $\text{tr}(\cov^2) = \|\bs{\lambda}\|_2^2$ and $\text{tr}(\cov^4) = \|\bs{\lambda}\|_4^4$. With the new notation, the claim of Lemma~\ref{lem:matrix-norm} is now equivalent to $\|\bs{\lambda}\|_1^2 \|\bs{\lambda}\|_4 \geq \|\bs{\lambda}\|_2^3$. 

We prove $\|\bs{\lambda}\|_1^2 \|\bs{\lambda}\|_4 \geq \|\bs{\lambda}\|_2^3$ using the following ingredients:
\begin{align*}
&\text{(i)}~ \|\bs{\lambda}\|_3^3\|\bs{\lambda}\|_1 \geq \|\bs{\lambda}\|_2^4; \\[.5em]
&\text{(ii)}~ \|\bs{\lambda}\|_4^4\|\bs{\lambda}\|_1 \geq \|\bs{\lambda}\|_3^3 \|\bs{\lambda}\|_2^2; \\[.5em]
&\text{(iii)}~ \|\bs{\lambda}\|_1 \geq \|\bs{\lambda}\|_2,
\end{align*}
where (i) holds directly from Cauchy-Schwarz inequality; (ii) follows from the equality 
\begin{align*}
\|\bs{\lambda}\|_4^4\|\bs{\lambda}\|_1 - \|\bs{\lambda}\|_3^3 \|\bs{\lambda}\|_2^2 = \frac{1}{2}\sum_{i\neq j} \lambda_i\lambda_j(\lambda_i+\lambda_j)(\lambda_i-\lambda_j)^2 \geq 0
\end{align*}
with $\lambda_i\geq 0$; (iii) follows from the observation that $\|\bs{\lambda}\|_1^2 - \|\bs{\lambda}\|_2^2 = \sum_{i\neq j}\lambda_i\lambda_j \geq 0$ with $\lambda_i\geq 0$.

Then we have 
\[
\|\bs{\lambda}\|_1^8 \|\bs{\lambda}\|_4^4 = \left(\frac{\|\bs{\lambda}\|_1 \|\bs{\lambda}\|_4^4}{\|\bs{\lambda}\|_3^3} \right) \cdot (\|\bs{\lambda}\|_1 \|\bs{\lambda}\|_3^3) \cdot (\|\bs{\lambda}\|_1^6) \geq 
\|\bs{\lambda}\|_2^2 \cdot \|\bs{\lambda}\|_2^4 \cdot \|\bs{\lambda}\|_2^6 = \|\bs{\lambda}\|_2^{12}.
\]
Thus we show $\|\bs{\lambda}\|_1^2 \|\bs{\lambda}\|_4 \geq \|\bs{\lambda}\|_2^3$, and Lemma~\ref{lem:matrix-norm} follows.

\subsection{Proof of Lemma~\ref{lem:minimal-eigenvalue}}
We closely follow the proof of Theorem 5.39 in \cite{vershynin2010introduction} that uses a covering argument with three steps: 1) discretization; 2) concentration; 3) union bound. In the discretization step, we discretize the problem using a net $\mathcal{N}$; in the concentration step, we bound $\|\bs{Ax}\|_2$ for each $\bs{x}\in\mathcal{N}$. Finally, we use the union bound to establish a concentration bound over $\bs{x}\in\mathcal{S}^{n-1}$. 

\paragraph{Step 1: Discretization.} First we invoke Lemma 5.36 in \cite{vershynin2010introduction}:
\begin{lemma}\label{lem:isometry}
	Consider a matrix $\bs{B}$ that satisfies
	\[
	\|\bs{B}^\top \bs{B} - \emph{\Ind} \|_2 \leq \max(\delta,\delta^2)
	\]
	for some $\delta > 0$. Then 
	\[
	1-\delta \leq s_{\min}(\bs{B}) \leq s_{\max}(\bs{B}) \leq 1 + \delta.
	\]
	Conversely, if $\bs{B}$ satisfies $1-\delta \leq s_{\min}(\bs{B}) \leq s_{\max}(\bs{B}) \leq 1 + \delta$ for some $\delta >0$, then $\|\bs{B}^\top \bs{B}-\emph{\Ind}\|_2 \leq 3\max(\delta, \delta^2)$.
\end{lemma}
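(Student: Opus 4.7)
The key observation is that $\bs{B}^\top \bs{B}$ is symmetric positive semidefinite with eigenvalues $\mu_1,\ldots,\mu_n$, and these relate to the singular values of $\bs{B}$ via $\mu_i = s_i(\bs{B})^2$. Moreover, since $\bs{B}^\top\bs{B}-\Ind$ is symmetric, its spectral norm equals $\max_i |\mu_i - 1|$. Thus the entire lemma reduces to a scalar statement comparing $|\mu - 1|$ with $|\sqrt{\mu} - 1|$, which I will handle via case analysis on whether $\delta \leq 1$ or $\delta > 1$.

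For the forward direction, I would start from $|\mu_i - 1| \leq \max(\delta,\delta^2)$ for every $i$, and establish $1-\delta \leq \sqrt{\mu_i} \leq 1+\delta$. In the regime $\delta \leq 1$, the assumption reads $1-\delta \leq \mu_i \leq 1+\delta$; the upper bound follows from $(1+\delta)^2 \geq 1+\delta$, and the lower bound from the elementary inequality $\sqrt{x} \geq x$ for $x \in [0,1]$ applied to $x = 1-\delta$. In the regime $\delta > 1$, the lower bound $1-\delta < 0$ is trivially satisfied by the non-negative singular values, while the upper bound follows from $(1+\delta)^2 = 1 + 2\delta + \delta^2 \geq 1 + \delta^2$.

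For the converse, I would start from $(1-\delta)_+ \leq \sqrt{\mu_i} \leq 1+\delta$ and square to bound $\mu_i - 1$. When $\delta \leq 1$, squaring gives $(1-\delta)^2 \leq \mu_i \leq (1+\delta)^2$, so $\mu_i - 1 \in [-2\delta+\delta^2,\, 2\delta+\delta^2]$, yielding $|\mu_i-1| \leq 2\delta + \delta^2 \leq 3\delta = 3\max(\delta,\delta^2)$. When $\delta > 1$, the lower bound on $\mu_i$ is merely $\mu_i \geq 0$, and we get $|\mu_i - 1| \leq \max(1, 2\delta+\delta^2) = 2\delta+\delta^2 \leq 3\delta^2 = 3\max(\delta,\delta^2)$. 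Taking the maximum over $i$ converts these eigenvalue bounds back into the spectral norm bound $\|\bs{B}^\top\bs{B} - \Ind\|_2 \leq 3\max(\delta,\delta^2)$.

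There is no real obstacle here; the slight care needed is in the $\delta > 1$ regime, where the elementary bound $\sqrt{1-\delta} \geq 1 - \delta$ fails (the right side is negative) and one has to remember that singular values are non-negative so the lower bound is vacuous. The only other small point is getting the multiplicative constant $3$ correctly: it arises because $2\delta + \delta^2 \leq 3\max(\delta,\delta^2)$ in both regimes, and this is sharp enough for the lemma's needs.
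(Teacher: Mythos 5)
Your proof is correct. The paper does not actually prove this lemma---it is quoted verbatim from Vershynin's lecture notes (Lemma 5.36 of \cite{vershynin2010introduction})---but your argument is the standard one and matches the cited source in substance: diagonalize $\bs{B}^\top\bs{B}$, note $\|\bs{B}^\top\bs{B}-\Ind\|_2=\max_i|\mu_i-1|$ with $\mu_i=s_i(\bs{B})^2$, and reduce to a scalar comparison of $|\mu-1|$ with $|\sqrt{\mu}-1|$. The only cosmetic difference is that Vershynin compresses your two-case analysis into the single elementary inequality $\max(|z-1|,|z-1|^2)\leq|z^2-1|$ for $z\geq 0$, whereas you handle the regimes $\delta\leq 1$ and $\delta>1$ separately; your treatment of the vacuous lower bound when $\delta>1$ and your accounting for the constant $3$ via $2\delta+\delta^2\leq 3\max(\delta,\delta^2)$ are both accurate.
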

Write $T = \|\bs{\Lambda}\|_2^2$ and $\bs{A}=\bs{\Lambda} \bs{S}$. Then the claim is equivalent to 
\[
\Big\|\frac{1}{T} \bs{A}^\top \bs{A}-\Ind \Big\|_2 \leq \max(t, t^2) = t.
\]
We can evaluate the operator norm on a $1/4$-net $\mathcal{N}$ of the unit sphere $\mathcal{S}^{n-1}$: with Lemma 5.4 in \cite{vershynin2010introduction}, we have
\[
\Big\|\frac{1}{T} \bs{A}^\top \bs{A}-\Ind \Big\|_2  \leq 2 \max_{x\in\mathcal{N}} \left| \frac{1}{T}\|\bs{Ax}\|_2^2-1\right|.
\]
Note that we can choose $\mathcal{N}$ such that $|\mathcal{N}|\leq 9^n$.

\paragraph{Step 2: Concentration.} Fix $\bs{x}\in\mathcal{S}^{n-1}$. 
Denote the $i$-th row of matrix $\bs{A}$ and $\bs{S}$ by $\bs{A}_i$ and $\bs{S}_i$, respectively. Then $\left\langle \bs{A}_i, \bs{x}\right\rangle /\lambda_i = \left\langle \bs{S}_i, \bs{x}\right\rangle \sim \mathcal{N}(0, 1)$ and the $\left\langle \bs{A}_i, \bs{x}\right\rangle$'s are independent to each other. We can express $\|\bs{Ax}\|_2^2$ as a sum of independent random variables
\[
\|\bs{Ax}\|_2^2 = \sum_{i=1}^N \left\langle \bs{A}_i, \bs{x}\right\rangle ^2 =:\sum_{i=1}^N \lambda_i^2 Z_i^2,
\]
where $Z_i\overset{iid}{\sim}\mathcal{N}(0, 1)$. By Lemma 1 of \cite{laurent2000adaptive}, we have
\[
P\left(\left|\frac{1}{\sum_{i=1}^N \lambda_i^2}\|\bs{Ax}\|_2^2-1\right| \geq
2 \frac{\sqrt{\sum_{i=1}^N \lambda_i^4}}{\sum_{i=1}^N \lambda_i^2} \sqrt{\delta} + 2 \frac{\max_{1\leq i\leq N} \lambda_i^2}{\sum_{i=1}^N \lambda_i^2 } \delta
\right) \leq 2e^{-\delta}. 
\]
When $\delta=\min\left\lbrace
\frac{1}{16} \frac{\|\lambda\|_2^4}{\|\lambda\|^4_4} t^2 , \frac{1}{4} \frac{\|\lambda\|_2^2}{\|\lambda\|_{\infty}^2} t \right\rbrace$, we have
$2 \frac{\sqrt{\sum_{i=1}^N \lambda_i^4}}{\sum_{i=1}^N \lambda_i^2} \sqrt{\delta}\leq \frac{1}{2}t$ and $2 \frac{\max_{1\leq i\leq N} \lambda_i^2}{\sum_{i=1}^N \lambda_i^2 } \delta \leq \frac{1}{2} t$. Then
we can rewrite the tail bound as 
\[
P\left(\left|\frac{1}{\sum_{i=1}^N \lambda_i^2}\|\bs{Ax}\|_2^2-1\right| \geq
t\right) \leq 2\exp \left(-\min\left\lbrace
\frac{1}{16} \frac{\|\bs{\lambda}\|_2^4}{\|\bs{\lambda}\|^4_4} t^2 , \frac{1}{4} \frac{\|\bs{\lambda}\|_2^2}{\|\bs{\lambda}\|_{\infty}^2} t \right\rbrace \right).
\]

\paragraph{Step 3: Union bound.} Taking the bound over all vectors in the net $\mathcal{N}$, we obtain
\[
P\left( \max_{x\in\mathcal{N}} \left| \frac{1}{T} \|\bs{Ax}\|_2^2 - 1\right| \geq t\right) \leq 9^n \cdot 2\exp \left(-\min\left\lbrace
\frac{1}{16} \frac{\|\bs{\lambda}\|_2^4}{\|\bs{\lambda}\|^4_4} t^2 , \frac{1}{4} \frac{\|\bs{\lambda}\|_2^2}{\|\bs{\lambda}\|_{\infty}^2} t \right\rbrace \right).
\]
Thus, by Lemma~\ref{lem:isometry}, we have, for $t<1$, 
\[
(1 - t)\sqrt{\sum_{i=1}^N \lambda_i^2} \leq s_{\min}(\bs{A}) \leq s_{\max}(\bs{A}) \leq (1 + t)\sqrt{\sum_{i=1}^N \lambda_i^2}
\]
with probability at least $1-9^n \cdot 2\exp \left( -\min\left\lbrace
\frac{1}{16} \frac{\|\bs{\lambda}\|_2^4}{\|\bs{\lambda}\|^4_4} t^2 , \frac{1}{4} \frac{\|\bs{\lambda}\|_2^2}{\|\bs{\lambda}\|_{\infty}^2} t \right\rbrace \right)$.

\subsection{Technical details of Theorem~\ref{thm:power-2}}
In this part we check some technical details of Theorem~\ref{thm:power-2}. Recall from the proof of Theorem~\ref{thm:power-function} that the sketched linear model is 
\[
y_i = \left\langle \widetilde{\bs{x}}_i, \bs{\beta}^S\right\rangle + z_i^S = \left\langle S_k^\top \bs{x}_i, \bs{\beta}^S\right\rangle + z_i^S,
\]
where $z_i^S = \left\langle \bs{x}_i, \bs{\beta}\right\rangle + \sigma z_i - \left\langle \widetilde{\bs{x}}_i, \bs{\beta}^S\right\rangle$ and $\boldsymbol{\beta}^S = (S_k^\top \boldsymbol{\Sigma} S_k)^{-1} S_k^\top \boldsymbol{\Sigma} \boldsymbol{\beta}$. We are essentially testing whether sketched coefficients $\boldsymbol{\beta}^S$ are zero or not as
$$
H_0^S: \boldsymbol{\beta}^S=0 \quad \text{versus} \quad H_1^S: \boldsymbol{\beta}^S \neq 0.
$$

In what follows, we verify that the technical conditions of Theorem 2.1 and Corollary 2.2 in \cite{steinberger2016relative} are satisfied under assumptions \textbf{(B1, B2)} and the sketched model $y_i = \left\langle \widetilde{\bs{x}}_i, \bs{\beta}^S\right\rangle + z_i^S$. This verification step directly leads to the desired result in Theorem~\ref{thm:power-2}. See Section 2.1 of \cite{steinberger2016relative} for the technical conditions; specifically, it suffices to verify \textbf{(A1)}\emph{(a,b,c,d)} and \textbf{(A2)} therein. We write them as \textbf{(S-A1)}\emph{(a,b,c,d)} and \textbf{(S-A2)} below.

\paragraph{Verification of (S-A1):} By our assumption \textbf{(B1)} with $\bs{\widetilde{x}}_i = S_k^\top \bs{\Gamma}\bs{u}_i$, we can directly see assumptions \textbf{(S-A1)}\emph{(a,b,c,d)} are satisfied.

\paragraph{Verification of (S-A2):} It suffices to check the following two conditions:
\begin{align}
\label{eq:non-gaussian-1}
\mathbb{E}\left[ \left( \mathbb{E}\left[ \left( z_i^S\right)^4 \vert \widetilde{\bs{x}}_i \right]\right)^{2}\right] = O(1)
\quad \text{and} \quad
\max_{i=1}^n \mathbb{E}\left[ (z_i^S)^4 \vert \widetilde{\bs{x}}_i \right] = o_{P}(\sqrt{k}).
\end{align}

\paragraph{First claim of \eqref{eq:non-gaussian-1}.}
To simplify notation, write $\bm{\delta} \defn\bm{\beta} - S_k \bm{\beta}^S$. Then we can write 
$$z_i^S = \sigma z_i + \bm{\delta}' \bm{x}_i.$$ 
We first derive the expression for $\mathbb{E}[(z_i^S)^4 \vert \widetilde{\bs{x}}_i]$. Notice that $\mathbb{E}[(z_i^S)^4 \vert \widetilde{\bs{x}}_i] = \mathbb{E}[\mathbb{E}[(z_i^S)^4 \vert \bs{x}_i]\bigm\arrowvert \widetilde{\bs{x}}_i ]$, with
\begin{equation}\label{eq:non-gaussian-3}
\mathbb{E}[\left(z_i^S\right)^4 \vert \bs{x}_i] = \mathbb{E}\left[
\left(\sigma z_i + \bm{\delta}' \bm{x}_i\right)^4 \vert \bs{x}_i  \right] \leq 8c \sigma^4 + 8 \left(\bm{\delta}' \bm{x}_i\right)^4.
\end{equation}
The above inequality follows by $(x+y)^4 \leq 8(x^4+y^4)$ as well as assumption \textbf{(B2)}. Then we further have
\begin{equation}\label{eq:non-gaussian-2}
\mathbb{E}\left[ \left( \mathbb{E}\left[ \left( z_i^S\right)^4 \vert \widetilde{\bs{x}}_i \right]\right)^{2}\right] = \mathbb{E}\left[ \left(
8c \sigma^4 + 8 \mathbb{E}\left[ \left(\bm{\delta}' \bm{x}_i\right)^4 \vert \widetilde{\bs{x}}_i \right] \right)^{2}\right] \leq 128\left(c^2 \sigma^8 + \mathbb{E}\left[ \left( \mathbb{E}\left[ \left( \bm{\delta}' \bm{x}_i \right)^4 \vert \widetilde{\bs{x}}_i \right]\right)^{2}\right] \right).
\end{equation}
To show the first claim in \eqref{eq:non-gaussian-1}, it suffices to show
$\mathbb{E}[ ( \mathbb{E}[ ( \bm{\delta}' \bm{x}_i )^4 \vert \widetilde{\bs{x}}_i ])^{2}] = O(1) $. By $\text{Var}(\mathbb{E}[Y|X]) \leq \text{Var}(Y)$, we have
\begin{align*}
\mathbb{E}\left[ \left( \mathbb{E}\left[ \left( \bm{\delta}' \bm{x}_i \right)^4 \vert \widetilde{\bs{x}}_i \right]\right)^{2}\right] & = \text{Var}\left(
\mathbb{E}\left[ \left( \bm{\delta}' \bm{x}_i\right)^4 \vert \widetilde{\bs{x}}_i \right]
\right) + \left(\mathbb{E}\left[\mathbb{E}\left[ \left( \bm{\delta}' \bm{x}_i \right)^4 \vert \widetilde{\bs{x}}_i \right]
\right]\right)^2 \\
& \leq \text{Var}\left( \left( \bm{\delta}' \bm{x}_i \right)^4\right) + 
\left(\mathbb{E}\left[ \left( \bm{\delta}' \bm{x}_i \right)^4 \right]\right)^2
= \mathbb{E}\left[\left( \bm{\delta}' \bm{x}_i \right)^8 \right] .
\end{align*}
With $\bm{\delta} =\bm{\beta} - S_k \bm{\beta}^S$, we also have
\begin{equation}\label{eq:non-gaussian-4}
\mathbb{E}\left[\left(  \bm{\delta}' \bm{x}_i \right)^8 \right] = \mathbb{E}\left[\left\langle \bm{x}_i, \bm{\beta} - S_k \bm{\beta}^S \right\rangle^8\right]
\leq \|\bm{\Gamma}^\top(\bm{\beta} - S_k \bm{\beta}^S)\|_2^8 \sup_{\|v\|_2=1}(\mathbb{E}|v'\bs{u}_i|^8).
\end{equation}
By definition of $\bm{\beta}^S$, we know $\|\bm{\Gamma}^\top(\bm{\beta} - S_k \bm{\beta}^S)\|_2^2 = \bm{\beta}^\top \cov \bm{\beta}-\Delta_k^2 \leq \bm{\beta}^\top \cov \bm{\beta} = o(1)$.
By \textbf{(B1)}(b), we further know $\sup_{\|v\|=1}(\mathbb{E}|v'\bs{u}_i|^8)=O(1)$. Thus we show $\mathbb{E}[ ( \mathbb{E}[ ( \bm{\delta}' \bm{x}_i )^4 \vert \widetilde{\bs{x}}_i ])^{2}] = O(1) $. Therefore, together with inequality~\eqref{eq:non-gaussian-2}, the first claim in \eqref{eq:non-gaussian-1} follows.

\paragraph{Second claim of \eqref{eq:non-gaussian-1}.}
Next we show the second claim in \eqref{eq:non-gaussian-1}. By inequality \eqref{eq:non-gaussian-3}, we have
\[
\max_{i=1}^n \mathbb{E}\left[ (z_i^S)^4 \vert \widetilde{\bs{x}}_i \right] \leq 8c\sigma^4 + 8 \max_{i=1}^n \mathbb{E}\left[ (\bm{\delta}' \bm{x}_i)^4 \vert \widetilde{\bs{x}}_i \right],
\]
and it suffices to show that $\max_{i=1}^n \mathbb{E}\left[ (\bm{\delta}' \bm{x}_i)^4 \vert \widetilde{\bs{x}}_i \right]= o_{P}(\sqrt{k})$. Observe that
\begin{align*}
\mathbb{P}\left(\max_{i=1}^n \mathbb{E}\left[ (\bm{\delta}' \bm{x}_i)^4 \vert \widetilde{\bs{x}}_i \right] \geq \epsilon \right) \overset{\text{(i)}}{\leq} \frac{n}{\epsilon^2} \text{Var}\left(
(\bm{\delta}' \bm{x}_i)^4
\right) \overset{\text{(ii)}}{\leq} n \bm{\beta}^\top \cov \bm{\beta} \  \frac{\sup_{\|v\|=1}(\mathbb{E}|v'\bs{u}_i|^8)}{\epsilon^2} \overset{\text{(iii)}}{=} o\left(\frac{k}{\epsilon^2}\right).
\end{align*}
In the above argument, step~(i) follows from the union bound and Chebyshev's inequality; step~(ii) is from \eqref{eq:non-gaussian-4} and $\text{Var}\left((\bm{\delta}' \bm{x}_i)^4 \right) \leq \mathbb{E}[\left(  \bm{\delta}' \bm{x}_i \right)^8 ]$; step (iii) uses the local alternative $\bm{\beta}^\top \cov \bm{\beta} = o(k/n)$ and assumption \textbf{(B1)}(b). Therefore we can conclude that $\max_{i=1}^n \mathbb{E}[ (\bm{\delta}' \bm{x}_i)^4 \vert \widetilde{\bs{x}}_i ] = o_{P}(\sqrt{k})$, which completes the proof.


\begin{thebibliography}{CAdBFM07}
	
	\bibitem[ACCP11]{arias2011global}
	Ery Arias-Castro, Emmanuel~J Cand{\`e}s, and Yaniv Plan.
	\newblock Global testing under sparse alternatives: Anova, multiple comparisons
	and the higher criticism.
	\newblock {\em The Annals of Statistics}, 39(5):2533--2556, 2011.
	
	\bibitem[Bar02a]{baraud2002model}
	Yannick Baraud.
	\newblock Model selection for regression on a random design.
	\newblock {\em ESAIM: Probability and Statistics}, 6:127--146, 2002.
	
	\bibitem[Bar02b]{baraud2002non}
	Yannick Baraud.
	\newblock Non-asymptotic minimax rates of testing in signal detection.
	\newblock {\em Bernoulli}, 8(5):577--606, 2002.
	
	\bibitem[BC15]{barber2015controlling}
	Rina~Foygel Barber and Emmanuel~J Cand{\`e}s.
	\newblock Controlling the false discovery rate via knockoffs.
	\newblock {\em The Annals of Statistics}, 43(5):2055--2085, 2015.
	
	\bibitem[BCLZ02]{brown2002asymptotic}
	Lawrence~D Brown, T~Tony Cai, Mark~G Low, and Cun-Hui Zhang.
	\newblock Asymptotic equivalence theory for nonparametric regression with
	random design.
	\newblock {\em The Annals of statistics}, 30(3):688--707, 2002.
	
	\bibitem[Bec09]{bechar2009bernstein}
	Ikhlef Bechar.
	\newblock A bernstein-type inequality for stochastic processes of quadratic
	forms of gaussian variables.
	\newblock {\em arXiv preprint arXiv:0909.3595}, 2009.
	
	\bibitem[BM01]{bingham2001random}
	Ella Bingham and Heikki Mannila.
	\newblock Random projection in dimensionality reduction: applications to image
	and text data.
	\newblock In {\em Proceedings of the seventh ACM SIGKDD international
		conference on Knowledge discovery and data mining}, pages 245--250, 2001.
	
	\bibitem[BM11]{bayati2011lasso}
	Mohsen Bayati and Andrea Montanari.
	\newblock The lasso risk for gaussian matrices.
	\newblock {\em IEEE Transactions on Information Theory}, 58(4):1997--2017,
	2011.
	
	\bibitem[BS96]{bai1996effect}
	Zhidong Bai and Hewa Saranadasa.
	\newblock Effect of high dimension: by an example of a two sample problem.
	\newblock {\em Statistica Sinica}, pages 311--329, 1996.
	
	\bibitem[BY01]{benjamini2001control}
	Yoav Benjamini and Daniel Yekutieli.
	\newblock The control of the false discovery rate in multiple testing under
	dependency.
	\newblock {\em Annals of statistics}, pages 1165--1188, 2001.
	
	\bibitem[CAdBFM07]{cuesta2007random}
	Juan~Antonio Cuesta-Albertos, Eustasio del Barrio, Ricardo Fraiman, and Carlos
	Matr{\'a}n.
	\newblock The random projection method in goodness of fit for functional data.
	\newblock {\em Computational Statistics \& Data Analysis}, 51(10):4814--4831,
	2007.
	
	\bibitem[CCC{\etalchar{+}}18]{carpentier2018minimax}
	Alexandra Carpentier, Olivier Collier, La{\"e}titia Comminges, Alexandre~B
	Tsybakov, and Yuhao Wang.
	\newblock Minimax rate of testing in sparse linear regression.
	\newblock {\em arXiv preprint arXiv:1804.06494}, 2018.
	
	\bibitem[CCT17]{collier2017minimax}
	Olivier Collier, La{\"e}titia Comminges, and Alexandre~B Tsybakov.
	\newblock Minimax estimation of linear and quadratic functionals on sparsity
	classes.
	\newblock {\em The Annals of Statistics}, 45(3):923--958, 2017.
	
	\bibitem[CDV09]{clemenccon2009auc}
	St{\'e}phan Cl{\'e}men{\c{c}}on, Marine Depecker, and Nicolas Vayatis.
	\newblock Auc optimization and the two-sample problem, 2009.
	
	\bibitem[CFJL18]{candes2018panning}
	Emmanuel Candes, Yingying Fan, Lucas Janson, and Jinchi Lv.
	\newblock Panning for gold:‘model-x’knockoffs for high dimensional
	controlled variable selection.
	\newblock {\em Journal of the Royal Statistical Society: Series B (Statistical
		Methodology)}, 80(3):551--577, 2018.
	
	\bibitem[CGZ18]{cui2018test}
	Hengjian Cui, Wenwen Guo, and Wei Zhong.
	\newblock Test for high-dimensional regression coefficients using refitted
	cross-validation variance estimation.
	\newblock {\em The Annals of Statistics}, 46(3):958--988, 2018.
	
	\bibitem[CMW20]{celentano2020lasso}
	Michael Celentano, Andrea Montanari, and Yuting Wei.
	\newblock The lasso with general gaussian designs with applications to
	hypothesis testing.
	\newblock {\em arXiv preprint arXiv:2007.13716}, 2020.
	
	\bibitem[CP10]{candes2010matrix}
	Emmanuel~J Candes and Yaniv Plan.
	\newblock Matrix completion with noise.
	\newblock {\em Proceedings of the IEEE}, 98(6):925--936, 2010.
	
	\bibitem[DJ04]{donoho2004higher}
	David Donoho and Jiashun Jin.
	\newblock Higher criticism for detecting sparse heterogeneous mixtures.
	\newblock {\em The Annals of Statistics}, 32(3):962--994, 2004.
	
	\bibitem[DM16]{donoho2016high}
	David Donoho and Andrea Montanari.
	\newblock High dimensional robust m-estimation: Asymptotic variance via
	approximate message passing.
	\newblock {\em Probability Theory and Related Fields}, 166(3-4):935--969, 2016.
	
	\bibitem[Fie82]{fienup1982phase}
	James~R Fienup.
	\newblock Phase retrieval algorithms: a comparison.
	\newblock {\em Applied optics}, 21(15):2758--2769, 1982.
	
	\bibitem[GBS09]{gavrilov2009adaptive}
	Yulia Gavrilov, Yoav Benjamini, and Sanat~K Sarkar.
	\newblock An adaptive step-down procedure with proven fdr control under
	independence.
	\newblock {\em The Annals of Statistics}, 37(2):619--629, 2009.
	
	\bibitem[GHS14]{guo2014further}
	Wenge Guo, Li~He, and Sanat~K Sarkar.
	\newblock Further results on controlling the false discovery proportion.
	\newblock {\em The Annals of Statistics}, 42(3):1070--1101, 2014.
	
	\bibitem[GNOT92]{goldberg1992using}
	David Goldberg, David Nichols, Brian~M Oki, and Douglas Terry.
	\newblock Using collaborative filtering to weave an information tapestry.
	\newblock {\em Communications of the ACM}, 35(12):61--70, 1992.
	
	\bibitem[HMT11]{halko2011finding}
	Nathan Halko, Per-Gunnar Martinsson, and Joel~A Tropp.
	\newblock Finding structure with randomness: Probabilistic algorithms for
	constructing approximate matrix decompositions.
	\newblock {\em SIAM review}, 53(2):217--288, 2011.
	
	\bibitem[IS12]{ingster2012nonparametric}
	Yuri Ingster and Irina~A Suslina.
	\newblock {\em Nonparametric goodness-of-fit testing under Gaussian models},
	volume 169.
	\newblock Springer Science \& Business Media, 2012.
	
	\bibitem[JM14a]{javanmard2014confidence}
	Adel Javanmard and Andrea Montanari.
	\newblock Confidence intervals and hypothesis testing for high-dimensional
	regression.
	\newblock {\em The Journal of Machine Learning Research}, 15(1):2869--2909,
	2014.
	
	\bibitem[JM14b]{javanmard2014hypothesis}
	Adel Javanmard and Andrea Montanari.
	\newblock Hypothesis testing in high-dimensional regression under the gaussian
	random design model: Asymptotic theory.
	\newblock {\em IEEE Transactions on Information Theory}, 60(10):6522--6554,
	2014.
	
	\bibitem[JND10]{jacob2010gains}
	Laurent Jacob, Pierre Neuvial, and Sandrine Dudoit.
	\newblock Gains in power from structured two-sample tests of means on graphs.
	\newblock {\em arXiv preprint arXiv:1009.5173}, 2010.
	
	\bibitem[LJW11]{NIPS2011_4260}
	Miles Lopes, Laurent Jacob, and Martin~J Wainwright.
	\newblock A more powerful two-sample test in high dimensions using random
	projection.
	\newblock In {\em Advances in Neural Information Processing Systems 24}, pages
	1206--1214. Curran Associates, Inc., 2011.
	
	\bibitem[LKR05]{liu2005random}
	Kun Liu, Hillol Kargupta, and Jessica Ryan.
	\newblock Random projection-based multiplicative data perturbation for privacy
	preserving distributed data mining.
	\newblock {\em IEEE Transactions on knowledge and Data Engineering},
	18(1):92--106, 2005.
	
	\bibitem[LM00]{laurent2000adaptive}
	Beatrice Laurent and Pascal Massart.
	\newblock Adaptive estimation of a quadratic functional by model selection.
	\newblock {\em Annals of Statistics}, pages 1302--1338, 2000.
	
	\bibitem[LR06]{lehmann2006testing}
	Erich~L Lehmann and Joseph~P Romano.
	\newblock {\em Testing statistical hypotheses}.
	\newblock Springer Science \& Business Media, 2006.
	
	\bibitem[MB09]{madsen2009groupwise}
	Bo~Eskerod Madsen and Sharon~R Browning.
	\newblock A groupwise association test for rare mutations using a weighted sum
	statistic.
	\newblock {\em PLoS genetics}, 5(2), 2009.
	
	\bibitem[MM18]{miolane2018distribution}
	L{\'e}o Miolane and Andrea Montanari.
	\newblock The distribution of the lasso: Uniform control over sparse balls and
	adaptive parameter tuning.
	\newblock {\em arXiv preprint arXiv:1811.01212}, 2018.
	
	\bibitem[PW17]{pilanci2017newton}
	Mert Pilanci and Martin~J Wainwright.
	\newblock Newton sketch: A near linear-time optimization algorithm with
	linear-quadratic convergence.
	\newblock {\em SIAM Journal on Optimization}, 27(1):205--245, 2017.
	
	\bibitem[RTSH07]{rao2007linear}
	C.~Radhakrishna Rao, Helge Toutenburg, Shalabh, and Christian Heumann.
	\newblock {\em Linear Models and Generalizations: Least Squares and
		Alternatives}.
	\newblock Springer Publishing Company, Incorporated, 3rd edition, 2007.
	
	\bibitem[Sar06]{sarlos2006improved}
	Tamas Sarlos.
	\newblock Improved approximation algorithms for large matrices via random
	projections.
	\newblock In {\em 2006 47th Annual IEEE Symposium on Foundations of Computer
		Science (FOCS'06)}, pages 143--152. IEEE, 2006.
	
	\bibitem[SC16]{su2016slope}
	Weijie Su and Emmanuel Candes.
	\newblock Slope is adaptive to unknown sparsity and asymptotically minimax.
	\newblock {\em The Annals of Statistics}, 44(3):1038--1068, 2016.
	
	\bibitem[SCC19]{sur2019likelihood}
	Pragya Sur, Yuxin Chen, and Emmanuel~J Cand{\`e}s.
	\newblock The likelihood ratio test in high-dimensional logistic regression is
	asymptotically a rescaled chi-square.
	\newblock {\em Probability Theory and Related Fields}, 175(1-2):487--558, 2019.
	
	\bibitem[SSGD03]{sundar2003skeleton}
	Hari Sundar, Deborah Silver, Nikhil Gagvani, and Sven Dickinson.
	\newblock Skeleton based shape matching and retrieval.
	\newblock In {\em 2003 Shape Modeling International.}, pages 130--139. IEEE,
	2003.
	
	\bibitem[Ste16]{steinberger2016relative}
	Lukas Steinberger.
	\newblock The relative effects of dimensionality and multiplicity of hypotheses
	on the $ f $-test in linear regression.
	\newblock {\em Electronic Journal of Statistics}, 10(2):2584--2640, 2016.
	
	\bibitem[VdGBRD14]{van2014asymptotically}
	Sara Van~de Geer, Peter B{\"u}hlmann, Ya’acov Ritov, and Ruben Dezeure.
	\newblock On asymptotically optimal confidence regions and tests for
	high-dimensional models.
	\newblock {\em The Annals of Statistics}, 42(3):1166--1202, 2014.
	
	\bibitem[VdV00]{van2000asymptotic}
	Aad~W Van~der Vaart.
	\newblock {\em Asymptotic statistics}.
	\newblock Cambridge university press, 2000.
	
	\bibitem[Ver10]{vershynin2010introduction}
	Roman Vershynin.
	\newblock Introduction to the non-asymptotic analysis of random matrices.
	\newblock {\em arXiv preprint arXiv:1011.3027}, 2010.
	
	\bibitem[WLC{\etalchar{+}}11]{wu2011rare}
	Michael~C Wu, Seunggeun Lee, Tianxi Cai, Yun Li, Michael Boehnke, and Xihong
	Lin.
	\newblock Rare-variant association testing for sequencing data with the
	sequence kernel association test.
	\newblock {\em The American Journal of Human Genetics}, 89(1):82--93, 2011.
	
	\bibitem[WW20]{wei2020local}
	Yuting Wei and Martin~J Wainwright.
	\newblock The local geometry of testing in ellipses: Tight control via
	localized kolmogorov widths.
	\newblock {\em IEEE Transactions on Information Theory}, 2020.
	
	\bibitem[WWG{\etalchar{+}}19]{wei2019geometry}
	Yuting Wei, Martin~J Wainwright, Adityanand Guntuboyina, et~al.
	\newblock The geometry of hypothesis testing over convex cones: Generalized
	likelihood ratio tests and minimax radii.
	\newblock {\em The Annals of Statistics}, 47(2):994--1024, 2019.
	
	\bibitem[WYW17]{wei2017early}
	Yuting Wei, Fanny Yang, and Martin~J Wainwright.
	\newblock Early stopping for kernel boosting algorithms: A general analysis
	with localized complexities.
	\newblock In {\em Advances in Neural Information Processing Systems}, pages
	6065--6075, 2017.
	
	\bibitem[YPW17]{yang2017randomized}
	Yun Yang, Mert Pilanci, and Martin~J Wainwright.
	\newblock Randomized sketches for kernels: Fast and optimal nonparametric
	regression.
	\newblock {\em The Annals of Statistics}, 45(3):991--1023, 2017.
	
	\bibitem[ZC11]{zhong2011tests}
	Ping-Shou Zhong and Song~Xi Chen.
	\newblock Tests for high-dimensional regression coefficients with factorial
	designs.
	\newblock {\em Journal of the American Statistical Association},
	106(493):260--274, 2011.
	
	\bibitem[ZZ14]{zhang2014confidence}
	Cun-Hui Zhang and Stephanie~S Zhang.
	\newblock Confidence intervals for low dimensional parameters in high
	dimensional linear models.
	\newblock {\em Journal of the Royal Statistical Society: Series B (Statistical
		Methodology)}, 76(1):217--242, 2014.
	
\end{thebibliography}
\end{document}